\definecolor{darkgreen}{rgb}{0,0.5,0}
\definecolor{darkblue}{rgb}{0,0,0.6}
\pgfplotsset{compat=1.16} %
\definecolor{block-gray}{gray}{0.85}
\newtcolorbox{shadequote}{colback=block-gray,grow to right by=-2mm,grow to left by=-2mm,
boxrule=0pt,boxsep=0pt,breakable}
\newcommand{\nosemic}{\renewcommand{\@endalgocfline}{\relax}}%
\let\cref@old@stepcounter\stepcounter
\def\stepcounter#1{%
  \cref@old@stepcounter{#1}%
  \cref@constructprefix{#1}{\cref@result}%
  \@ifundefined{cref@#1@alias}%
    {\def\@tempa{#1}}%
    {\def\@tempa{\csname cref@#1@alias\endcsname}}%
  \protected@edef\cref@currentlabel{%
    [\@tempa][\arabic{#1}][\cref@result]%
    \csname p@#1\endcsname\csname the#1\endcsname}}
\newtheorem{theorem}{Theorem}[section]
\newtheorem{lemma}[theorem]{Lemma}
\newtheorem{remark}{Remark}
\newtheorem{definition}{Definition}
\newtheorem{proposition}{Proposition}
\newtheorem{claim}{Claim}
\newtheorem{corollary}{Corollary}
\newtheorem{conclusion}{Conclusion}
\renewenvironment{proof}[1][\proofname]{\par
    \pushQED{\qed}%
    \normalfont \topsep6\p@\@plus6\p@\relax
    \trivlist
    \item\relax
    {\bfseries\boldmath
        #1\@addpunct{.}}\hspace\labelsep\ignorespaces
}{%
    \popQED\endtrivlist\@endpefalse
}
\newcommand{\clique}{\textsf{Congested Clique}\xspace}
\newcommand{\congest}{\textsf{Congest}\xspace}
\newcommand{\CC}{\clique}
\newcommand{\QCC}{\textsf{Quantum Congested Clique}\xspace}
\newcommand{\CRR}{\mathcal{R}}
\newcommand{\AC}{\mathcal{A}}
\newcommand{\UU}{\mathcal{U}}
\newcommand{\QQ}{\mathcal{Q}}
\newcommand{\nr}{n^{\rho}}
\newcommand{\bz}{\beta_0}
\newcommand{\eqdef}{\overset{\mathrm{def}}{=}}
\newcommand{\MM}[1]{\mathsf{MM}\brak{#1}}
\newcommand{\RM}[1]{\mathsf{RM}\brak{#1}}
\newcommand{\brak}[1]{\left(#1\right)}
\newcommand{\Exp}[1]{\mathbb{E}\left[ #1 \right]}
\renewcommand{\Pr}[1]{{\mathrm{Pr}}\left[ #1 \right]}
\newcommand{\set}[1]{\left\{ #1 \right\}}
\newcommand{\BO}[1]{\mathcal{O}\brak{#1}}
\newcommand{\BOs}[1]{\mathcal{O}(#1)}
\newcommand{\TO}[1]{\tilde{\mathcal{O}}\brak{#1}}
\newcommand{\TOs}[1]{\tilde{\mathcal{O}}(#1)}
\newcommand{\Omc}[1][1]{\Omega\brak{#1}}
\newcommand{\zrn}[1]{\set{0,1,\ldots,#1}}
\renewcommand{\r}[1]{\mathcal{R}\brak{#1}}
\newcommand{\Hp}{\mathcal{H}_p}
\newcommand{\FH}{\mathcal{D}}
\newcommand{\As}{\mathtt{A}}
\newcommand{\Bs}{\mathtt{B}}
\newcommand{\Vx}{V_{\mathcal{C}_h}}
\newcommand{\omegaval}{2.371552}
\newcommand{\rhoval}{0.15667} 
\newcommand{\alphaval}{0.321334}
\newcommand{\betaval}{0.660667} %
\newcommand{\aConst}{0.4617} %
\newcommand{\bConst}{0.1567} %
\newcommand{\dt}{1.82408}
\newcommand{\tConst}{\aConst}
\newcommand{\eps}{\varepsilon}
\DeclareMathSymbol{\mhyphen}{\mathord}{AMSa}{"39}
\newcommand{\algA}{\textsf{Find-Cycle}\xspace}
\newcommand{\malgA}[1]{\ensuremath{\mathsf{Find\mhyphen Cycle}\brak{#1}}\xspace}
\newcommand{\algC}{\textsf{Find-Vertex-In-Cycle}\xspace}
\newcommand{\SP}{\mathsf{Single\mhyphen Product}}
\newcommand{\whp}{%
  w.h.p.\@ifnextchar.{\@gobble}{\xspace}%
}
\newcommand{\true}{\textnormal{``\textsf{True}''}\xspace}
\newcommand{\false}{\textnormal{``\textsf{False}''}\xspace}
\DeclarePairedDelimiter\abs{\lvert}{\rvert}%
\DeclarePairedDelimiter{\ceil}{\lceil}{\rceil}
\let\oldabs\abs
\def\abs{\@ifstar{\oldabs}{\oldabs*}}
\crefname{ineq}{inequality}{inequalities} %
\crefname{proof}{Proof}{Proofs} %
\crefname{line}{Line}{Lines}
\crefname{claim}{Claim}{Claims}
\crefname{figure}{Figure}{Figures}
\NewDocumentEnvironment{mytheorem}{m}%
  {%
   \begin{theorem}
  }%
  {\end{theorem}}
\author{Keren Censor-Hillel \thanks{Department of Computer Science, Technion. \texttt{ckeren@cs.technion.ac.il}. The research is supported in part by the Israel Science Foundation (grant 529/23).} 
\and
Tomer Even \thanks{Department of Computer Science, Technion. \texttt{tomer.even@campus.technion.ac.il}.} 
\and 
Virginia Vassilevska Williams \thanks{Massachusetts Institute of Technology, Cambridge, MA, USA. \texttt{virgi@mit.edu}. Supported by NSF Grant CCF-2330048, BSF Grant 2020356, and a Simons Investigator Award.}}
\begin{document}

\title{Faster Cycle Detection in the Congested Clique}
\date{}

\maketitle
\begin{abstract}
    We provide a fast distributed algorithm for detecting $h$-cycles in the \textsf{Congested Clique} model, whose running time decreases as the number of $h$-cycles in the graph increases. In undirected graphs, constant-round algorithms are known for cycles of even length. Our algorithm greatly improves upon the state of the art for odd values of $h$. Moreover, our running time applies also to directed graphs, in which case the improvement is for all values of $h$. Further, our techniques allow us to obtain a triangle detection algorithm in the quantum variant of this model, which is faster than prior work.
    
A key technical contribution we develop to obtain our fast cycle detection algorithm is a new algorithm for computing the product of many pairs of small matrices in parallel, which may be of independent interest.
\end{abstract}

\setcounter{tocdepth}{3}
\tableofcontents

\section{Introduction}
Finding small subgraph patterns is a fundamental computational task, with a multitude of applications for uncovering connections between elements in a data set. Research has been thriving, addressing the complexity of different variants of subgraph isomorphism for fixed size subgraph patterns $H$ in a larger host graph $G$: detecting whether a copy of $H$ exists, listing all of its copies, counting the number of occurrences, and more.

\sloppy{In this paper, we provide a fast distributed algorithm for detecting $h$-cycles
    in the \clique model \cite{lotker2006minimum}, in which
    $n$ machines communicate by sending $\BOs{\log{n}}$-bit messages to each other, in synchronous rounds. }

The pioneering work of \cite{dolev2012tri} showed that all copies of any
fixed $h$-vertex graph $H$ in an $n$ node graph can be listed in this model within $\BOs{n^{1-2/h}}$ rounds. This result of course applies also to the detection variant. For the case when $H$ is a cycle, \cite{Censor-HillelKK19} provided an $h$-cycle detection algorithm running in $2^{\BOs{h}}n^{\rho}$ rounds, for both undirected and directed graphs (henceforth digraphs). Here, $\rho$ is the exponent of distributed fast matrix multiplication (FMM) in the \clique model, i.e., the value such that $\BOs{n^{\rho}}$ rounds are sufficient for multiplying two $n\times n$ matrices. The value of $\rho$ is currently known to be at most $1-2/\omega$ where $\omega$ is the centralized fast matrix multiplication exponent, and since $\omega\leq \omegaval$ \cite{alman2024asymmetry}, we get a bound for $\rho$ of $\rhoval$. 
In the case of 4-cycles in undirected graphs, \cite{Censor-HillelKK19} obtained a constant-round detection algorithm, and this result was later generalized
by \cite{Censor-HillelFG20} to hold for detection of any even-length cycle in undirected graphs.

This leaves the complexity of odd-cycle detection as an open question, as well as the detection of cycles of any length in digraphs. For triangles, \cite{dolev2012tri} showed a detection algorithm that completes within $\TOs{n^{1/3}/(t^{2/3}+1)}$ rounds, \whp\footnote{High probability in this paper refers to a probability that is at least $1-1/n^c$ for some constant $c\geq 1$.}, where $t$ is the number of triangles. Since \cite{DruckerKO13} hints that lower bounds for $H$ detection in this model are not within reach, it remains open whether the above is optimal.
\begin{center}
    {\emph{\textbf{Question}: For a given graph $H$, is there a faster $H$-detection algorithm when the number of instances of $H$ in the input graph is large?}}
\end{center}

~\\We answer this question in the affirmative, providing a fast $h$-cycle detection algorithm whose complexity decreases as the number $t$ of instances of $H$ grows. Our algorithm has the same running time for detecting $h$-cycles in graphs as well as in digraphs.
For triangles, the complexity of our algorithm greatly improves upon that of \cite{dolev2012tri}. For larger odd cycles in graphs, as well as cycles of any length in digraphs, to the best of our knowledge, this is the first improvement over \cite{Censor-HillelKK19}.

An important insight of our main technical contribution is to identify
a new refined parameter as a key player for detection: the number of vertices $x$ that participate in an $h$-cycle. Below, we elaborate on our result and technical approach.%

\subsection{Our Contributions and Technical Overview.}
\begin{figure}[t]
    \center
    \begin{tikzpicture}
        \begin{axis}[
                axis lines=middle,
                xlabel={$\log_n(t)$},
                ylabel={$\log_n(\#\mathsf{rounds})$},
                xlabel style={at={(ticklabel* cs:1.2)}, anchor=east},
                ylabel style={at={(ticklabel* cs:1.15)}, anchor=north},
                xmin=0, xmax=0.51,
                ymin=-0.0, ymax=0.2, %
                domain=0:10,
                samples=100,
                legend pos=outer north east,
                grid=both, %
                grid style={line width=.1pt, draw=gray!10}, %
                major grid style={line width=.2pt,draw=gray!50}, %
                minor tick num=5, %
                yticklabel style={/pgf/number format/fixed}
            ]

            \def\offsetlen{.8cm}
            \def\offsetsep{.15cm}
            \def\overshoot{.6cm}
            \def\ALPHA{0.3139}
            \newcommand{\middeltaval}{\dt}
            \def\colA{teal}
            \def\colC{violet}

            \addlegendimage{black, thick}
            \addlegendentry{$\TO{(n/t^2+1)^{1/3}}$ \cite{dolev2012tri}};
            \addlegendimage{red, thick}
            \addlegendentry{$\BO{\nr}$ \cite{Censor-HillelKK19}};

            \addlegendimage{\colA, thick, mark=square}
            \addlegendentry{Approach One + \Cref{thm2:mm runtime}};
            \addlegendimage{\colC, thick, mark=triangle}
            \addlegendentry{Approach Two};
            \addlegendimage{blue, thick, mark=o}
            \addlegendentry{\Cref{thm:main in t}};

            \def\R{\rhoval} %
            \def\B{\betaval} %

            \addplot[name path=f2, \colA, thick, mark=square] {(\R)-x*(2+2*(\R-1))/(3-2)} ;
            \addplot[name path=g0, \colC, thick, mark=triangle] {(\R)-x*\R/((1-\B)*(3-0))}  ;

            \addplot[name path=DLP, black, thick] {1/3-2*x/3};
            \addplot[name path=MM, red, thick] {\R};
            \addplot[name path=meet, blue, thick, mark=o] {(\R)-x*\R/((1-\B)*(3-\middeltaval))} ;
        \end{axis}
    \end{tikzpicture}
    \caption{ An illustrative comparison between our results and prior work, for the case of triangles. For each algorithm, we plot the base-$n$ logarithm of the number of rounds as a function of the base-$n$ logarithm of the number of triangles.}
    \label[figure]{fig1:comp}
\end{figure}
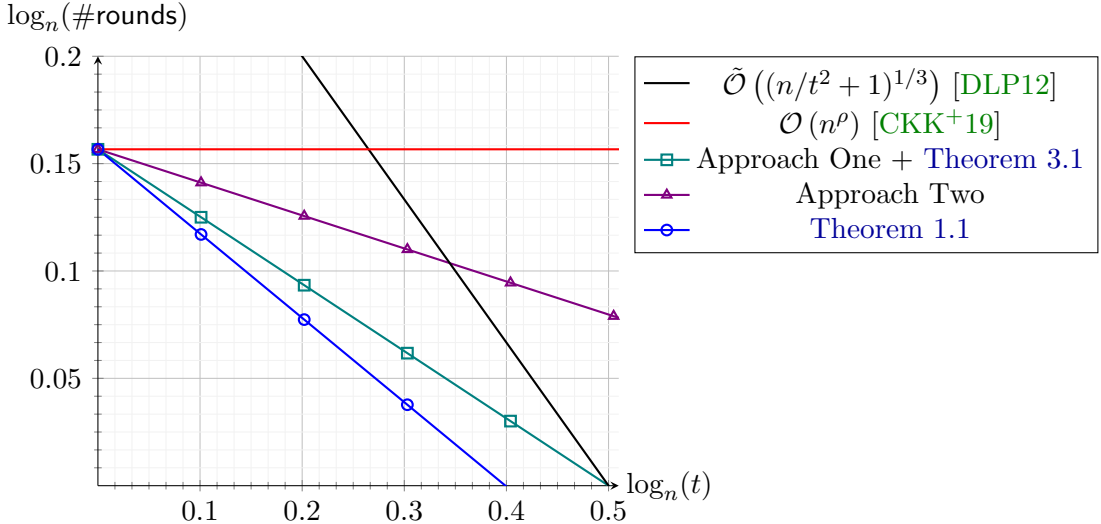

To frame our technical contributions, we first briefly overview the two previous approaches for the case of \emph{triangles}. In \cite{dolev2012tri}, an $\TO{n^{1/3}/(t^{2/3}+1)}$-round algorithm is presented, where $t$ is the number of triangles. The algorithm samples $n$ induced subgraphs, and for each sample it checks for a triangle by letting a dedicated vertex collect the edges of the sample.
In \cite{Censor-HillelKK19}, a $2^{\BO{h}}n^{\rho}$-round algorithm is presented which employs fast matrix multiplication over the entire graph.

~\\\textbf{Warm-up.} As a warm-up, consider the following combination of these approaches to get the best of both worlds,  leading to an algorithm that completes in $\TO{n^{\rho}/(t^{2\rho}+1)}$ rounds, which is already an improvement over the prior state of the art (recall that $\rho < 1/3$).
To obtain this, we sample only $t^2$ induced subgraphs to which each vertex is added with probability $1/t$.
Using the second moment method, we can show that at least one sampled subgraph contains a triangle with probability $\Omc$.
To check if each subgraph contains a triangle we use matrix multiplication, and so no vertex has to collect the edges of an entire sample.
To compute the product of $t^2$ square matrices of size $n/t$, we develop a new algorithm, which computes the product of $s$ pairs of square matrices of size $k$ in
$\BO{n^{\rho-2}\cdot k^2 \cdot s^{1-\rho}}$ rounds. This algorithm may be of independent interest.

Another natural approach to consider is one that samples a subset of vertices and checks whether any of these vertices participates in a triangle, rather than attempting to sample a complete triangle.
Here, we can sample each vertex with probability $1/t^{1/3}$ and check whether it is in a triangle by invoking rectangular matrix multiplication.
While this too improves upon the state of the art for some graphs, it is always slower than our first approach.
Note that trying to reduce the running time by sampling with a probability that is smaller than $1/t^{1/3}$ would reach a dead-end since it is not likely to hit any vertex in a triangle in case all $t$ triangles are induced by a clique of $t^{1/3}$ vertices. See \Cref{fig1:comp} for a comparison of the two approaches, as well as the previous algorithms, and our new algorithm.

A caveat is that the first approach does not extend for $h$-cycles, as the number of samples it needs to perform increases with $h$. For example, for $C_4$ detection, if we sample uniformly random induced subgraph with $n/t$ vertices, it contains a copy of $C_4$ with probability at least roughly $1/t^3$ which means we have to sample at least $t^3$ subgraphs to ensure that we find a copy of $C_4$ with a constant probability. This is slower than the previous best known algorithm of \cite{Censor-HillelKK19} which takes $2^{\BO{h}}n^{\rho}$ rounds. In other words, the first approach is slower as $h$ is larger.

~\\\textbf{Our contribution.} Our key insight is that we can further boost these two approaches such that they complement each other, in the following sense. For a fixed value of $t$, the first approach is better when the number of vertices that participate in a triangle, which we denote by $x$, is small, while the second approach is better when $x$ is large.
This refinement of considering the parameter $x$ along with $t$ allows us to bring these two approaches a big leap forward by obtaining a faster algorithm for triangle detection, as well as an algorithm for $h$-cycle detection for longer cycles, in both graphs and digraphs.

Our first algorithm, which we refer to as \algA, follows the first approach.
It samples $s=x^3/t$ subsets of vertices $(U_1,\ldots, U_s)$, by adding each vertex to $U_i$ independently with probability $1/x$.
The algorithm then checks for every $i\in[s]$ if $G[U_i]$ contains a triangle.
This involves computing the product of $s$ pairs of square matrices of size $n/x$ each, which we do in 
$
\TOs{n^{\rho-2}\cdot (n/x)^2 \cdot s^{1-\rho}}
=\TOs{n^{\rho}\cdot (1/x)^2 \cdot (x^3/t)^{1-\rho}}$
rounds.
Using the second moment method (which is very similar to Chebyshev's inequality) we show that at least one of the $s$ induced subgraphs contains a triangle with a constant probability.

Our second algorithm, which we refer to as \algC, follows the second approach. It samples a subset of vertices $S$, by adding each vertex to $S$ independently with probability $1/x$. The algorithm then checks if one of the vertices from $S$ participates in a triangle by computing the product of a rectangular matrix of size $n/x\times n$ and a square matrix of size $n$. Interestingly, the algorithm \algC also achieves the same round complexity, as a function of $x$, for $h$-cycle detection, for $h=\BO{1}$.

Our final algorithm alternates between the two algorithms until one of them terminates.
Among all $n$ vertices with $t$ triangles, the final algorithm is the slowest when the two algorithms have the same round complexity, which happens when $x^{3-\dt}=\Theta(t)$.

The following states the running time of our fast $h$-cycle detection algorithm, and is proven in \Cref{sec:h cycle}.
\begin{restatable}[$h$-Cycle Detection]{theorem}{ThmMainT}\label{thm:main in t}
    \sloppy{Let $G$ be a (directed) graph with $t$ copies of $h$-cycles.
        There is a randomized \CC algorithm for $h$-cycle detection, which takes
        $\TO{h^{\BO{h}} \cdot n^{\bConst}/ (t^{\frac{\tConst}{h-\dt}} + 1)}$
        rounds \whp.}
\end{restatable}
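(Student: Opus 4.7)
The plan is to realize the two complementary approaches sketched in the warm-up for arbitrary $h$, and then balance the resulting two round complexities at the worst-case value of the refined parameter $x$, the number of vertices lying on an $h$-cycle.

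First, I would generalize \algA. Given a guess $\tilde x$, sample $s = \Theta(\tilde x^h / t)$ vertex sets $U_1,\ldots,U_s$, each by including every vertex independently with probability $1/\tilde x$, so that $|U_i| = \TO{n/\tilde x}$ \whp. Detect an $h$-cycle inside each $G[U_i]$ via color-coding into $h$ classes followed by a Boolean $h$-th power of the $|U_i|\times |U_i|$ adjacency matrix, which contributes the $h^{\BO{h}}$ factor. Execute the $s$ matrix products of dimension $\TO{n/\tilde x}$ in parallel using \Cref{thm2:mm runtime}, at cost $\TO{n^{\rho-2}\cdot (n/\tilde x)^2 \cdot s^{1-\rho}}$ rounds per Boolean-power step. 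For correctness, I would apply the second moment method to $Y=\sum_i Y_i$, where $Y_i$ is the number of cycle copies surviving in $U_i$: $\Exp{Y} = \Theta(1)$ by the choice of $s$, and $\Var{Y} = \BO{\Exp{Y}}$ because all cycles are supported on only $x$ participating vertices, which controls the number of cycle pairs sharing vertices.

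Second, I would generalize \algC. Sample a single set $S$ by adding every vertex with probability $1/\tilde x$. For each $v\in S$, deciding whether $v$ lies on an $h$-cycle reduces, via color-coding into $h$ classes with $v$ placed in the first class, to a distributed rectangular matrix product of dimensions $|S|\times n$ by $n\times n$ (together with intermediate square products of size $n$). Distributed fast rectangular matrix multiplication then yields $\TO{n^{\rho}/\tilde x^{1-\betaval}}$ rounds per step, again with an $h^{\BO{h}}$ color-coding overhead. Correctness holds with probability $\Omc$ because $S$ is expected to contain at least one of the $x$ participating vertices.

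Since $x$ is unknown a priori, I would try $\tilde x\in\set{2^i:0\le i\le\lceil\log n\rceil}$, alternating \algA and \algC at every guess and halting as soon as a cycle is reported, paying only an $\BO{\log n}$ factor. The correct guess $\tilde x\in[x,2x]$ succeeds \whp after $\BO{\log n}$ repetitions, and the total round complexity is the pointwise minimum of the two runtimes. Its worst case is the crossover $\tilde x^{h-\dt}=\Theta(t)$, i.e.\ $\tilde x = \Theta(t^{1/(h-\dt)})$, which substituted back gives exactly $\TO{h^{\BO{h}}\cdot n^{\bConst}/(t^{\tConst/(h-\dt)}+1)}$ rounds; the $+1$ term absorbs the regime $t=0$, where we revert to the $\BO{n^{\rho}}$ matrix-multiplication algorithm of \cite{Censor-HillelKK19}.

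The hardest step, I expect, is the second moment estimate inside \algA: bounding $\Var{Y}$ tightly requires enumerating cycle pairs $(C,C')$ by the size $k\ge 1$ of their vertex intersection and showing that the covariance contribution $(1/\tilde x)^{2h-k}-(1/\tilde x)^{2h}$, summed against the number of such pairs, stays $\BO{\Exp{Y}}$. This is exactly where defining $x$ as the number of vertices on any $h$-cycle is critical, since without this bound a dense family of overlapping cycles could blow up the variance and force many more samples. A secondary concern is that the batched matrix-product guarantee of \Cref{thm2:mm runtime} composes smoothly across the $h$ iterations of the color-coded Boolean power without inflating the $h^{\BO{h}}$ overhead.
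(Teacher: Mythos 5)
Your proposal follows the same two-algorithm skeleton as the paper (sample many small induced subgraphs for \algA, sample a vertex set and use rectangular matrix multiplication for \algC), uses the same second-moment argument keyed to the parameter $x$, and balances the two at the same crossover $x^{h-\dt}\approx t$. The overall strategy is correct, but there are two concrete issues worth flagging.

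First, the round complexity you write for \algC is off. The dominant cost of one call is $\RM{n/\tilde x}$, and by the paper's \Cref{claim:RM complexity} this is $\BO{n^{\bConst}\cdot \tilde x^{-\aConst}}$ (equivalently $\BO{n^{\rho}\cdot \tilde x^{-\rho/(1-\bz)}}$ if $\rho(\cdot)$ is convex). You wrote $\TO{n^{\rho}/\tilde x^{\,1-\betaval}}$, which replaces the exponent $\rho/(1-\bz)\approx \aConst\approx 0.4617$ by $1-\bz\approx 0.339$. Since $1-\bz < \rho/(1-\bz)$ this is a genuine weakening: repeating your balancing computation with the wrong \algC exponent shifts $\dt$ to about $1.97$ and yields a smaller savings in $t$ than the claimed $t^{\tConst/(h-\dt)}$, so the stated theorem would not follow. (Relatedly, the parenthetical remark about ``intermediate square products of size $n$'' should be avoided: the whole point of \algC is to multiply left-to-right so that every step is an $\abs{S}\times n$ by $n\times n$ rectangular product; any genuine $n\times n$ by $n\times n$ step would cost $\BO{n^{\rho}}$ and wash out the gain.)

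Second, in \algA you take $s=\Theta(\tilde x^h/t)$ samples, but $t$ is unknown to the algorithm, so this quantity cannot be computed from the single guess $\tilde x$. The paper handles this by parameterizing the number of samples independently (the parameter $a$, with $s=p^{-a}$) and doubling over both $p$ and $a$, costing only a $\TO{1}$ factor. Your doubling is only over $\tilde x$, so you need to add a second doubling dimension over the number of samples $s$ (or, equivalently, over $a\in[0,2]$) for the reduction to go through. This is an easy fix, but as written the guessing strategy does not match the claimed $\BO{\log n}$ overhead. Everything else — in particular the variance bound for $Y$ via counting overlapping cycle pairs using $x$, and the use of \Cref{thm2:mm runtime} for the batched small products — tracks the paper's proof closely.
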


Here, the constants $\bConst,\aConst$ arise from the complexity of rectangular multiplication.
We are able to show that the product of a rectangular matrix of size $k\times n$ and a square matrix can be computed in $O(n^{\bConst}/k^{\aConst})$ rounds, using the formula of \cite{le2016further} and an adaptation of the code of \cite{Complexity}. %
To get a flavor of the above complexity, note that a crucial implication of \Cref{thm:main in t} is that we detect a triangle in $\TO{1}$ rounds for graphs with at least $t=\Omega(n^{0.3992})$ triangles,
improving upon the previously known threshold of $t=\Omega(n^{1/2})$ from \cite{dolev2012tri}.

~\\\textbf{Many Matrix Multiplications in Parallel.}
To implement our \algA algorithm, we need to compute the product of many small random square matrices, which are submatrices of the adjacency matrix of the input graph. We state this informally in the following theorem.
\begin{theorem}[Informal]\label{thm:informal}
    Let $k,s$ be two integers such that $k\in[\sqrt{n},n]$, and $s\leq (n/k)^2$.
    Then, in the \CC model, the $n$ vertices can compute the product of $s$ pairs of square matrices of size $k$ in $\BO{n^{\rho-2}\cdot k^2 \cdot s^{1-\rho}}$ rounds, given that the input is distributed among the vertices in a ``balanced'' manner.
\end{theorem}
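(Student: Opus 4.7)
My plan is to reduce the batched multiplication problem to the CC FMM algorithm of \cite{Censor-HillelKK19}, which solves a single $n \times n$ product in $\BO{n^\rho}$ rounds via a 3D computation-cube decomposition. The constraints $s \leq (n/k)^2$ and $k \in [\sqrt{n}, n]$ together imply that the total input size $sk^2 \leq n^2$ fits within the CC's aggregate capacity of $n^2$ entries per round (one entry per communication wire), while at the same time each individual matrix is too large to reside on a single machine.

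The core idea is to adapt the 3D-cube decomposition to the batched setting. For a single $k \times k$ product, the computation is viewed as a $k \times k \times k$ cube that is partitioned into subcubes of side $a$, each computed locally via FMM (with $\BO{a^\omega}$ work per subcube), and the overall $\BO{n^\rho}$ cost emerges from balancing $a$ against the per-machine communication load. For $s$ independent products, the $s$ cubes of size $k^3$ are packed into the same $n$-machine system: each machine is assigned subcubes drawn from several products so that its total local FMM work is $\BO{s k^\omega / n}$ while its total surface area (incoming/outgoing data) is $\BO{n^{\rho-1} k^2 s^{1-\rho}}$. Dividing this surface area by the per-machine bandwidth $n$ yields the claimed round count. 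The FMM exponent $\rho$ appears through the within-product subcube structure, while the independence of the $s$ products allows communication to be amortized across them, contributing the $s^{1-\rho}$ factor.

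Operationally, the execution will proceed in three phases: (i) using the balanced initial distribution, route each machine the submatrices of $A_i$ and $B_i$ that it needs for its assigned subcubes; (ii) have every machine perform its local FMM computations; and (iii) aggregate partial output entries by summing contributions from the subcubes covering each output cell. Because the input starts in a ``balanced'' layout, phase (i) is essentially a routing consistent with the subcube assignment and can be done without an extra shuffling overhead.

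The main obstacle I anticipate is the joint optimization of the subcube dimensions and the assignment of subcubes to machines so that the per-machine FMM work and communication loads are simultaneously balanced. When $s$ is close to $(n/k)^2$, the algorithm becomes essentially bandwidth-limited ($\BO{sk^2/n^2}$ rounds), whereas for small $s$ it approaches a single CC FMM call ($\BO{n^\rho}$ rounds); the decomposition parameters must smoothly interpolate between these two regimes, and verifying that the resulting bound matches $\BO{n^{\rho-2} k^2 s^{1-\rho}}$ in both extremes (and the interior) will require a careful case analysis.
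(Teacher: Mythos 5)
Your plan frames the algorithm as a \emph{spatial} 3D decomposition (partitioning each $k\times k\times k$ computation cube into subcubes of side $a$, each computed locally via FMM), but a spatial subcube decomposition yields per-machine communication $\Theta\bigl((sk^3/n)^{2/3}\bigr)$; dividing by the $\BO{n}$ per-round bandwidth gives exponent $1/3$, not the FMM exponent $\rho=1-2/\omega<1/3$. Running FMM locally inside a subcube reduces local work but does not reduce the surface-area communication, which is fixed by the spatial partition itself. To obtain $\rho<1/3$ one needs a \emph{bilinear} (Strassen-style) algorithm at the distribution level, as in \cite{Censor-HillelKK19}: each machine computes a single product $\hat{P}^{(w)}=\hat{S}^{(w)}\hat{T}^{(w)}$ of \emph{linear combinations} of blocks, not a spatial block of the output, and the communication volume is governed by the tensor rank $d^{\sigma}$ of the bilinear algorithm rather than by any geometric surface area. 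Your target per-machine load $O(n^{\rho-1}k^2 s^{1-\rho})$ is the correct endpoint, but the subcube route you describe would not reach it, and ``the $\rho$ exponent appears through the within-product subcube structure'' is exactly the claim that fails.

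The paper's proof (\Cref{thm2:mm runtime}) sidesteps re-deriving any decomposition for batched products. It partitions the $n$ machines into $s$ \emph{teams} of $n/s$ machines each, assigns the $i$-th product to the $i$-th team, and observes that although a team has only $n/s$ members, each member still has links to all $n$ vertices, so within its team it effectively runs with bandwidth $s$ times larger than usual. It then invokes a single-product generalization of the CKK algorithm (\Cref{PROP:PART2}), parameterized by matrix size $R$, node count $n'$, and bandwidth $B$, with cost $\BO{(n')^{\rho}(R/n')^2/B}$, and simply plugs in $n'=n/s$, $R=k$, $B=s$ to get $\BO{(n/s)^{\rho-2}k^2/s}=\BO{n^{\rho-2}k^2 s^{1-\rho}}$. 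No joint optimization of subcube dimensions is required, and the ``smooth interpolation'' between the $s=1$ and $s=(n/k)^2$ regimes that you anticipate needing a careful case analysis for falls out immediately from this single parameter substitution.
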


The formal definitions and the proof of \Cref{thm:informal} appear in \Cref{sec:FMM}, as well as the definitions and claims we need for the proof of \Cref{thm:main in t}.

The conceptual contribution of \Cref{thm:informal} is as follows.
On one hand, it is known that $n$ vertices can compute the product of $n$ pairs of matrices of size $\sqrt{n}$ in a constant number of rounds, given that the input is balanced, by letting the $i$-th vertex collect all the entries of the $i$-th pair of matrices and computing their product.
On the other hand, $n$ vertices can also compute the product of one square matrix of size $n$ in $\BO{\nr}$ rounds, as shown in \cite{Censor-HillelKK19}.
\Cref{thm:informal} gives a smooth trade-off between these two extremes.

Note that \cite{le2016further} provides an algorithm for computing the product of $s$ pairs of square matrices of size $n$ in $\BO{n^{\rho}\cdot s^{1-\rho}}$ rounds for $s\leq n$. The paper also provides an algorithm for computing the product of $s$ pairs of \emph{rectangular} matrices of sizes $n\times m$ and $m\times n$, where the bound on the round complexity is more involved and is not given by an analytic expression, see \Cref{ssec:rm} for a discussion.
Moreover, we need to compute the product of square matrices of size smaller than $n$, which is not covered by the above algorithm.

Before providing intuition about our proof of \Cref{thm:informal}, we explain what we mean by a balanced input and how the output should be distributed. Given a set of $s$ pairs of square matrices $\QQ=\set{(S_i,T_i)}_{i\in [s]}$ of size $k$ each, where $sk^2\leq n^2$, we think of the input as a ``flat'' array of $sk^2$ entries. The input is distributed as follows. The input given to the first vertex is the first $n$ entries in this array. The second vertex gets the next $n$ entries and so on.
For the output $\set{(P_i)}_{i\in [s]}$, where $P_i=S_i\cdot T_i$ for $i\in[s]$, we again transform the set of output matrices into a flat array and let each vertex learn distinct consecutive $n$ entries from it.
Note that as long as each vertex holds unique $n$ entries from the input, and every vertex knows which entries from the input every other vertex holds, then the input can be redistributed in $\BO{1}$ rounds, using \Cref{lemma:routing}.
We therefore define a balanced input as such.
\begin{definition}[Balanced Input]\label{def:balance-informal}
    An input for $n$ vertices is \emph{balanced} if it is partitioned between the vertices such that each vertex holds at most $n$ (unique) entries from the input, and every vertex knows which entries from the input are held by every other vertex.
\end{definition}
Now we can give the intuition behind the proof of  \Cref{thm:informal}. We partition the $n$ vertices into $s$ sets of size $n/s$ each. For every $i\in[s]$ we call the $i$-th set in the partition the $i$-th \emph{team}. The $i$-th team is responsible for computing the product $(S_i,T_i)$ (this  partitioning method is similar to that of \cite{le2016further}).
After partitioning, the problem boils down to computing one product of square matrices of size $k$ using $n/s$ vertices with bandwidth of size $s\log n$, which we solve by extending the work of \cite{Censor-HillelKK19}, which considers only the product of square matrices of size equal to the number of vertices.

Recall that our main motivation for this tool of \Cref{thm:informal} is to implement the algorithm \algA.
That is, given a graph $G$ with $n$ vertices, we sample $s$ subsets of vertices $(U_1,\ldots, U_s)$, where each vertex joins each set independently with probability $p$. Each set $U_i$ defines an induced subgraph $G[U_i]$ with an adjacency matrix $A_i$.
We need to compute $(A_i)^h$ for every $i\in[s]$, and we denote the set $\mathcal{Q}=\set{(A_i,A_i)}_{i\in[s]}$ as the input, where we assume $s\leq 1/p^2$.
In order to implement the algorithm \algA using \Cref{thm:informal}, we need to show that $\mathcal{Q}$ is a balanced input. However, this does not precisely hold, but we can show a sufficient guarantee of $\mathcal{Q}$ being ``almost" balanced \whp, in which the requirements in \Cref{def:balance-informal} are weakened. These weaker conditions still allow us to quickly redistributed the elements into a balanced input in $\BO{\log n}$ rounds \whp.

~\\To conclude, we obtain a fast algorithm for $h$-cycle detection, which beats the previous state-of-the art for odd-length cycles, as well as directed cycles. The algorithm is faster as the number of copies of $h$-cycles increases, where the key parameter for the algorithm and the analysis is the number of vertices in the graph that participate in an $h$-cycle.

~\\\textbf{Triangle Detection in the \QCC Model.}
Our new matrix multiplication tool turns out to be helpful for additional tasks. In the quantum setting, we obtain the following in the \QCC model, which is similar to the \CC model, but the vertices exchange messages of $\BO{\log n}$ qubits in each round instead of standard bits.
\begin{restatable}{theorem}{ThmQInCycle}\label{thm:q2}
    There exists a \QCC $\TO{(n/(t^2 + 1))^{3\rho/4}}$-round algorithm for triangle detection, with a success probability of at least $1/2$.
\end{restatable}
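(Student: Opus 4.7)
The plan is to follow the warm-up strategy from the introduction, replacing the classical multi-matrix multiplication tool (\Cref{thm:informal}) with a quantum analog. The sampling is identical: draw $s = \lceil t^2 \rceil$ independent vertex subsets $U_1, \ldots, U_s$, each vertex joining $U_i$ independently with probability $p = 1/t$. The same second-moment argument sketched in the warm-up shows that the event ``some $G[U_i]$ contains a triangle'' occurs with probability $\Omc$, which can be boosted above $1/2$ by a constant number of independent repetitions.

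The key new ingredient I would establish is a \QCC primitive that computes the product of $s$ pairs of $k \times k$ matrices, given a balanced input in the sense of \Cref{def:balance-informal}, in $\TO{n^{3\rho/4 - 2} \cdot k^2 \cdot s^{1 - 3\rho/4}}$ rounds. To prove it, I would mirror the team-based reduction of \Cref{sec:FMM}: partition the $n$ machines into $s$ teams of size $n/s$, delegate the $i$-th product to the $i$-th team, and inside each team invoke a quantum \CC matrix multiplication subroutine with exponent $3\rho/4$ in place of the classical $\rho$ used in the proof of \Cref{thm:informal}. The balanced-input arrangement and the routing step of \Cref{lemma:routing} carry over unchanged, since they are oblivious to whether the underlying messages are classical bits or qubits. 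Plugging $s = t^2$ and $k = n/t$ into this quantum bound gives $\TO{n^{3\rho/4 - 2} \cdot (n/t)^2 \cdot (t^2)^{1 - 3\rho/4}} = \TO{n^{3\rho/4}/t^{3\rho/2}} = \TO{(n/(t^2+1))^{3\rho/4}}$, matching the target. Triangle detection per sample is then the standard check: compute $A_i^2$ and search for a nonzero entry backed by an edge of $G[U_i]$.

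The main obstacle I anticipate is the error budget. \QCC subroutines typically come with constant success probability rather than \whp guarantees, so naively aggregating failure probabilities over the $s$ teams together with the $\Omc$ success probability of the sampling step would be wasteful. I would tune the internal failure probability of each per-team quantum matrix product to a sufficiently small constant (incurring at most a constant-factor multiplicative overhead via standard amplification), and then verify that the overall probability of correctly reporting a triangle when one exists remains at least $1/2$. Since the algorithm is one-sided (it never reports a triangle that does not exist), no additional work is needed on the negative side.
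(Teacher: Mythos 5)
Your proposal hinges on a \QCC primitive for multi-matrix multiplication running in $\TO{n^{3\rho/4 - 2} \cdot k^2 \cdot s^{1-3\rho/4}}$ rounds, obtained by ``invoking a quantum \CC matrix multiplication subroutine with exponent $3\rho/4$ in place of the classical $\rho$.'' No such subroutine is established in the paper, nor is one known in the literature. Grover search gives a quadratic speedup for unstructured search, not for bilinear arithmetic; there is no known way to turn it into a \QCC matrix multiplication algorithm with exponent strictly below $\rho$, and whether any quantum speedup for matrix multiplication exists in this model is precisely the kind of open question the paper avoids. This is not a small gap that amplification fixes --- it is the load-bearing step of the whole argument, and without it your round bound does not follow.

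The paper's actual route is structurally different and sidesteps this issue. It does use the same first-level sampling (draw $s = \Theta(t^2)$ subsets $U_1,\ldots,U_s$ with marginal probability $1/t$, partition the machines into $s$ teams, delegate $G[U_i]$ to team $i$), and the second-moment argument for why some $G[U_i]$ contains a triangle is what you describe. But the quantum advantage is extracted \emph{inside each team} by a second level of random subsampling: the team leader draws $\ell = \Theta(\log n / q^3)$ subsets $W_1,\ldots,W_\ell \subseteq U_i$ with marginal $q$, and runs a distributed Grover search over the universe $\{G[W_j]\}_{j \in [\ell]}$, where the predicate ``does $G[W_j]$ contain a triangle?'' is evaluated by the team using \emph{classical} matrix multiplication via \Cref{thm2:mm runtime}. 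The value $q = (n p^2)^{-\rho/2}$ is chosen so that each such evaluation across all teams costs $\BO{1}$ rounds, which makes the per-team cost $\BOs{\sqrt{\ell}} = \TOs{(n p^2)^{3\rho/4}}$; plugging $p = \Theta(1/t)$ gives the claimed $\TO{(n/(t^2+1))^{3\rho/4}}$. The exponent $3\rho/4$ thus arises from $\sqrt{(\cdot)^{3\rho/2}}$, i.e.\ Grover's square root applied to the number of subsamples needed, not from an improved matrix multiplication exponent. To repair your proof you would need to replace the hypothesized quantum MM primitive with this two-level sampling plus Grover-over-subsamples structure.
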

We prove this theorem in \Cref{sec:q}.
\Cref{thm:q2} obtains an $\TOs{n^{3\rho/4}}$-round algorithm that remains effective even when $t=0$. This is faster than the previous state-of-the-art algorithm from \cite{Censor-HillelKK19}, which takes $\BO{\nr}$ rounds and does not leverage the additional capabilities that the \QCC model offers.
For subgraphs other than triangles, there are detection algorithms in the \QCC which use the extra power of the model. For example, \cite{Censor-HillelFG22} provides an algorithm for larger clique detection.
Moreover, in the quantum \congest model, there are algorithms for clique and cycle detection \cite{izumi2017triangle,fraigniaud2024even}, which are faster than their \congest counterparts.

Our algorithm uses a Grover search \cite{grover1996fast}, which is a quantum algorithm to find an element in an unsorted list of size $L$, while accessing only $\sqrt{\abs{L}}$ entries from the list.
Generally, given a function $f:X\to\set{0,1}$ and a universe $X$, Grover search is a quantum algorithm which finds an $x\in X$ such that $f(x)=1$ (assuming such $x$ exists), by querying $f$ at most $\sqrt{\abs{X}}$ times \whp.
In \cite{le2018sublinear} a distributed implementation of Grover search was provided, for the quantum \congest model, which was later extended to the \QCC model in \cite{IzumiG19,Censor-HillelFG22}.

We provide an overview of our algorithm, which has two steps. In the first step, we sample $s=1/t^2$ random induced subgraphs $(U_1,\ldots,U_s)$, where each vertex joins every set independently with probability $1/t$.
We partition the vertices into $s$ sets, each of $n/s$ vertices, which we call teams.
For $i\in[s]$, the $i$-th team uses Grover search to detect a triangle in the subgraph $G[U_i]$, as follows. It samples $\ell=8\log n/q^3$ subsets of vertices $(W_1,\ldots, W_\ell)$, where each vertex from $U_i$ joins each set independently with probability $1/q$. The universe for the search is the set $X\triangleq\set{G[W_i]}_{i\in[\ell]}$, and the boolean function $g$ is defined as $g(G[W_i])=1$ if $G[W_i]$ contains a triangle, and $0$ otherwise.

The correctness of the algorithm follows because if the graph $G$ has $t$ triangles, then there exists an index $i\in[s]$ such that $G[U_i]$ contains a triangle with probability at least $1/10$. The $i$-th team finds this triangle using $\sqrt{1/q^3}$ evaluations of $g$ \whp.
The final detail of the algorithm is to set $q$ such that each evaluation of $g$ takes $\BO{1}$ rounds, which optimizes the round complexity of this approach.

Our above fast triangle detection algorithm for the \QCC model is given in \Cref{sec:q}. Extending our approach to $h$-cycle detection is not straightforward; we elaborate on the challenges in \Cref{subsec:q-challenges}.

~\\\textbf{Additional Related Work.}
In this \CC model, matrix multiplication was first studied by \cite{DruckerKO13}. After the aforementioned works of \cite{Censor-HillelKK19,le2016further}, the work of \cite{censor2020sparse} showed an algorithm whose running time improves with the sparsity of the input matrices, and \cite{Censor-HillelDK21} showed algorithms for sparse matrix multiplication which also enjoy the sparsity of the output or when only a sparse piece of the output is needed.

The task of listing subgraphs has also received great attention in the \CC model. Here, each vertex needs to output a list of copies of the subgraph $H$, such that the union of the lists is exactly the set of all copies of $H$ in the graph. As mentioned, \cite{dolev2012tri} give an algorithm for listing all $h$-vertex graphs within $O(n^{1-2/h})$ rounds. For triangles, this is known to be tight by \cite{izumi2017triangle,pandurangan2021distributed}.
This optimality extends to larger cliques due to \cite{fischer2018possibilities}.
Listing triangles in sparse graphs can be done faster, as first shown by \cite{pandurangan2021distributed} with a randomized algorithm, and then followed up by \cite{censor2020sparse} with a deterministic algorithm. Afterward, \cite{censor2020distributed} showed faster sparse listing for larger cliques, which also has a deterministic algorithm due to \cite{Censor-HillelFG20}.

We mention that in the closely-related \congest model, in which the communication graph is the input graph itself, rather than being a complete network, the state of affairs is in stark contrast to the \CC model. For listing cliques, optimal algorithms are known due to \cite{chang2021near,censor2021tight}, with deterministic solutions in \cite{ChangS20,censor2022deterministic,chang2024deterministic}. %
The underlying approach, initiated by \cite{chang2021near}, is to construct an expander decomposition, which partitions the vertices into components with good expansion (low mixing time). At a very high level, the vertices of each component list cliques for which some edges are inside the component, and then recurse over the remaining edges.
However, for an algorithmic approach of the \CC model to have a fast implementation in the \congest model, also when using the known routing procedures \cite{ghaffari2017distributed,ghaffari2018new}, the algorithm has to adhere to certain conditions. In other words, it is not the case that any algorithm in the \CC model can be executed efficiently by the components of the expander decomposition in the \congest model. 

Specifically, we stress that for the detection variant in \congest, the state of the art even just for triangles is the same as for listing. That is, even the $O(n^{\rho})$ algorithm of \cite{Censor-HillelKK19} does not have an implementation in the \congest model, and it is unknown how to detect triangles in less than the time it takes for listing them (interestingly, the only lower bounds that are known are that a single round does not suffice \cite{AbboudCKL20,fischer2018possibilities}). For larger $h$-cycles, detection for odd values of $h$ is known to have a complexity of $\tilde{\Theta}(n)$ \cite{drucker2014power}. Much work is invest in studying the complexity of detecting even cycles 
\cite{drucker2014power,KorhonenR17,Censor-HillelFG20,eden2022sublinear,Censor-HillelFG22,ApeldoornV22,fraigniaud2024even}, with the state of the art being a recent result showing that $h$-cycles can be detected in $\tilde{O}(n^{1-2/h})$ rounds for even values of $h$ \cite{fraigniaud2024even}.

Investigations into the subgraph detection problem have also been conducted in additional models such as the quantum \congest and \QCC models, where vertices exchange qubits instead of standard bits. In \cite{izumi2020quantum}, a quantum \congest $\TO{n^{1/4}}$-round algorithm for triangle detection was presented, which outperforms the $\TO{n^{1/3}}$-round \congest algorithm. This approach was further improved in \cite{Censor-HillelFG22} by developing an $\TO{n^{1/5}}$ rounds quantum algorithm. 
Additional upper and lower bounds for cycle detection in the quantum \congest model were presented in \cite{ApeldoornV22,fraigniaud2024even}. 
In \cite{Censor-HillelFG22}, an \QCC algorithm for $p$-clique detection, for $p\geq 4$, was presented, which achieves an $\TO{n^{1-2/(p-1)}}$-round complexity, which is faster than the classical \CC algorithm.
Further research in the quantum distributed models includes both upper and lower bounds for various problems \cite{ElkinKNP14,IzumiG19,ApeldoornV22}.

There is extensive research about subgraph finding in additional models of distributed computing. All of these important works are a bit more far from our work here, and hence we refer the reader to the survey of \cite{C21}, which contains a recent overview of subgraph finding algorithms for distributed settings.

\section{Preliminaries}
\label{subsec:preliminaries}

We use $[n]$ to denote the set $\set{1,\dots, n}$.
We denote the base graph by $G$, its vertices by $V(G)$, where unless stated otherwise we assume $V(G)=[n]$.
Fix some constant integer $h\geq 0$.
Let $t$ denote the number of $h$-cycles in $G$, and let
$\Vx(G)$ denote the set of vertices that participate in an $h$-cycle in $G$, where we denote by $x$ the size of the set $\Vx(G)$. This parameter plays a crucial role in the analysis in \Cref{sec:h cycle}.
We establish a connection between the two parameters $x$ and $t$ as follows.
\begin{definition}[$\delta$]\label{def:x-delta}
    For undirected graph, $G$ with $t$ copies of $h$-cycle, and $\Vx(G)=x$,
    we define $\delta$ as such that $x^{h-\delta}=2h\cdot t$.
    If $G$ is directed, we define $\delta$ as such that $x^{h-\delta}=h\cdot t$.
\end{definition}
We present two claims on $\delta$. We show that $\delta\in[0,h-1]$, and that the term $x^{-\delta}$ is equal to is the probability for $h$ vertices sampled uniformly at random with replacement from $\Vx(G)$ to form an $h$-cycle in $G$.

\begin{claim}\label{claim:delta prop0}
    Let $G$ be a graph with $t$ copies of an $h$-cycle and $x$ vertices that participate in at least one $h$-cycle. 
    Sample $h$ vertices $(v_1,\ldots, v_h)$ from $\Vx(G)$ uniformly at random with replacement from $\Vx(G)$. Then the probability that they form an $h$-cycle is exactly $x^{-\delta}$. 
\end{claim}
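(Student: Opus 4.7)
The plan is a direct counting argument: since the sample space has exactly $x^h$ ordered tuples, I only need to count how many of them trace out an $h$-cycle, and then divide.

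First I would verify that the sample space contains every $h$-cycle of $G$: by definition of $\Vx(G)$, every vertex of every $h$-cycle lies in $\Vx(G)$, so no cycle is excluded. Next I would count the number of ordered representations of a fixed cycle. A tuple $(v_1,\dots,v_h)$ ``forms an $h$-cycle'' iff the edges $v_1v_2, v_2v_3, \dots, v_{h-1}v_h, v_h v_1$ close up into a cycle of $G$. For each undirected $h$-cycle there are exactly $h$ choices of starting vertex and $2$ choices of orientation, giving $2h$ tuples per cycle; for each directed $h$-cycle the orientation is forced, leaving only the $h$ cyclic rotations. (Note that tuples with repeated entries cannot trace a cycle, so sampling with replacement versus without does not change the favorable count.)

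Then I would collect the counts and use \Cref{def:x-delta}. In the undirected case the number of favorable tuples is $2h\cdot t$, which by definition equals $x^{h-\delta}$; dividing by the total $x^h$ gives probability $x^{h-\delta}/x^h = x^{-\delta}$. In the directed case the count is $h\cdot t = x^{h-\delta}$, and the same division yields $x^{-\delta}$.

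The only subtlety is the factor of two between the undirected and directed counts, which is precisely the reason for the case split in \Cref{def:x-delta}. I do not expect any real obstacle: the claim is essentially a bookkeeping identity whose purpose is to justify the very definition of $\delta$ and to give it the probabilistic interpretation that will be used in \Cref{sec:h cycle}.
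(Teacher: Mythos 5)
Your proof is correct and takes essentially the same approach as the paper: count the $x^h$ total ordered $h$-tuples, count the favorable ones as $2h\cdot t$ (undirected) or $h\cdot t$ (directed) via rotations and orientations, and divide, invoking \Cref{def:x-delta} to rewrite the numerator as $x^{h-\delta}$.
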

\begin{proof}
    We first emphasize that we consider $(v_1,\ldots, v_h,v_1)$ to be an $h$-cycle only if all the vertices are distinct other than the first and the last which are the same vertex, and that for $i\in[h]$ we have that $(v_i,v_{i+1 \mod h})$ is an edge in $G$.

    To prove the claim, we count the number of $h$-tuples of vertices that form an $h$-cycle in $G$, and the number of $h$-tuples of vertices from $\Vx(G)$. The probability that the sampled vertices form an $h$-cycle is equal to the ratio of the former to the latter.
    
    The latter is equal to $x^h$. The former is equal to $2ht$ for undirected graphs, as each $h$-cycle is associated with exactly $2h$ tuples of vertices. 
    For directed graphs, the former is equal to $ht$, as each $h$-cycle is associated with exactly $h$ tuples of vertices. 
    
    For undirected graphs, this implies that the sampled vertices form an $h$-cycle is equal to $\frac{2ht}{x^h} =\frac{x^{h-\delta}}{x^h}=x^{-\delta}$, where we used the fact that $2ht=x^{h-\delta }$.
    For directed graphs, this implies that the sampled vertices form an $h$-cycle is equal to $\frac{ht}{x^h} =\frac{x^{h-\delta}}{x^h}=x^{-\delta}$, where we used the fact that $ht=x^{h-\delta }$.
    This completes the proof.
\end{proof}

\begin{claim}\label{claim:delta prop2}
    It holds that $\delta\in [0,h-1]$.
\end{claim}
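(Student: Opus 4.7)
The plan is to establish the two inequalities $\delta \geq 0$ and $\delta \leq h-1$ separately, both by elementary counting arguments.

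For the lower bound $\delta \geq 0$, I would appeal directly to \Cref{claim:delta prop0}, which was just established: the quantity $x^{-\delta}$ equals the probability that $h$ vertices sampled uniformly with replacement from $\Vx(G)$ form an $h$-cycle. Since any probability lies in $[0,1]$, we have $x^{-\delta} \leq 1$, and (assuming $x > 1$, which holds whenever $t \geq 1$ since an $h$-cycle contains $h \geq 2$ distinct vertices) this forces $-\delta \leq 0$, i.e., $\delta \geq 0$. Alternatively, one can argue directly that the number of ordered $h$-tuples from $\Vx(G)$ that form an $h$-cycle is at most the total number of ordered $h$-tuples, giving $2ht \leq x^h$ (or $ht \leq x^h$ in the directed case), which by \Cref{def:x-delta} is exactly $x^{h-\delta} \leq x^h$.

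For the upper bound $\delta \leq h-1$, the key observation is a double-counting identity: every vertex $v \in \Vx(G)$ participates in at least one $h$-cycle by definition of $\Vx(G)$, and each $h$-cycle consists of exactly $h$ vertices. Thus summing over $v \in \Vx(G)$ the number of $h$-cycles containing $v$ yields exactly $h \cdot t$ (each cycle counted $h$ times), but this sum is at least $x$ since every term is at least $1$. Hence $x \leq h \cdot t$ in the directed case, and $x \leq h \cdot t \leq 2ht$ in the undirected case. Substituting the definition $x^{h-\delta} = h \cdot t$ (resp.\ $2h \cdot t$) gives $x^{h-\delta} \geq x$, and assuming $x > 1$ this implies $h - \delta \geq 1$, i.e., $\delta \leq h-1$.

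The only subtlety to address is the edge case $x \leq 1$: if $t = 0$ then $\Vx(G)$ is empty and $\delta$ is not well-defined by the equation, so the claim should be read under the implicit hypothesis $t \geq 1$, in which case $x \geq h \geq 2$ (every $h$-cycle contributes $h$ distinct vertices to $\Vx(G)$), and the exponent manipulations are valid. Neither step looks like a genuine obstacle; the proof is essentially a one-line application of \Cref{claim:delta prop0} together with a double-counting inequality, and the whole argument should fit in a short paragraph.
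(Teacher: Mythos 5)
Your proof is correct. The lower bound $\delta\geq 0$ is identical to the paper's (both use that $x^{-\delta}$ is a probability). For the upper bound $\delta\leq h-1$, the paper and you derive the same underlying inequality $x\leq ht$ (equivalently, there are at least $x$ ordered $h$-tuples from $\Vx(G)$ forming an $h$-cycle), just phrased differently: the paper conditions on the first sampled vertex $u$ and lower bounds the probability of completing one specific cycle through $u$ by $1/x^{h-1}$, whereas you double-count vertex--cycle incidences directly. These are presentational variants of the same counting argument, so I'd call it essentially the same approach; if anything yours is a hair more self-contained since it avoids going back through the probability interpretation of \Cref{claim:delta prop0}. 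Your handling of the edge cases ($t=0$, $x>1$) is also fine and is left implicit in the paper.
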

\begin{proof}
    It is clear that $\delta\geq 0$, as $x^{-\delta}$ is equal to the probability of some event and therefore cannot be strictly bigger than $1$.
    We prove that $\delta\leq h-1$.
    Sample $h$ vertices from $\Vx(G)$ with replacement and let $u$ be the first vertex that is sampled. Because $u$ participates in at least one $h$-cycle, there is an ordered tuple of $h-1$ vertices $(v_2,v_3,\ldots, v_h)$ such that $(u,v_2,v_3,\ldots, v_h,u)$ forms an $h$-cycle. The probability that the sampled vertex is $v_i$ for every $i \in {2, 3, \ldots, h}$ is exactly $1/x^{h-1}$. This lower bounds the probability of sampling an $h$-cycle. Therefore, $x^{-\delta}\geq 1/x^{h-1}$, implying $\delta\leq h-1$.
\end{proof}%

\subsection{Additional Tools}
\begin{lemma}[Lenzen's Routing Lemma \cite{lenzen2013optimal}]\label[lemma]{lemma:routing}
    The following is equivalent to the \clique model:
    In every round, each vertex can send (receive) $\set{b_i}_{i\in[n]}$ bits to (from) the $i$-th vertex, for any sequence $\set{b_i}_{i\in[n]}$ satisfying $\sum_{i=1}^n b_i=\BO{n\log n}$. In other words, any routing scheme in which no vertex sends or receives more than $\BO{n}$ messages can be preformed in $\BO{1}$ rounds.
\end{lemma}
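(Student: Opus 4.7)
The plan is to prove the lemma by reducing the routing to a balanced form and then routing through intermediate relay vertices in two phases.

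First, I would pad the instance. Since $\sum_i b_i = \BO{n\log n}$, I can extend each message sequence with dummy bits so that every vertex has exactly $n$ messages, each of $\Theta(\log n)$ bits, to send, and is the destination of exactly $n$ such messages. This padding only increases the total load by a constant factor and reduces the task to a balanced routing instance where both source-load and destination-load equal $n$ at every vertex.

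Next, I would assign each message a relay vertex $r \in [n]$ with the property that (a) every source sends at most one message to each relay, and (b) every relay sends at most one message to each destination. Given such an assignment, Phase~1 (source $\to$ relay) transmits at most one $\BO{\log n}$-bit message along every ordered pair of vertices, hence fits in one round of the \clique; and the same holds for Phase~2 (relay $\to$ destination). So the whole routing completes in $\BO{1}$ rounds.

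The existence of such a relay assignment follows from König's edge-coloring theorem. View the messages as the edges of a bipartite multigraph on $[n]_{\mathrm{src}} \cup [n]_{\mathrm{dst}}$; after padding, this graph is $n$-regular, so its edges admit a proper $n$-edge-coloring. Assigning each message to the relay equal to its color class satisfies both (a) and (b), since each source has at most one edge of each color, and so does each destination.

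The main obstacle is to compute this assignment distributedly in $\BO{1}$ rounds, since no vertex initially knows the global demand matrix. I would adopt the deterministic protocol of \cite{lenzen2013optimal}, which realizes such a coloring implicitly through a local numbering of messages at each source combined with a symmetric relay permutation, achieving the $\BO{1}$-round bound deterministically. A conceptually simpler randomized variant, which also suffices for every application of this lemma in the present paper, is to pick each relay uniformly at random: Chernoff bounds together with a union bound show that conditions (a) and (b) hold up to a constant factor with high probability, so the two phases can be executed in $\BO{1}$ rounds \whp by handling each relay's incoming and outgoing messages in a constant number of sub-rounds.
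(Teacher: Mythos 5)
The paper does not prove this lemma; it is an external result cited verbatim from \cite{lenzen2013optimal}, so there is no in-paper proof to compare against. Your high-level skeleton (pad to an $n$-regular bipartite demand multigraph, obtain a relay assignment from a proper $n$-edge-coloring via K\"onig's theorem, route in two phases source $\to$ relay $\to$ destination) is indeed the correct conceptual outline of Lenzen's algorithm. However, you largely defer the hard step --- computing the relay assignment distributedly in $\BO{1}$ rounds --- to the cited paper itself, which is circular as a standalone proof: that distributed coordination is precisely the technical content of Lenzen's result.

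The concrete gap is in your randomized fallback. If each message independently picks a uniformly random relay (or, better, each source routes its $n$ messages via a random permutation of relays, which fixes condition~(a) exactly), then for condition~(b) the number of messages a fixed relay $r$ holds for a fixed destination $d$ is a sum of independent indicators with total mean $1$. Its maximum over all $n^2$ pairs $(r,d)$ is $\Theta(\log n / \log\log n)$ with high probability, not $\BO{1}$; Chernoff does not give you a constant bound here. So the two-phase protocol takes $\Theta(\log n / \log\log n)$ rounds, not $\BO{1}$. Your claim that this weaker variant ``suffices for every application of this lemma in the present paper'' is also not accurate: \Cref{claim:runtime vit} and the redistribution step in the proof of \Cref{thm:quantum1} both explicitly invoke the $\BO{1}$-round guarantee, so the genuine deterministic $\BO{1}$ bound of \cite{lenzen2013optimal} is actually needed.
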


\begin{theorem}[Chernoff Bound {\cite[Corollary 1.10.6.]{doerr2019theory}}]\label{thm:chernoff}
    Let $X_1, \ldots, X_n$ be independent random variables taking values in $[0,1]$ and $X=\sum_i X_i$. Let $\delta\in[0,1]$. Then,
    $\Pr{\abs{X-\Exp{X}\geq \delta\Exp{X}}}\leq 2\exp\brak{-\delta^2\cdot \Exp{X}/3}$
\end{theorem}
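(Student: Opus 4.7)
The plan is to prove this standard multiplicative Chernoff bound using the moment-generating-function (MGF) method, treating the upper and lower tails separately and then combining them via a union bound. This is the route taken in the cited reference \cite{doerr2019theory}; I would reproduce it rather than look for a slicker argument.

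For the upper tail $\Pr{X \geq (1+\delta)\Exp{X}}$, I would fix a parameter $\lambda > 0$ to be optimized and apply Markov's inequality to the nonnegative random variable $e^{\lambda X}$, giving $\Pr{X \geq (1+\delta)\Exp{X}} \leq \Exp{e^{\lambda X}} \cdot e^{-\lambda(1+\delta)\Exp{X}}$. Independence of the $X_i$'s factors $\Exp{e^{\lambda X}}$ into $\prod_i \Exp{e^{\lambda X_i}}$. Since $X_i \in [0,1]$, convexity of $x \mapsto e^{\lambda x}$ on $[0,1]$ yields $e^{\lambda X_i} \leq 1 + (e^\lambda - 1) X_i$, so taking expectations and then using $1+y \leq e^y$ bounds each factor by $\exp\brak{(e^\lambda - 1)\Exp{X_i}}$. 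Multiplying gives $\Exp{e^{\lambda X}} \leq \exp\brak{(e^\lambda - 1)\Exp{X}}$. Choosing $\lambda = \ln(1+\delta)$, which minimizes the resulting expression, reduces the bound to $\brak{e^\delta / (1+\delta)^{1+\delta}}^{\Exp{X}}$.

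The last step, which I expect to be the main calculational obstacle, is the analytic inequality $e^\delta / (1+\delta)^{1+\delta} \leq \exp(-\delta^2/3)$ valid for $\delta \in [0,1]$. I would verify this by taking logarithms, reducing to showing $(1+\delta)\ln(1+\delta) - \delta - \delta^2/3 \geq 0$ on $[0,1]$, and checking this via a Taylor expansion of $\ln(1+\delta)$ around zero together with monotonicity at the endpoints. For the lower tail, I would run the identical MGF computation with $\lambda < 0$ (equivalently, bound $\Pr{-X \geq -(1-\delta)\Exp{X}}$), arriving at $\brak{e^{-\delta}/(1-\delta)^{1-\delta}}^{\Exp{X}}$, which is in fact bounded by $\exp(-\delta^2 \Exp{X}/2)$ and hence also by $\exp(-\delta^2 \Exp{X}/3)$. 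A union bound over the two tails yields the prefactor of $2$ in the stated inequality. Everything besides the analytic inequality is routine MGF manipulation.
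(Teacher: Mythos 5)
The paper does not supply a proof of this statement; it cites it as a known result (\cite[Corollary 1.10.6]{doerr2019theory}) and uses it as a black-box tool. Your MGF argument — Markov's inequality applied to $e^{\lambda X}$, factorization by independence, the convexity bound $e^{\lambda x} \leq 1 + (e^\lambda - 1)x$ for $x\in[0,1]$, the choice $\lambda = \ln(1\pm\delta)$, the analytic inequality $(1+\delta)\ln(1+\delta) - \delta \geq \delta^2/3$ on $[0,1]$, and the union bound giving the prefactor $2$ — is correct and is the standard proof appearing in the cited reference, so it matches the only proof the paper implicitly points to.
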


\begin{theorem}[Reverse Markov's inequality {\cite[(1.6.4)]{doerr2019theory}}]\label{thm:rev mark}
    Let $X$ be a random variable with support contained in $[0,M]$.
    Then, for $R\in\mathbb{R}$, we have $\Pr{X>R}\geq \frac{\Exp{X}-R}{M-R}$.
\end{theorem}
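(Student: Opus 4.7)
The plan is to derive the inequality by a one-line decomposition of $\Exp{X}$ along the event $\{X>R\}$, which mirrors the textbook proof of Markov's inequality but exploits the extra upper bound $X\leq M$ to turn the usual upper tail bound into the desired \emph{lower} bound. The whole argument is elementary; the only genuine subtlety is tracking the sign of $M-R$ when rearranging, so I would first treat the principal range $R\in[0,M)$ in which $M-R>0$, and then dispose of the remaining values of $R$ as edge cases.

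For the principal case $0\leq R<M$, I would split the expectation using the indicator of $\{X>R\}$:
\[
\Exp{X}=\Exp{X\cdot\mathbf{1}[X\leq R]}+\Exp{X\cdot\mathbf{1}[X>R]}.
\]
On the first event $X\leq R$ pointwise (and $X\geq 0$), and on the second event $X\leq M$ pointwise. Substituting these pointwise bounds gives
\[
\Exp{X}\;\leq\; R\,\Pr{X\leq R}+M\,\Pr{X>R}\;=\;R+(M-R)\,\Pr{X>R}.
\]
Since $M-R>0$, I can divide through to obtain $\Pr{X>R}\geq (\Exp{X}-R)/(M-R)$, which is exactly the claim in this range.

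Finally I would handle the boundary values of $R$. If $R<0$, then $\Pr{X>R}=1$ because $X\geq 0$, and the claim reduces to $\Exp{X}\leq M$, which is immediate from the support being contained in $[0,M]$. If $R>M$, then $\Pr{X>R}=0$ while $(\Exp{X}-R)/(M-R)\leq 0$ (both numerator and denominator are negative, using $\Exp{X}\leq M<R$), so the inequality still holds; the boundary $R=M$ is vacuous since the denominator vanishes. The main obstacle, as already noted, is precisely this bookkeeping about the sign of $M-R$: dividing by a nonpositive quantity would reverse the inequality, which is why I isolate the principal case before folding in the boundary behaviour.
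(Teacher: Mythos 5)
The paper does not prove this statement; it cites it directly from the textbook of Doerr (reference \cite{doerr2019theory}, inequality (1.6.4)), so there is no in-paper proof to compare against. Your argument for the principal range $0\leq R<M$ is the standard one and is correct: decompose $\Exp{X}$ over the event $\{X>R\}$, bound $X$ by $R$ on the complement and by $M$ on the event, and rearrange using $M-R>0$.

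However, your treatment of the case $R>M$ contains a sign error that actually exposes an imprecision in the theorem statement itself. You write that $(\Exp{X}-R)/(M-R)\leq 0$ because ``both numerator and denominator are negative,'' but a negative number divided by a negative number is \emph{positive}, not nonpositive. In fact, when $R>M$ one has $\Exp{X}-R\leq M-R<0$, so the ratio is at least $1$, while $\Pr{X>R}=0$; the inequality is therefore \emph{false} for $R>M$ (take, e.g., $X\equiv M/2$ and $R=2M$). The statement as given, with ``for $R\in\mathbb{R}$,'' is thus slightly overstated; the correct hypothesis is $R<M$ (equivalently, the usual textbook form restricts to $a<b$). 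This is harmless for the paper, which only invokes the lemma with $R=\Exp{\abs{S}}/2\leq M/2<M$, but your edge-case bookkeeping should be corrected: for $R<0$ your reduction to $\Exp{X}\leq M$ is fine, for $R=M$ the bound is indeed vacuous, and for $R>M$ you should simply note the claim does not hold (or restrict the statement to $R<M$) rather than assert that it does.
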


\begin{theorem}[FMM-based triangle detection \cite{Censor-HillelKK19}]\label{thm:detect deterministic}
    There is a deterministic algorithm for triangle detection, which takes $\BO{\nr}$ rounds.
\end{theorem}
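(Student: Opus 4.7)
The plan is to reduce triangle detection to one squaring of the adjacency matrix and then invoke a deterministic distributed fast matrix multiplication (FMM) subroutine in the \clique model. Let $A\in\{0,1\}^{n\times n}$ be the adjacency matrix of $G$. Initially vertex $v$ already holds the row $A[v,\cdot]$ as its list of neighbors. If we can compute $B=A\cdot A$ so that each vertex $v$ ends up knowing $B[v,\cdot]$, then $G$ contains a triangle iff there exist $u,v$ with $A[u,v]=1$ and $B[u,v]\ge 1$ (note that $A$ has zero diagonal, so the contribution $w=u$ or $w=v$ to $B[u,v]=\sum_w A[u,w]A[w,v]$ vanishes, and any positive $B[u,v]$ alongside the edge $(u,v)$ witnesses a true triangle). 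Hence each vertex $v$ locally scans its row of $A$ and its row of $B$, raises a flag if it finds such a $u$, and a standard one-round convergecast to a leader settles the decision.

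The key ingredient is therefore a deterministic algorithm that, given a balanced row-distribution of $A$, produces a balanced row-distribution of $A\cdot A$ in $\BO{\nr}$ rounds. The plan is to use the Strassen-style scheme underlying the definition of $\rho$: partition $A$ into blocks of size $n^{1-\rho/2}\times n^{1-\rho/2}$ and apply a fast matrix multiplication algorithm of exponent $\omega$ to the resulting $n^{\rho/2}\times n^{\rho/2}$ block matrix. Each vertex can be assigned a constant number of scalar multiplications of blocks; by the choice of block size, each such block multiplication involves only $\BO{n}$ scalars per vertex, which can be collected in $\BO{1}$ rounds using \Cref{lemma:routing}, multiplied locally, and routed back for linear combination. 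The total number of routing phases is proportional to the depth of the bilinear algorithm, which is a constant depending only on $\omega$, and thus the total round complexity is $\BO{\nr}$. Determinism follows because the routing schedule, the block assignment, and the bilinear identity are all fixed a priori.

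The main obstacle is not the reduction, which is essentially immediate, but setting up a balanced load in the FMM step: each scalar entry of the block product must be summed across all intermediate multiplications that contribute to it, and a naive assignment can concentrate too much traffic on one vertex. The fix is the load-balancing argument of \cite{Censor-HillelKK19}, in which vertices are assigned to bilinear multiplications via a symmetric scheme derived from the tensor decomposition, guaranteeing that every vertex sends and receives $\BO{n}$ messages per routing phase. Once this balanced assignment is in place, Lenzen's routing lemma makes each phase cost $\BO{1}$ rounds, and the bound $\BO{\nr}$ follows; the final triangle check and aggregation add only $\BO{1}$ more rounds, giving the claimed deterministic $\BO{\nr}$-round algorithm.
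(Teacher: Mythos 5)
This theorem is a citation from \cite{Censor-HillelKK19}, so the paper does not prove it itself, but the relevant FMM construction is reproduced (in generalized form) in \Cref{PROP:PART2} and \Cref{appendix:FMM}. Your reduction from triangle detection to computing $B=A^2$ and scanning for an edge $(u,v)$ with $B[u,v]\ge 1$ is correct, and the overall plan of invoking a deterministic bilinear FMM scheme with Lenzen routing is the right one. However, the FMM step as you have written it contains a genuine gap in the round-complexity accounting.

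First, the block size is off. In the scheme of \cite{Censor-HillelKK19} (and in \Cref{PROP:PART2}), one fixes $d$ so that the number of scalar multiplications $m(d)=\Theta(d^{\omega})$ equals $n$, i.e.\ $d=\Theta(n^{1/\omega})$; the blocks are then of side $n/d=\Theta(n^{1-1/\omega})=\Theta(n^{(1+\rho)/2})$, not $n^{1-\rho/2}$. Second, and more importantly, your round count does not actually produce $\BO{\nr}$: you argue that each phase costs $\BO{1}$ rounds (since $\BO{n}$ scalars per vertex are moved via \Cref{lemma:routing}) and that there are $\BO{1}$ phases, but that conclusion would give $\BO{1}$ rounds total, not $\BO{\nr}$. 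The bound $\BO{\nr}$ does not come from iterating a constant number of $\BO{1}$-round phases; it comes from the fact that in the distribution and collection steps each vertex must send and receive $\Theta\bigl((n/d)^2\bigr)=\Theta(n^{2-2/\omega})=\Theta(n^{1+\rho})$ field elements. With $\BO{n}$ elements per vertex per round under Lenzen routing, this is $\Theta(n^{\rho})$ rounds. This load argument is the crux of the deterministic $\BO{\nr}$ bound, and it is missing from your write-up; your sketch underestimates the per-vertex traffic by asserting it is $\BO{n}$. Filling this in essentially reproduces the analysis in \Cref{appendix:FMM}, where steps 2 and 4 each cost $\BO{(R/d)^2/n'}$ rounds and the $\BO{\nr}$ bound is read off from there.
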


\section{Fast Matrix Multiplication in \clique}\label{sec:FMM}
\subsection{Preliminaries and Balanced Products.}\label{ssec:MM sk}
In this section, we define the problem of computing $s$ pairs of square matrices of size $k$ each, as well as defining what is a balance input.
We assume throughout the paper that the matrices are over a field $\mathbb{F}$, where each element can be represented using $\BO{\log n}$ bits.

We introduce the following definitions to specify the required  input.
\begin{definition}[The notations {$A[i,*]$ and $A[*x*,*y*]$}]
    Let $A$ be some matrix. We denote the $i$-th row of $A$ by $A[i,*]$.
    For a matrix $A$ of dimension $n\times n$ and two indices $x,y\in [\sqrt{k}]$ for $k\leq n$, we also use $A[*x*,*y*]$ to denote a matrix of dimension $n/\sqrt{k}\times n/\sqrt{k}$, which is the following submatrix of $A$. For every index $v\in [n]$, we split it into three indices $v=v_1 v_2 v_3$ where $v_1,v_3\in[n^{1/2}\cdot k^{-1/4}],\; v_2\in [\sqrt{k}]$.
    The expression $*x*$ then refers to all $v$ for which $v_2=x$.
\end{definition}

The following definition formally defines the problem of multiple matrix multiplications.
\begin{definition}[$\mathsf{Product}\brak{\QQ}$]\label{def2:product}
    Given set of $s$ pairs of square matrices $\QQ=\set{(S_i,T_i)}_{i\in [s]}$ of size $k\times k$.
    In the $\mathsf{Product}\brak{\QQ}$ problem, $n$ nodes need to compute the products of those pairs of matrices. The input is distributed as follows.
    Each vertex $v\in V$ is assigned a label $\ell(v)=(i,x,y)$, where $i\in[s]$, and $x,y\in[\sqrt{n/s}]$.
    The vertex $v$ gets as input the submatrix $S_i[*x*,*y*],T_i[*x*,*y*]$, and has to learn the entries of the submatrix $P_i[*x*,*y*]$, where $P_i=S_i\cdot T_i$ for $i\in[s]$.
    We denote the round complexity of this problem by $\MM{k,k,k;s}$.
\end{definition}
Note that every vertex can learn the label of each other vertex in $\BO{1}$ rounds.
The following theorem is the main theorem for this section, in which we provide an upper bound for the round complexity of the $\mathsf{Product}\brak{\QQ}$ problem.
\begin{theorem}\label{thm2:mm runtime}
    For any two integers $k,s$ where $k\in[\sqrt{n},n]$ and $s\leq (n/k)^2$, we have that
    \center$\MM{k,k,k;s}= \BOs{n^{\rho-2}\cdot k^2 \cdot s^{1-\rho}}$.
\end{theorem}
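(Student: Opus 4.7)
The plan is a team-partition strategy. I would first partition the $n$ clique vertices into $s$ teams of size $m \eqdef n/s$ each, by grouping the vertices of label $\ell(v) = (i,x,y)$ by $i$: team $i$ contains all vertices whose first label coordinate is $i$. By the input layout in \Cref{def2:product}, this partition already places $S_i, T_i$ on team $i$ as a balanced $\sqrt m \times \sqrt m$ grid of $(k/\sqrt m) \times (k/\sqrt m)$ sub-blocks, with the vertex labeled $(i,x,y)$ holding the $(x,y)$ block. The required output layout is aligned in the same way, so team $i$ is self-sufficient for computing $P_i = S_i T_i$, and the $s$ teams can run entirely in parallel.

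Next, I would exploit the fact that inside team $i$, any two team members can effectively exchange $\Theta(s \log n)$ bits per round, rather than only the $\Theta(\log n)$ carried by the direct link, by using the other $s-1$ teams as relays: each outside vertex forwards $O(\log n)$ bits per team-pair per round, for a total load of $O(n\log n)$ bits per vertex per round, which \Cref{lemma:routing} implements in $O(1)$ rounds of overhead. The sub-problem then becomes: multiply two $k \times k$ matrices on $m$ vertices with pairwise bandwidth $b \eqdef s$ (in units of $\log n$). If this can be solved in $O(k^2/(m^{2-\rho} b))$ rounds, then all $s$ teams running concurrently finish within the same bound; substituting $m = n/s$, $b = \Theta(s)$ yields $O(n^{\rho-2}\, k^2\, s^{1-\rho})$, as claimed.

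The final, and hardest, step is this sub-problem. I would adapt the \CC algorithm of \cite{Censor-HillelKK19}, which multiplies $p \times p$ matrices on $p$ vertices of unit bandwidth in $O(p^\rho)$ rounds by recursively unfolding a bilinear algorithm of exponent $\omega$ and routing rank-1 components via Lenzen's scheme. Two adaptations are needed: (i) a bandwidth-packing reduction giving a direct $1/b$ factor, and (ii) a decoupling of the matrix dimension $k$ from the vertex count $m$, achieved by having each team vertex carry its initial $k^2/m$-entry block intact through the whole recursion, so that every original CKK-level message now carries a factor of $k^2/m^2$ more data per pairwise channel. The cost rescales from $O(m^\rho)$ to $O(m^\rho \cdot k^2/m^2) = O(k^2/m^{2-\rho})$ with unit bandwidth, and then to $O(k^2/(m^{2-\rho} b))$ after bandwidth packing. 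The principal obstacle is to verify that Lenzen's routing continues to apply at every recursion level once the messages are $k^2/m^2$ times larger, and that the recursion remains balanced across all three decoupled parameters $k$, $m$, $b$; this should go through because the per-vertex load in the original CKK analysis already fits comfortably within the $O(n\log n)$ envelope of \Cref{lemma:routing}, so one only needs to rescale the constants appropriately.
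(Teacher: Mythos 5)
Your proposal follows essentially the same route as the paper: partition the clique into $s$ teams of $m=n/s$ vertices keyed to the first label coordinate, boost the effective intra-team bandwidth to $\Theta(s\log n)$ bits per pair via Lenzen-style rerouting through the rest of the clique (each vertex's per-round load stays at $O(n\log n)$ bits, as you check), and then reduce to the single-product subproblem of multiplying $k\times k$ matrices on $m$ vertices with bandwidth $b=s$. The paper packages that subproblem as \Cref{PROP:PART2}, obtaining $\BO{(n')^{\rho}(R/n')^2 F/B}$ rounds by replaying the CKK scheme with the message sizes scaled by $(R/d)^2$ and the bandwidth divided out; your sketched scaling $O(m^\rho\cdot k^2/m^2\cdot 1/b)$ is exactly this bound with $n'=m$, $R=k$, $B=b$, $F=1$, and the substitution $m=n/s$, $b=s$ gives the claimed $\BO{n^{\rho-2}k^2 s^{1-\rho}}$ identically. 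The only difference is cosmetic: you explicitly frame the extra bandwidth as relaying through other teams, while the paper simply observes the total per-vertex load fits Lenzen's envelope.
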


To prove the theorem, we first partition the $n$ nodes into $s$ sets of size $n/s$ each. For every $i\in[s]$ we call the $i$-th set in the partition the $i$-th \emph{team}. The $i$-th team is responsible for computing the $i$-th product in $\QQ$, i.e., the product $(S_i,T_i)$.
After partitioning into teams, the problem boils down to computing one product of square matrices of size $k$ using $n/s$ node with bandwidth of size $s\log n$.
This extends \cite{Censor-HillelKK19}, in which only the product of square matrices of size equal to the number of vertices is considered, and uses \cite{le2016further}, in which multiple products are divided into teams.
A crucial step in the algorithm for \Cref{thm2:mm runtime} is to compute the product of a single matrix of size $R$ using $n'$ nodes, which we define next.

\begin{restatable}[$\SP\brak{n',R}$]{definition}{DefProd}
    \label{def:problem}
    Let $H$ be a team with $n'$ vertices.
    Let $S,T$ be two square matrices of dimension $R$ for some $R\in[1,(n')^2]$, and define $P=ST$.
    Each vertex $v\in H$ is assigned a label $\ell^\prime(v)=xy$, where $x,y\in[\sqrt{n'}]$.
    The input of each vertex $v\in H$ with label $\ell^\prime(v)=xy$ is $S[*x*,*y*]$ and $T[*x*,*y*]$, and its output should be $P[*x*,*y*]$.
    We denote this problem by $\mathsf{Product}\brak{n',R}$.
\end{restatable}

\begin{restatable}[]{proposition}{SingleFMM}\label{PROP:PART2}
    The $\SP\brak{n',R}$ problem can be solved in the \clique model with $n'$ vertices in the base graph and bandwidth $B$, in $\BO{(n')^{\rho}\cdot (R/n')^2\cdot \frac{F}{B}}$ rounds, where each entry in $R$ can be represented using $\BO{F}$ bits.
    Using bandwidth $B$ means that in each round, each vertex in the base graph can send $B$ bits to every other vertex.
\end{restatable}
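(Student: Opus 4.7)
The plan is to lift the matrix multiplication algorithm of \cite{Censor-HillelKK19} from its special case ($R = n'$, $F = B = \log n$, giving $\BO{(n')^\rho}$ rounds) to the general setting of this proposition. Recall that the algorithm of \cite{Censor-HillelKK19} exploits a bilinear algorithm for $k\times k$ matrix multiplication with $T = k^\omega$ products, and routes data across the $n'$ vertices so that the total number of field-element transmissions per pair of vertices is $\BO{(n')^\rho}$. In the base case, the $\BO{(n')^\rho}$ round bound comes from the bandwidth of $\log n$ bits per pair per round, which amounts to moving one field element per round per link.

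First, I would handle the bandwidth parameter $B$ and entry size $F$. In the base case (where $R=n'$), each transmission moves one $\log n$-bit field element; with bandwidth $B$ and $F$-bit entries, each pair of vertices can move $B/F$ entries per round, so the algorithm completes in $\BO{(n')^\rho \cdot F/B}$ rounds. This step is a routine simulation overhead and is the easy part of the proof.

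Next, I would handle the matrix dimension $R$. For $R \ge n'$, each vertex holds an $(R/\sqrt{n'}) \times (R/\sqrt{n'})$ submatrix of each input, so the per-vertex load is $R^2/n'$ field elements---a factor of $(R/n')^2$ larger than the base-case load of $n'$ entries per vertex. The key claim to verify is that the communication cost per link in the \cite{Censor-HillelKK19} algorithm scales \emph{linearly} with the per-vertex load. Concretely, I would re-examine the routing pattern of \cite{Censor-HillelKK19} and check that each atomic communication step in the base case corresponds, under the block interpretation coming from \Cref{def:problem}, to sending $(R/n')^2$ field elements per link. This would yield $\BO{(n')^\rho \cdot (R/n')^2}$ field-element transmissions per link in total, which combined with the bandwidth scaling gives the stated $\BO{(n')^\rho \cdot (R/n')^2 \cdot F/B}$ rounds.

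The main obstacle is justifying this linear scaling. Naively, multiplying $R \times R$ matrices requires $\BO{R^\omega}$ arithmetic operations, which might suggest a super-quadratic $(R/n')^\omega$ growth in communication. The crucial point is that the extra arithmetic incurred by larger blocks can be absorbed \emph{locally} at each vertex: under the routing of \cite{Censor-HillelKK19}, the bilinear algorithm's multiplications pair operand blocks that end up co-located, so only $\BO{(R/n')^2}$ entries per link are needed to cover the non-local accesses. Formalizing this is the technical core of the argument and requires a careful block-level reinterpretation of each routing step, propagating the block sizes through the recursion tree of \cite{Censor-HillelKK19}. For the remaining regime $R < n'$, a symmetric argument applies: the abstract matrix is smaller than the natural block grid, so several vertices share responsibility for each entry and one either duplicates local work or aggregates partial results, adding only constant-factor overhead.
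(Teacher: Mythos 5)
Your plan follows the paper's approach — adapt \cite{Censor-HillelKK19} verbatim, letting the per-vertex data load scale with $R$ and letting the bandwidth parameter scale the per-round throughput — and the final answer is correct. However, the central step, which you flag as "the technical core" and leave as a sketch, is actually more immediate than your worry about $(R/n')^\omega$ suggests, and your resolution via "propagating the block sizes through the recursion tree of \cite{Censor-HillelKK19}" mischaracterizes the structure. In that algorithm there is no distributed recursion: one fixes a bilinear algorithm for $d\times d$ products and chooses $d$ so that its number of scalar multiplications $m(d)$ equals $n'$ (so $d = \Theta((n')^{1/\sigma})$, independent of $R$). The $R\times R$ matrices are then partitioned into a $d\times d$ grid of $(R/d)\times(R/d)$ blocks, the bilinear algorithm is applied \emph{once} at that level, and each node $w$ is handed the full linear-combination matrices $\hat{S}^{(w)},\hat{T}^{(w)}$ of size $(R/d)\times(R/d)$ to multiply locally. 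The communication is thus literally just shipping blocks: $O((R/d)^2)$ entries sent and received per node, which over $n'$ links per round gives $(R/d)^2/n' = (n')^\rho (R/n')^2$ per link, and the $F/B$ bandwidth factor on top. The super-quadratic arithmetic of the local $(R/d)\times(R/d)$ products never touches the network, so the "$(R/n')^\omega$ worry" simply does not arise — there is no recursion tree to propagate anything through. Your separate $R<n'$ case is also unnecessary; the same counting applies for all $R\in[1,(n')^2]$.
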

The proof for \Cref{PROP:PART2} (\Cref{appendix:FMM}) is derived by minor modifications to the proof of {\cite[Theorem 1]{Censor-HillelKK19}} of multiplying $n\times n$ matrices.
We are now ready to prove \Cref{thm2:mm runtime}.
\begin{proof}[Proof of \Cref{thm2:mm runtime}]
    We use \Cref{PROP:PART2} by setting $n'$ to be the size of each team, i.e., $n'=n/s$. We set $R$ to $k$. Since the base matrix is a boolean matrix, we can set $F=1$.
    We get that after step one, each team can compute the square of the matrix it is responsible for in $\BO{(n')^{\rho-2}\cdot  R^2/B}$ rounds.

    We next explain how to further improve the running time of the second step.
    In the algorithm presented in \Cref{PROP:PART2}, communication between vertices is limited to team members.
    Consequently, each vertex transmits only $\BO{n'}$ messages per round of size $\log n'$ each, as it behaves as if it is ``unaware'' that it is part of a larger set of vertices.
    However, by utilizing the bandwidth between the teams, each vertex can send more messages or equivalently send larger messages.
    That is, we can have each vertex send $\BO{n'}$ messages, each of $\BO{(n/n')\log n}$ bits.
    From \Cref{PROP:PART2}, we get that increasing the bandwidth speeds up the computation by a $n/n'=s$ factor. In other words, we can set $B=s$.
    Specifically, each team completes its computation in
    $
        \BO{(n')^{\rho-2}\cdot  R^2/B}=
        \BO{(\frac{n}{s})^{\rho-2}\cdot  k^2/s}=
        \BO{n^{\rho-2}\cdot  k^2\cdot s^{1-\rho}}$
    rounds, as needed.
\end{proof}

\subsection{Multiple Products of Random Submatrices}
In this subsection, we explain how to use the tools we developed in the previous subsection, to detect an $h$-cycle in $s$ induced subgraphs sampled uniformly at random. Specifically, we explain how to compute the $h$-th power of the adjacency matrices of those subgraphs.

Let $\UU=(U_1,\ldots, U_s)$ be a set of subsets of vertices, where each subset is a uniformly random set. That is, each vertex joins to the set $U_i$ independently and uniformly at random, with probability $p$.
For each $i\in[s]$, we denote the the adjacency matrix of $G[U_i]$ by $A_i$, and define $\QQ=\set{(A_i,A_i)}_{i\in [s]}$.

We explain how to compute the $h$-th power of $\set{A_i}_{i\in[s]}$ in parallel by reducing this problem into the $\mathsf{Product}(\QQ)$ problem.
In other words, we explain how to redistribute the initial input, into an input for the $\mathsf{Product}(\QQ)$ problem.
We show that the reduction takes $\BO{\log n}$ rounds (\Cref{prop:part1}) \whp, and provide an algorithm that tests whether the reduction algorithm can be executed in $\BO{\log n}$ or not (\Cref{claim:no-balance}). The testing algorithm takes $\BO{1}$ rounds.
In case the testing algorithm indicates that we sampled a set $\UU$ for which the reduction takes more than $\BO{\log n}$ rounds, we discard the current sample set $\UU$, and sample a new set. We also prove that \whp we will not have to discard the sampled set (\Cref{claim:balance whp}).

In \Cref{ssec:MM sk}, we described an algorithm to compute the product of $s$ pairs of matrices of size $k$ each. Here, we describe an algorithm to compute the product of $1/p^a$ pairs of matrices of size \emph{at most} $4np$  each. The connection between the parameters $s,k, p$ and $a$ is as follows.
We set $k=4np$, and $s=1/p^a$ where $a\in[0,2]$.
We get that $sk^2\leq n^2$ as desired.

We provide a definition for a set $\UU$ for which we can redistribute the input in $\BO{\log n}$ rounds. We call such a set a \emph{$p$-balanced} set.

\begin{definition}[$p$-Balanced Set]
    \label{def:p-balanced}
    Given is a parameter $p$. Let $\UU$ be a set of subsets of vertices from $V(G)$.
    Let $a$ be a constant for which $\abs{\UU}=p^{-a}$.
    We say that $\UU$ is a \emph{$p$-balanced set} if all the following conditions hold:
    \begin{enumerate}
        \item $a\in[0,2]$ (so $\abs{\UU}\leq (1/p)^2$).
        \item $n^{-1/2}\leq p\leq 1$.
        \item Every vertex $v\in V(G)$ belongs to at most $\ceil*{\abs{\UU}\cdot p}4\log n=\ceil*{p^{1-a}}4\log n$ sets in $\UU$.
        \item Every set $U\in\UU$ is of size at most $\ceil*{4np}$.
    \end{enumerate}
    Note that $(1)$ and $(2)$ imply that $\abs{\UU}\leq n$.
\end{definition}
The next claim proves that if $\UU=(U_1,\ldots, U_{p^{-a}})$ is a $p$-balanced set, then every vertex $v$ can learn the IDs of all vertices in $U_j$ for each set $U_j$ to which $v$ belongs.
\begin{claim}\label{prop2:prob-ind}
    Let $\UU=(U_1,\ldots, U_{p^{-a}})$ be a $p$-balanced set,
    where every vertex knows to which $U_j$ it belongs.
    There is an $\BO{\log{n}}$-round \clique algorithm that allows each vertex to learn the IDs of all vertices in $U_j$ for each set $U_j$ to which $v$ belongs.
\end{claim}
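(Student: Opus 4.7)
The plan is a three-phase scheme built around dedicated representatives. Conditions (1) and (2) of \Cref{def:p-balanced} guarantee $|\UU|=p^{-a}\leq n$, so we designate vertex $j\in[|\UU|]$ as the \emph{representative} of $U_j$. We first aggregate each member-list $L_j=U_j$ at its representative, and then disseminate $L_j$ within $U_j$ by a hypercube-style doubling scheme. Conditions (3) and (4) control the per-vertex loads so that Lenzen's routing lemma (\Cref{lemma:routing}) applies at every step.

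\emph{Phase 1 (Gather).} Each vertex $v$ sends its own ID, tagged with $j$, to vertex $j$ for every $U_j\ni v$. By condition (3), the send-load of $v$ is at most $\lceil p^{1-a}\rceil\cdot 4\log n=\BO{n^{1/2}\log n}$ messages (using $p\geq n^{-1/2}$ and $a\leq 2$), and by condition (4) the representative $j$ receives at most $\lceil 4np\rceil=\BO{n}$ messages. Both are $\BO{n}$ per vertex, so Lenzen's routing completes this phase in $\BO{1}$ rounds, after which vertex $j$ knows $L_j$.

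\emph{Phase 2 (Publish partners).} Vertex $j$ orders $L_j=(v_1,\dots,v_m)$ (assume $m=|U_j|$ is a power of $2$; otherwise pad with fewer than $m$ dummy slots) and sends to each $v_i$ the IDs of its \emph{hypercube partners} $v_{i\oplus 2^r}$ for $r=0,\dots,\log m$. This amounts to $\BO{\log n}$ IDs per member, so the send-load at $j$ is $\BO{|U_j|\log n}=\BO{np\log n}$ and the receive-load at any $v$ is $\BO{|J_v|\log n}=\BO{p^{1-a}\log^2 n}$, where $J_v=\{j:v\in U_j\}$. Both bounds are handled in $\BO{\log n}$ rounds of Lenzen's routing.

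\emph{Phase 3 (Hypercube doubling).} For each $U_j$ we run $\log|U_j|$ super-rounds; in super-round $r$, every $v_i\in U_j$ exchanges its currently known sublist of $L_j$ (of size $2^{r-1}$) with partner $v_{i\oplus 2^{r-1}}$, so after all super-rounds every member knows $L_j$. The main obstacle is bounding the cumulative cost across all super-rounds, since the per-set load doubles each round. At super-round $r$ the maximum per-vertex send-load is at most $2^{r-1}|J_v|$ IDs, which Lenzen's routing dispatches in $\lceil 2^{r-1}|J_v|/n\rceil$ rounds. Summing over $r=1,\dots,\log(4np)$, the ceiling contributes $\BO{\log n}$, and the geometric series is dominated by its last term, giving $\BO{np\cdot|J_v|/n}=\BO{p\cdot p^{1-a}\log n}=\BO{p^{2-a}\log n}=\BO{\log n}$, where the final step uses $p\leq 1$ and $a\leq 2$. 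Adding the three phases yields the claimed $\BO{\log n}$ bound.
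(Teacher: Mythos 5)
Your proof is correct, but it takes a genuinely different route from the paper's. The paper partitions the $n$ vertices into $s=p^{-a}$ \emph{teams} of $n/s$ vertices each (the same teams used later for the matrix multiplications). Every vertex $u\in U_i$ sends an Ack to \emph{all} $n/s$ members of team $T_i$, so that every team member learns $U_i$; team member $(i,j)$ then locally extracts an $\BO{sp}$-size chunk $S_i^j\subseteq U_i$ (the chunking is determined canonically from labels, with no extra communication) and broadcasts that chunk to all of $U_i$. Thus the dissemination work is sharded across the $n/s$ team members in a single Lenzen round, giving the $\BO{\log n}$ bound directly. Your scheme instead elects a single representative vertex $j$ for each $U_j$, gathers the entire list $L_j$ there, and disseminates it via a hypercube all-gather whose partner assignments the representative distributes in Phase~2. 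Both are correct, and both reduce to Lenzen-routable load bounds from conditions (3) and (4) of the $p$-balanced definition. What the paper's team-based approach buys is that it reuses the team structure already needed for $\mathsf{Product}(\QQ)$ and avoids the power-of-two padding and per-super-round ceiling bookkeeping of the hypercube; what your approach buys is that it is self-contained (it does not depend on the team partition existing at all) and makes the $\BO{\log n}$ cost appear transparently as $\log(4np)$ doubling super-rounds plus a geometric-sum tail. One small point worth tightening: for the dummy-slot padding to give a correct all-gather you should specify that a real vertex also plays the role of each dummy slot (e.g., $v_{m+i}$ is virtualized by $v_i$), which at most doubles per-vertex load and so does not affect your bounds; as written, purely "empty" dummy slots would silently drop data in the exchange.
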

\begin{proof}
    Let $s=\abs{\UU}=1/p^a$.
    Recall that $V=[n]$, and define a label $L(v)\triangleq(i,j)$ for $i\in[s],j\in[n/s]$.
    We call the set of vertices with label $(i,*)$ the $i$-th team, and denote it by $T_i$. There are $s$ teams, each of size $n/s$.
    \begin{enumerate}
        \item Each vertex sends its label to all other vertices.
        \item For every $i\in[s]$, each vertex $u\in U_i$ sends an Ack to every vertex $v\in T_i$.
        \item For every $i\in[s]$, each vertex $v\in T_i$ locally partitions $U_i$ into $n/s$ parts $(S_i^1,\dots,S_i^{n/s})$, of equal size of $\abs{U_i}/(n/s)= \BO{sp}$.
              The partition is determined by the labels, and therefore can be done consistently without further communication.
        \item For every $i$, each vertex $v\in T_i$ with the label $(i,j)$ sends the set $S_i^j$ to all vertices in $U_i$.
    \end{enumerate}

    ~\\\textbf{Analysis.}
    The first step takes a single round.
    The second step takes $\BO{\log{n}}$ rounds, as follows.
    A vertex $u$ has to send $\BO{np}$ messages for every index $i$ for which $u\in U_i$.
    Since $\UU$ is $p$-balanced, there are at most $\ceil*{p^{1-a}}4\log n$ such sets.
    Therefore, each vertex has to send at most $\BO{n\log n\cdot \ceil*{p^{2-a}}}$ messages which is at most $\BO{n\log n}$ messages as $\UU$ is $p$-balanced.
    Moreover, each vertex receives at most $\BO{np}$ messages, as $\abs{U_i}= \BO{np}$.
    The third step is local, and in the last step, each vertex sends a block of $\BO{sp}$ messages to $\BO{np}$ vertices. This means no vertex sends more than $\BO{n\cdot s\cdot p^2}=\BO{n}$ messages, where we used the fact that $sp^2\leq 1$, which follows since $s= 1/p^a$, and $a\leq 2$.
    On the receiving side, each vertex receives at most $\ceil*{p^{1-a}}4\log n$ blocks of messages, where each block is of size $\BO{sp}$, which means it receives at most $\BO{s\ceil*{p^{2-a}}\log n}\leq \BO{n\log n}$ messages, as required.
    We showed that each vertex sends and receives $\BO{n\log n}$ bits. By \Cref{lemma:routing}, this routing task can be completed within $\BO{\log n}$ rounds.
\end{proof}

In what follows, we explain how to route the input of a $p$-balanced set $\UU$, after each vertex learns the IDs of all vertices in $U_j$ for each set $U_j$ to which $v$ belongs, to match the input of the $\mathsf{Product}(\QQ)$ problem. This routing takes $\BO{\log n}$ rounds.
Before providing a routing algorithm, we introduce new notation that we need in order to explain how the input is routed.

\begin{definition}[The notation {$A[i,U_j]$}]
    Recall that $V=[n]$,
    and let $U_j\subset V$, and let $i$ be some vertex in $U_j$.
    Let $A_j$ be the corresponding adjacency matrix of $U_j$.
    For vertex $i$ in the set $U_j$ we define $A[i,U_j]$ as the submatrix of $A$, which contains only the $i$-th row, and all $k$ columns, for $k\in U_j$.
\end{definition}

\begin{proposition}[Redistributing the Input]\label[proposition]{prop:part1}
    Given a parameter $p$, let $\UU\triangleq(U_1,\dots, U_{p^{-a}})$
    be a set of subsets of vertices from $V(G)$ which is a $p$-balanced set.
    For each $i \in [p^{-a}]$, let $A_i$ be the adjacency matrix of the induced graph $G[U_i]$.
    Label each vertex $v\in [n]$
    as
    $\begin{aligned}
            \ell(v)\triangleq(x,y,i)\in[\sqrt{np^a}]\times[\sqrt{np^a}]\times[p^{-a}]\;.
        \end{aligned}$
    Partition the vertices into ${p^{-a}}$ teams, each of size $n\cdot p^a$, where the $j$-th team contains all vertices $v$ with label $\ell(v)=(x,y,i)$ such that $i=j$.
    Then, in parallel, each vertex $v$ with label $\ell(v)=(x,y,i)$ can learn $A_i[*x*,*y*]$ in $\BO{\log n}$ rounds.
\end{proposition}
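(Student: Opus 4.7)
My plan is to realize the redistribution by one application of Lenzen's routing (\Cref{lemma:routing}), preceded by the $\BO{\log n}$-round setup of \Cref{prop2:prob-ind} that places the right information at the right vertex.

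Invoking \Cref{prop2:prob-ind} first, every vertex $u$ learns, within $\BO{\log n}$ rounds, the full membership of each set $U_i$ that contains it. Since $A_i$ is just the adjacency matrix of $G[U_i]$, vertex $u$ then knows the entire row $A_i[u,U_i]$ for every such $i$: the row entries are simply the adjacencies of $u$ to the other vertices of $U_i$, which $u$ already holds. Thus every row of every $A_i$ initially resides at its ``owner'' vertex and only needs to be shipped to the correct destination.

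To specify destinations, I fix a canonical local indexing of each $U_i$ derived deterministically from the global IDs (padding with $\lceil 4np\rceil - |U_i|$ all-zero rows/columns if needed), and split each local index $v\in[\lceil 4np\rceil]$ as $v=v_1 v_2 v_3$ with $v_2\in[\sqrt{np^a}]$, exactly as in the definition of $A_i[*x*,*y*]$. Because this splitting depends only on IDs, every vertex that knows $U_i$ derives it identically, with no further communication. The routing is then the obvious one: for each $i\in[p^{-a}]$, each $u\in U_i$ partitions $A_i[u,U_i]$ into $\sqrt{np^a}$ blocks according to the target column's second coordinate $y$, and sends the $y$-th block to the unique vertex of team $i$ with label $(x(u), y, i)$, where $x(u)$ is the second coordinate of $u$'s own local index in $U_i$. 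The vertex with label $(x,y,i)$ then receives exactly $A_i[*x*,*y*]$.

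The main verification is that the send and receive loads fit \Cref{lemma:routing}. By clause~(3) of \Cref{def:p-balanced}, each $u$ lies in at most $\lceil p^{1-a}\rceil\cdot 4\log n$ sets $U_i$ and transmits $|U_i|=\BO{np}$ field elements for each such $i$, so its total outgoing traffic is
\[
\BO{p^{1-a}\log n\cdot np\cdot \log n}=\BO{n\log^2 n\cdot p^{2-a}}=\BO{n\log^2 n},
\]
using $a\le 2$ and $p\le 1$ from clauses~(1)--(2). On the receive side, vertex $(x,y,i)$ collects a single submatrix of dimension $\lceil 4np\rceil/\sqrt{np^a}$, hence $\BO{np^{2-a}}=\BO{n}$ entries, i.e., $\BO{n\log n}$ bits, by clause~(4). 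Both loads therefore fit \Cref{lemma:routing} within $\BO{\log n}$ rounds, completing the proof. The delicate part I expect is precisely this simultaneous worst-case accounting against all four clauses of \Cref{def:p-balanced}: clause~(1) is what keeps $p^{2-a}$ bounded on the send side, and clause~(4) is what prevents any single destination from being swamped. Once both checks pass, the remainder is a textbook invocation of Lenzen's routing and no probabilistic argument is needed, since $p$-balancedness is assumed on input.
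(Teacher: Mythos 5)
Your proof is correct and follows essentially the same approach as the paper's: you realize the redistribution as a single Lenzen routing call, you bound the send load against clause~(3) of the $p$-balanced definition ($\lceil p^{1-a}\rceil\cdot 4\log n$ sets, $\BO{np}$ entries each) and the receive load against the target submatrix size $\BO{np^{2-a}}$, and you use $a\le 2$ to collapse both to $\BO{n\,\mathrm{polylog}\,n}$. The one place you are slightly more explicit than the paper is in front-loading the invocation of \Cref{prop2:prob-ind} so that each $u\in U_i$ holds the row $A_i[u,U_i]$ before routing, and in spelling out the deterministic ID-based indexing with padding to $\lceil 4np\rceil$; the paper handles the former in a footnote and treats the latter as implicit, so this is a fair reconstruction rather than a new idea.
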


\begin{proof}
    Recall that because $\UU$ is a $p$-balanced set, each vertex knows its own label. Note that in $\BO{1}$ rounds, each vertex can learn the labels of all other vertices, as each label can be encoded using $\BO{\log n}$ bits.
    We show that no vertex needs to send or receive more than $\BO{n \log n}$ messages.
    By \Cref{lemma:routing}, this implies there is an $\BO{\log n}$-round algorithm for this routing problem.
    
    ~\\\textbf{Received messages.}
    Each vertex $v$ with label $(x,y,i)$ needs to learn $A_i[*x*,*y*]$. This submatrix has $\BO{np^{2-a}}$ entries:
    $A_i$ has $\ceil{4np}\cdot \ceil{4np}$ entries, and
    $x,y\in[\sqrt{np^a}]$, so $A_i[*x*,*y*]$ has $(\ceil{4np}/\sqrt{np^a})\cdot (\ceil{4np} /\sqrt{np^a})$ entries.
    In other words, in each round every vertex receives at most $\BO{np^{2-a}}= \BO{n}$ messages, as $\UU$ is $p$-balanced and therefore $a\leq 2$.

    ~\\\textbf{Sent messages.}
    Fix some vertex $u$, and a some index $i$. Let $T_i$ denote the $i$-th team.
    If $u$ is not in $U_i$, then it does not send any message to any vertex in $T_i$. Otherwise, $u$ sends
    $A[u,U_i]=A_i[u,*]$ to $T_i$, as follows.\footnote{The vertex $u$ learned the IDs of all other vertices in $U_i$, and therefore holds the entries of $A_i[u,*]$.}
    We encode $u$ as $(z_1,x,z_2)$ where $x\in [\sqrt{np^a}]$ and $z_1,z_2\in [\sqrt[4]{np^a}]$ (note that this is different from $u$'s label).
    For each vertex $v\in T_i$, with label $\ell(v)=(x,y,i)$,
    $u$ sends $A_i[u,*y*]$ to $v$.
    We emphasize that exactly one vertex from the $i$-th team receives the entry $A_i[u,z]$ for $z\in\abs{U_i}$. However, some vertices can receive multiple entries, and some get no entries from $u$.

    Let $M_u(T_i)$ denote the number of messages sent from $u$ to any vertex in $T_i$.
    Note that each entry from $u$'s input is sent to exactly one vertex from the $i$-th team, and therefore $M_u(T_i)\leq \abs{U_i}=\BO{np}$.
    Because $\UU$ is $p$-balanced, the vertex $u$ participates in at most
    $\max\set{p^{1-a},1}\cdot 4\log n$ sets from $\UU$, and for each such set $U_j$ we have $M_u(T_j)\leq \BO{np}$.
    Overall, it sends at most
    $\BO{np\cdot p^{1-a}4\log n} = \BO{np^{2-a}\log n}$
    messages, which is at most $\BO{n\log n}$ as $\UU$ is $p$-balanced and therefore $a\leq 2$.
\end{proof}

The next corollary address the detection of an $h$-cycle in one of the sampled graphs. It follows from \Cref{prop:part1,thm2:mm runtime}. The algorithmic aspects of this corollary are presented in \Cref{sec:h cycle}.
\begin{corollary}\label{thm:p balanced MM}
    Given $r$ and $p$, let $\UU=(U_1,\dots,U_r)$ be a set of subsets of vertices from $V(G)$, where $\UU$ is a $p$-balanced set, and
    every vertex in $G$ knows whether it belongs to $U_j$ for every $j\in[s]$.
    For each $i \in [p^{-a}]$, let $A_i$ be the adjacency matrix of the induced graph $G[U_i]$.
    Label each vertex $v\in [n]$ as
    $\begin{aligned}
            \ell(v)\triangleq(x,y,i)\in[\sqrt{np^a}]\times[\sqrt{np^a}]\times[p^{-a}]\;.
        \end{aligned}$
    Then, for every integer $h$, in parallel, each vertex $v$ with label $\ell(v)=(x,y,i)$ can learn $(A_i)^h[*x*,*y*]$ in $\BO{\log (h)\cdot \nr\cdot p^{2+a(\rho-1)}+\log  n}$ rounds.
\end{corollary}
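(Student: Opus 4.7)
The plan is to chain \Cref{prop:part1} with \Cref{thm2:mm runtime}, then iterate via repeated squaring. First, invoke \Cref{prop:part1} to route the adjacency matrices: after $\BO{\log n}$ rounds, each vertex $v$ with label $\ell(v)=(x,y,i)$ holds the submatrix $A_i[*x*,*y*]$. The hypothesis that $\UU$ is $p$-balanced is exactly the hypothesis needed.

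Next, observe that this routed distribution precisely matches the input distribution of the $\mathsf{Product}(\QQ)$ problem of \Cref{def2:product} for $\QQ=\set{(A_i,A_i)}_{i\in[p^{-a}]}$, taking $s=p^{-a}$ products of square matrices of size $k=\ceil*{4np}$ each. The prerequisites of \Cref{thm2:mm runtime} are satisfied (up to constants absorbed by $\BO{\cdot}$): $p\in[n^{-1/2},1]$ gives $k\in[\Omega(\sqrt{n}),\BO{n}]$, and $a\leq 2$ gives $sk^2=\BO{n^2}$. A single application of \Cref{thm2:mm runtime} therefore computes all squares $A_i^2[*x*,*y*]$ in
\[
\BO{n^{\rho-2}\cdot k^2 \cdot s^{1-\rho}}
= \BO{n^{\rho-2}\cdot (np)^2\cdot p^{-a(1-\rho)}}
= \BO{n^{\rho}\cdot p^{2+a(\rho-1)}}
\]
rounds.

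The critical observation for the repeated squaring is that the \emph{output} layout of $\mathsf{Product}$ is identical to its \emph{input} layout: after one invocation, the vertex labeled $(x,y,i)$ holds $A_i^2[*x*,*y*]$, which is exactly what it needs to hold of the operand for the next squaring. Hence $\lceil\log h\rceil$ successive invocations (with one extra product absorbed in the bound when $h$ is not a power of two) yield $A_i^h[*x*,*y*]$ at the correct vertex. Summing the routing and the multiplications gives the claimed
\[
\BO{\log n + \log(h)\cdot n^{\rho}\cdot p^{2+a(\rho-1)}}
\]
round complexity.

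The main obstacle is bookkeeping rather than a conceptual difficulty: verifying that the parameters $(k,s)=(\ceil*{4np},p^{-a})$ fall within the regime of \Cref{thm2:mm runtime} (which they do by the $p$-balanced conditions), and that the input/output layouts compose cleanly across iterations (which they do by \Cref{def2:product}). No further ideas beyond the two cited results are required.
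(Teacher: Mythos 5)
Your proposal is correct and follows the paper's proof essentially verbatim: route via \Cref{prop:part1} in $\BO{\log n}$ rounds, then repeatedly square with \Cref{thm2:mm runtime}, exploiting that the output layout of $\mathsf{Product}$ matches its input layout. The only minor imprecision is your parenthetical ``one extra product'' when $h$ is not a power of two --- the binary decomposition of $h$ may require up to $\log h$ additional multiplications, not just one --- but this is still $\BO{\log h}$ total and does not affect the stated bound.
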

\begin{proof}
    Using \Cref{prop:part1}, each vertex $v$ with label $\ell(v)=(x,y,i)$ can learn $A_i[*x*,*y*]$ in $\BO{\log n}$ rounds.
    Then, we can apply \Cref{thm2:mm runtime}, with $S_i=T_i=A_i$ for $i\in[r]$, so that each vertex $v$ with label $\ell(v)=(x,y,i)$ can learn $(A_i)^2[*x*,*y*]$ in $\BO{\nr\cdot p^{2+a(\rho-1)}}$ rounds.
    We can then reapply \Cref{thm2:mm runtime}, with $S_i=T_i=(A_i)^2$ for $i\in[r]$, so that each vertex $v$ with label $\ell(v)=(x,y,i)$ can learn $(A_i)^4[*x*,*y*]$ in $\BO{\nr\cdot p^{2+a(\rho-1)}}$ rounds.
    Using additional $\BO{\log h\cdot \nr\cdot p^{2+a(\rho-1)}}$ rounds, each vertex $v$ can learn $(A_i)^{2^\ell}[*x*,*y*]$, for $\ell\in[\log(h)]$. Afterwards, in additional $\BO{\log h\cdot \nr\cdot p^{2+a(\rho-1)}}$  rounds, each vertex can learn $(A_i)^{\ell}[*x*,*y*]$, for $\ell\in[h]$.
\end{proof}

The remainder of this subsection shows that a set $\UU$ of uniformly random subsets of vertices is a $p$-balanced set \whp.
We also provide an algorithm to test whether a set of subsets of vertices is a $p$-balanced set in a constant number of rounds.
We create $s\triangleq p^{-a}$ subsets of vertices, by letting each vertex join each set independently with probability $p$ (each vertex knows $p$ and $a$). We denote the sets by $\UU=(U_1,\dots, U_s)$, and show that $\UU$ is $p$-balanced \whp, and that the vertices can determine whether this is the case in $\BO{1}$ rounds.
\begin{claim}\label{claim:balance whp}
    $\UU$ is balanced with probability at least $1-2/n^3$.
\end{claim}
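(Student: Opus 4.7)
The plan is to verify the four conditions of being a $p$-balanced set separately. Conditions $(1)$ and $(2)$ are deterministic and hold by the assumptions on $p$ and $a$, so the probabilistic content is confined to $(3)$ (each vertex lies in few sets) and $(4)$ (each set is small). I will bound the probability that either fails by at most $1/n^3$ and combine via a union bound.

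For condition $(4)$, fix $i\in[s]$. Then $\abs{U_i}$ is binomially distributed with mean $np\geq\sqrt{n}$, using $p\geq n^{-1/2}$. Since $\ceil*{4np}\geq 2np$, applying \Cref{thm:chernoff} with $\delta=1$ yields
\[
\Pr{\abs{U_i}>\ceil*{4np}}\leq \Pr{\abs{U_i}>2np}\leq 2e^{-np/3}\leq 2e^{-\sqrt{n}/3}.
\]
A union bound over the at most $n$ sets in $\UU$ keeps the total failure probability for $(4)$ well below $1/n^3$.

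For condition $(3)$, fix a vertex $v$ and let $X_v$ count how many sets of $\UU$ contain $v$. By the independence of the sampling, $X_v\sim\mathrm{Binomial}\brak{s,p}$ with mean $sp=p^{1-a}$. Setting $k=\ceil*{sp}\cdot 4\log n$, so that $k\geq \max\set{sp,1}\cdot 4\log n$, I will use the direct moment bound
\[
\Pr{X_v\geq k}\leq \binom{s}{k}p^k \leq \brak{\frac{esp}{k}}^{k} \leq \brak{\frac{e}{4\log n}}^{4\log n},
\]
which is superpolynomially small in $n$. A union bound over the $n$ vertices then finishes. The small subtlety is that \Cref{thm:chernoff} as stated only covers $\delta\in[0,1]$, while condition $(3)$ may demand a $\Theta(\log n)$-factor deviation above the mean (particularly when $sp<1$ and the mean is subconstant); the moment bound above handles both the $sp\geq 1$ and $sp<1$ regimes uniformly, so I do not anticipate any real obstacle beyond this. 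Adding the failure probabilities of $(3)$ and $(4)$ gives the claimed bound of $2/n^3$.
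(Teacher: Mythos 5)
Your proof is correct and follows the same overall structure as the paper's: conditions $(1)$ and $(2)$ are deterministic, and conditions $(3)$ and $(4)$ are each handled by a concentration inequality plus a union bound. The one genuine divergence is the tool you use for condition $(3)$: the paper applies a Chernoff bound (\Cref{thm:chernoff}) directly to $X_v$, whereas you instead use the elementary union bound over $k$-subsets of trials, $\Pr{X_v\geq k}\leq\binom{s}{k}p^k\leq(esp/k)^k$. Your choice is actually the more careful one. The Chernoff bound as stated in the paper is restricted to relative deviations $\delta\in[0,1]$, but the threshold $\ceil{sp}\cdot 4\log n$ can sit a $\Theta(\log n)$ multiplicative factor above $\Exp{X_v}=sp$ — especially when $sp<1$ and the mean is subconstant — so the stated Chernoff form does not immediately apply, and the paper is implicitly invoking a stronger multiplicative tail bound than it states. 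Your moment bound sidesteps this entirely and handles the $sp\geq 1$ and $sp<1$ regimes uniformly, since $\ceil{sp}\geq\max\set{sp,1}$ forces $esp/k\leq e/(4\log n)$ in both cases and $k\geq 4\log n$ makes the resulting bound superpolynomially small. For condition $(4)$ you also stay inside the stated Chernoff range by taking $\delta=1$ and comparing against $2np\leq\ceil{4np}$, whereas the paper uses $\delta=3$; your version is again a touch more rigorous relative to the bound as written, at the cost of a slightly looser constant in the exponent, which is irrelevant here since $e^{-\sqrt{n}/3}$ dominates any polynomial. Both routes arrive at $1-2/n^3$.
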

The proof of \Cref{claim:balance whp} appears in \Cref{app:proofs}.

\begin{claim}\label{claim:no-balance}
    There is a \clique algorithm that decides if $\UU$ is $p$-balanced in $\BO{1}$ rounds.
\end{claim}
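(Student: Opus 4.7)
The plan is to verify each of the four conditions of \Cref{def:p-balanced} in $\BO{1}$ rounds, and then aggregate the result with one more round of all-to-all exchange of a single flag bit per vertex.

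Conditions $(1)$ and $(2)$ depend only on the global parameters $p$ and $a$, which are known to every vertex; they can therefore be checked locally without communication. Condition $(3)$ requires each vertex $v$ to count the number of sets $U_j\in \UU$ it belongs to and compare it to the threshold $\ceil*{p^{1-a}}4\log n$. Since $v$ sampled its own memberships (and hence knows them), this check is also entirely local.

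The only condition requiring communication is $(4)$: the sizes of the sets $U_j$ must each be at most $\ceil*{4np}$. First I would designate vertex $j$ as the ``leader'' of $U_j$ for each $j\in[s]$; this is consistent since $s=p^{-a}\leq p^{-2}\leq n$ by conditions $(1)$ and $(2)$. Each vertex $v\in[n]$ then sends a single bit to leader $j$ indicating whether $v\in U_j$. The total number of messages sent by any vertex is $s\leq n$, and the total received by any leader is $n$, so by \Cref{lemma:routing} this routing completes in $\BO{1}$ rounds. Each leader $j$ sums the incoming bits to obtain $\abs{U_j}$ and flags a violation if $\abs{U_j}>\ceil*{4np}$.

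Finally, every vertex that detected a violation (either locally for condition $(3)$ or as a leader for condition $(4)$) sets a flag bit. In one additional round, every vertex broadcasts its flag bit to every other vertex; the OR of the received bits determines whether $\UU$ is $p$-balanced. Since each vertex sends $n$ one-bit messages, this fits within a single round of the \clique model. The main (and only) obstacle is routing the membership bits for condition $(4)$, and this is handled cleanly by \Cref{lemma:routing} because $s\leq n$.
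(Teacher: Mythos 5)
Your proposal is correct and follows essentially the same approach as the paper: conditions $(1)$–$(3)$ are checked locally (the first two from the globally known $p,a$, the third from each vertex's own membership list), and condition $(4)$ is checked by designating vertex $i$ as the leader for $U_i$, having vertices report membership to leaders, and broadcasting a flag if any leader sees a set exceeding $\ceil*{4np}$. The only cosmetic difference is that you have every vertex send a membership bit to every leader, whereas the paper has vertices send Acks only to leaders of sets they belong to; both fit within $\BO{1}$ rounds since $s\leq n$.
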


\begin{proof}[Proof of \Cref{claim:no-balance}]
    Recall that $p$ and $a$ are known to all vertices, and each vertex knows to which sets in $\UU$ it belongs.
    Then a violation of the first or second condition is noticed by all vertices.
    If the third condition is violated, there exists a vertex $v$ that is in too many sets, and $v$ is aware of it.

    To decide whether the fourth condition is violated, we design the following algorithm.
    For every $i\in[s]$, each vertex $v$ in $U_i$ sends an Ack to vertex $u$ with $\mathsf{ID}(u)=i$.
    If one vertex receives more than $4np$ messages, it broadcasts the message $\bot$ to all other vertices.
\end{proof}

\subsection{Rectangular Matrices}\label{ssec:rm}
Here we set the ground for computing the product of two rectangular matrices in \clique.
We build on the work of \cite{le2016further}, which shows that computing the product of two rectangular matrices $S,T$ of size $n\times n^{\beta_0}$ and $n^{\beta_0}\times n$ respectively takes $\BO{n^{o(1)}}$ rounds.

\begin{definition}[Rectangular matrix multiplication $\RM{S,T}$]\label{def:rm st}
    Given two matrices $S,T$ of dimension $n\times n^z$ and $n^z\times n$ where $z\in[0,1]$, the $\RM{S,T}$ problem is to compute $P=S\cdot T$ in the \clique model with $n$ nodes. The input of each vertex $i\in [n]$ is $S[i,*]$ and $T[*,i]$, and its output should be $P[i,*]$.
    We abuse the notation and use it also to denote the complexity of the problem by $\MM{n,n,n^z}$ or $\RM{n^z}$.
\end{definition}
\begin{remark}
    For two matrices $S,T$ of size $n^z\times n$ and $n\times n$ (or $n\times n$ and $n\times n^z$), we get the same round complexity \cite[Theorem 6]{elkin2022centralized}.
    In this case, we assume the input is of each vertex $i\in [n]$ is $S[*,i]$ and $T[*,i]$, and its output should be $P[i,*]$.
\end{remark}

\begin{definition}[The exponent of matrix multiplication]\label{def:rect}
    The exponent of the \emph{sequential} complexity of computing the product of two matrices of dimensions $n\times n^z$ and $n^z\times n$ respectively is denoted by $\omega(z)$. We denote by $\BO{n^{\rho(z)}}$ the round complexity of computing this product in the \clique model.
    Let $\alpha_0 = \lim_{\eps\to 0}\sup\set{z\mid \omega(z)\leq 2 + \eps},$ and
    $\beta_0 = \lim_{\eps\to 0}\sup\set{z\mid \rho(z)= \eps}$. Then $\alpha_0 \geq \alphaval$ \cite{williams2023new} and $\beta_0 \geq (1+\alpha_0)/2 \geq \betaval$ \cite{le2016further,williams2023new}.
\end{definition}
We would like to upper bound the function $\rho(z)$ by some analytic function, which is easy to work with. To do so, we use the following notation.
\begin{definition}[The notation $\Bs,\;\As$]
    \label{def:line_parameters}
    We will use $\Bs,\As$ for two real non-negative constants such that, for every $y\in[0,1-\bz]$ we have $\rho(1-y)\leq \Bs-\As y$.
\end{definition}
We give two explicit linear functions which bound the function $\rho(1-y)$. First, if the function $\rho(z)$ is convex, then we can set $\As = \rho(1)/(1-\bz)$ and $\Bs=\rho(1)$, by taking the line passing through the points $(\bz,\rho(\bz))$ and $(1,\rho(1))$.
This is of course the ``best'' (minimizing $l_\infty$ norm) linear function that upper bounds the function $\rho(1-y)$. Yet, proving that $\rho(z)$ is convex is beyond the scope of this paper.
Instead, the following claim is an additional explicit linear function we provide, which does not assume that $\rho(z)$ is convex.
\begin{restatable}[]{claim}{ClaimNumericRho}\label{CLAIM:NUMERIC RHO}
    We can set $\As = \aConst$ and $\Bs=\bConst$.
\end{restatable}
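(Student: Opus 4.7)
The plan is to leverage Le Gall's formula for $\rho(z)$ in terms of the centralized rectangular matrix multiplication exponent $\omega(\cdot)$, combined with the current best numerical upper bounds on $\omega(\cdot)$ to check the desired linear inequality pointwise along a grid, then extend it to the full interval $[0,1-\bz]$.

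First, I would recall from \cite{le2016further} the explicit upper bound on $\rho(z)$ obtained by partitioning a rectangular multiplication into many small sub-products and distributing them among teams of vertices. This yields $\rho(z)$ as an infimum, over an auxiliary parameter, of an expression involving $\omega(z')$ at a derived argument $z'$. Plugging in the current best upper bounds for $\omega(z')$ (for example from \cite{alman2024asymmetry,williams2023new}) gives a numerically computable upper bound for $\rho(z)$ at every $z$.

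Next, I would compute $\rho(1-y)$ at a dense grid of values $y_1<y_2<\dots<y_N$ in $[0,1-\bz]$ using an adaptation of the code of \cite{Complexity}, and verify at each grid point that $\rho(1-y_j)\leq \bConst - \aConst\cdot y_j$ with a safety margin. The two endpoints are the guiding cases: at $y=0$ we need $\rho(1)\leq \bConst$, consistent with $\rho(1)\leq\rhoval$; and at $y=1-\bz$ we need the bound to still hold while approaching $\rho(\bz)=0$. The linear function $\bConst-\aConst y$ is chosen precisely so both endpoint inequalities are satisfied with slack.

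The main obstacle is extending the pointwise verification at the grid to the whole continuum, since $\rho$ is not known to be convex. I would handle this by bounding the local variation of the upper-bound expression between consecutive grid points: since this expression is built from finitely many $\omega(\cdot)$-pieces (each itself piecewise-linear or convex in the currently known tables) and a minimization over a bounded parameter, one gets an explicit Lipschitz constant $L$ on each piece, so a grid spacing $\Delta$ forces the computed bound to move by at most $L\Delta$ between samples. Choosing $\Delta$ small enough that $L\Delta$ is smaller than the slack observed at every grid point then certifies $\rho(1-y)\leq \bConst-\aConst y$ for all $y\in[0,1-\bz]$, completing the proof.
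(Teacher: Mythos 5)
Your plan matches the paper's in its numerical skeleton — use the Le Gall formula relating $\rho(z)$ to $\omega(\cdot)$, plug in the piecewise‐linear tables for $\omega$ (as in the adaptation of the code of \cite{Complexity}), evaluate the resulting bound on a finite grid, and then certify a linear majorant $\bConst - \aConst\, y$ on all of $[0,1-\bz]$. The genuine divergence is the step where you pass from the grid to the continuum. You propose to bound the local variation of the upper-bound expression by an explicit Lipschitz constant $L$, choose the grid spacing $\Delta$ with $L\Delta$ below the observed slack, and conclude. The paper instead exploits the \emph{monotonicity} of $\rho(z)$ in $z$: since $\rho$ is non-decreasing, on each grid cell $[\gamma_i,\gamma_{i+1}]$ one has $\rho(z)\le\rho(\gamma_{i+1})$, so the step function that takes, on each cell, the computed upper bound at the cell's right endpoint already dominates $\rho$ everywhere; one then only needs to check that this step function lies below the line at the grid points. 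Both routes are sound, but the paper's is simpler and more robust — it requires no Lipschitz estimate for the implicitly-defined quantity coming from the equation $\gamma = 1-(z-1)\omega(\gamma)$, which would take some care to make rigorous. Your approach would work, but you would need to actually produce the Lipschitz constant from the piecewise-linear structure, whereas monotonicity is free and you never invoke it.
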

The proof of \Cref{CLAIM:NUMERIC RHO} (\Cref{appendix:convexity}) is numeric: We build a step function which is always above $\rho(z)$, and then find a line which is above the step function in the desired range.
For any choice of $\Bs, \As$ that fits \Cref{def:line_parameters} and any $p\in(0,1)$ we have $
    \RM{np}
    =\BO{n^{\rho(1-\log_n(1/p)) }}
    \leq \BO{n^{\Bs - \As\cdot \log_n(1/p) }}
    = \BO{n^{\Bs }\cdot p^{\As}}$.
Thus, by the above discussion and by \Cref{CLAIM:NUMERIC RHO} we get the following.
\begin{conclusion}\label{claim:RM complexity}
    For $p\in(0,1)$ we have
    $\begin{aligned}
            \RM{np}
            \leq \BO{n^{\bConst}\cdot p^{\aConst}}\;.
        \end{aligned}$
    If $\rho(z)$ is convex, we have
    $\begin{aligned}
            \RM{np}\leq \BO{n^{\rho}\cdot p^{\rho/(1-\bz)}}.
        \end{aligned}$
\end{conclusion}

\newcommand{\Gp}{G_{\varphi}}
\renewcommand{\Hp}{F_{\varphi}}
\newcommand{\cq}{\frac{1}{h^h}}
\newcommand{\cqa}[1][1]{\frac{#1}{h^h}}

\section{$h$-Cycle Detection}\label{sec:h cycle}
In this section, we prove the following theorem.
\ThmMainT* %
We do so by presenting two algorithms and analyzing their running time and success probability as a function of the parameters $n$, $t$, and $x$. \Cref{fig2:comp} depicts their running times.
In \Cref{ssec:c1,ssec:implement algc} we prove \Cref{thm:h-cycle}, in \Cref{ssec:c3,ssec:implement alga} we prove \Cref{thm:h-cycle induced}, and in \Cref{ssec:c5} we prove \Cref{thm:main in t}, using \Cref{thm:h-cycle,thm:h-cycle induced}.
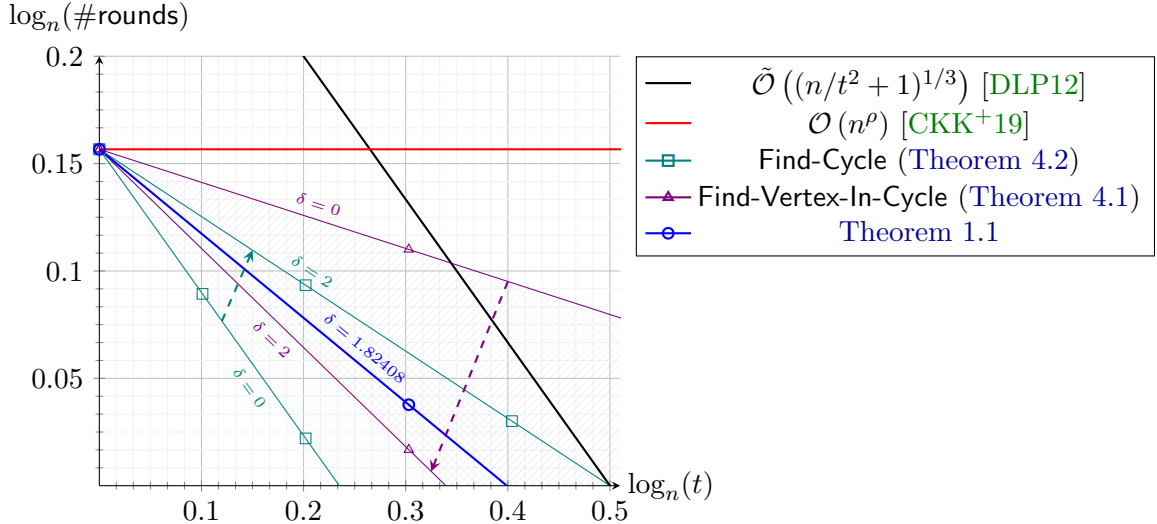
\begin{figure}[ht]
    \center
    \begin{tikzpicture}
        \begin{axis}[
                axis lines=middle,
                xlabel={$\log_n(t)$},
                ylabel={$\log_n(\#\mathsf{rounds})$},
                xlabel style={at={(ticklabel* cs:1.2)}, anchor=east},
                ylabel style={at={(ticklabel* cs:1.15)}, anchor=north},
                xmin=0, xmax=0.51,
                ymin=-0.0, ymax=0.2, %
                domain=0:10,
                samples=100,
                legend pos=outer north east,
                grid=both, %
                grid style={line width=.1pt, draw=gray!10}, %
                major grid style={line width=.2pt,draw=gray!50}, %
                minor tick num=5, %
                yticklabel style={/pgf/number format/fixed}
            ]

            \def\offsetlen{.8cm}
            \def\offsetsep{.15cm}
            \def\overshoot{.6cm}
            \def\ALPHA{0.3139}
            \newcommand{\middeltaval}{\dt}
            \def\colA{teal}
            \def\colC{violet}

            \addlegendimage{black, thick}
            \addlegendentry{$\TO{(n/t^2+1)^{1/3}}$ \cite{dolev2012tri}};
            \addlegendimage{red, thick}
            \addlegendentry{$\BO{\nr}$  \cite{Censor-HillelKK19}};
            
            \addlegendimage{\colA, thick, mark=square}
            \addlegendentry{\algA (\Cref{thm:h-cycle induced})};
            \addlegendimage{\colC, thick, mark=triangle}
            \addlegendentry{\algC (\Cref{thm:h-cycle})};

            \addlegendimage{blue, thick, mark=o}
            \addlegendentry{\Cref{thm:main in t}};

            \def\R{\rhoval} %
            \def\B{\betaval} %

            \addplot[name path=f0, \colA, thin, mark=square, mark repeat=1] {(\R)-x*(2+0*(\R-1))/(3-0)} node[pos=0.016, sloped, below] {\tiny{$\delta=0$}};
            \addplot[name path=f2, \colA, thin, mark=square, mark repeat=2] {(\R)-x*(2+2*(\R-1))/(3-2)} node[pos=0.02, sloped, above] {\tiny{$\delta=2$}};

            \addplot[name path=g0, \colC, thin, mark=triangle, mark repeat=3] {(\R)-x*\R/((1-\B)*(3-0))} node[pos=0.021, sloped, above] {\tiny{$\delta=0$}};
            \addplot[name path=g2, \colC, thin, mark=triangle, mark repeat=3] {(\R)-x*\R/((1-\B)*(3-2))} node[pos=0.018, sloped, below] {\tiny{$\delta=2$}};

            \addplot[name path=DLP, black, thick] {1/3-2*x/3};
            \addplot[name path=MM, red, thick] {\R};
            \addplot[name path=meet, blue, thick, mark=o, mark repeat=3] {(\R)-x*\R/((1-\B)*(3-\middeltaval))} node[pos=0.025,sloped,above] {\tiny$\delta=\middeltaval$};

            \tikzfillbetween[of=f0 and f2]{\colA, opacity=0.22, pattern=north west lines, pattern color=\colA};

            \tikzfillbetween[of=g0 and g2]{\colC, opacity=0.22, pattern=north east lines, pattern color=\colC};

            \def\axa{0.12}
            \def\axb{0.15}
            \draw[->, >=stealth, \colA, thick, dashed] (axis cs:\axa,{(\R)-\axa*(2+0*(\R-1))/(3-0)}) -- (axis cs:\axb,{(\R)-\axb*(2+2*(\R-1))/(3-2)});
            \def\bxa{0.4}
            \def\bxb{0.15} 
            \draw[->, >=stealth, \colC, thick, dashed] (axis cs:\bxa,{(\R)-\bxa*\R/((1-\B)*(3-0))}) -- (axis cs:{\bxb*(1-\B)/\R},{(\R)-(\bxb*(1-\B)/\R)*\R/((1-\B)*(3-2))});

        \end{axis}
    \end{tikzpicture}
    \caption{
        An illustrative comparison between our results and prior work, for the case of triangles. For each algorithm, we plot the base-$n$ logarithm of the number of rounds as a function of the base-$n$ logarithm of the number of triangles. An additional axis represents the value of $\delta$ ranging from 0 to 2. For a fixed $t$, \algA performs faster as $\delta$ decreases, with its round complexity depicted by the area shaded in teal. Conversely, \algC performs better as $\delta$ increases, and its round complexity is shown by the area shaded in violet.}
        \label[figure]{fig2:comp}
\end{figure}

Before presenting the algorithms, we overview the color-coding technique \cite{alon1995color}, which is a common method used to find paths or cycles of constant length $h$.
To detect an $h$-cycle, first color the vertices of the graph using $h$ colors, where each vertex is colored uniformly at random and independently of all other vertices. Then look for a \emph{colorful} $h$-cycle, which is an $h$-cycle with exactly one vertex of each color.
This provides additional structure, which a detection algorithm can benefit from.
However, not every coloring induces a colorful $h$-cycle, which means that to detect an $h$-cycle, we might have to repeat this experiment multiple times, until we sample a coloring that induces a colorful $h$-cycle.

In more detail, given a graph $G$ with $n$ vertices, we sample a uniformly random coloring $\varphi:V\to[h]$, which means that $\varphi$ colors each vertex uniformly independently at random. We then build the auxiliary graph $G_\varphi$ which is a directed graph, as follows.
\begin{definition}[$G_\varphi$]\label{def:aux coloring}
    \sloppy{Given a graph $G$, and a coloring $\varphi:V\to[h]$, we define a new directed graph $G_\varphi$ on the same vertex set, with a set of directed edges $E(G_\varphi)\triangleq\set{(u,v)\in E(G)\mid \varphi(v)=(\varphi(u) + 1) \mod h}$. That is, only a subset of the edges is kept, and it consists of the edges from the vertices of color $i$ to the vertices of color $(i+ 1)\mod h$, for $i\in[h]$.}
\end{definition}
The graph $G_\varphi$ has the property that every walk of length smaller than $h$ is a simple path, and every closed walk of length $h$ is a cycle. Here, a \emph{walk} of length $h$ on a (directed) graph is a sequence of vertices $(v_1,v_2,\ldots, v_{h+1})$ not necessarily distinct, such that for $i\in[h]$ we have that $(v_i,v_{i+1})$ is an edge in $G$.
We say that a walk is a simple path if all the vertices in the walk are distinct. A walk $(v_1,v_2,\ldots, v_h,v_{h+1})$ is \emph{closed} if $v_1=v_{h+1}$.

We explain how we benefit from the property that every closed walk of length $h$ in $G_\varphi$ is a cycle.
Let $A_\varphi$ denote the adjacency matrix of $G_\varphi$.
We can compute the $h$-th power of the matrix $A_\varphi$, and look at the diagonal of the obtained matrix. Then, $G_\varphi$ is $h$-cycle free if and only if all the entries on the diagonal are equal to $0$.
Clearly, if $G$ does not contain an $h$-cycle, then for any coloring $\varphi$, we have that $G_\varphi$ does not contain an $h$-cycle.
The more interesting property of this random coloring is that if $G$ contains an $h$-cycle, then the probability that $G_\varphi$ contains one is at least $1/h^h$, as we prove next.
\begin{claim}\label{claim:simple color-coding}
    Let $G$ be a graph with at least one $h$-cycle.
    Let $\varphi:V\to[h]$ be some uniformly random coloring.
    Then $G_\varphi$ contains an $h$-cycle with probability at least $\frac{1}{h^h}$.
\end{claim}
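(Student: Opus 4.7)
The plan is the standard color-coding argument: I fix one particular $h$-cycle in $G$ and lower bound the probability that this cycle survives as a (directed) $h$-cycle in the auxiliary graph $G_\varphi$.

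Concretely, by hypothesis $G$ contains at least one $h$-cycle $C$; I would list its vertices in cyclic order as $v_1, v_2, \ldots, v_h$, where in the directed case the orientation is chosen to agree with the edges of $C$, and in the undirected case either orientation works since both directed copies of each undirected edge are present. Define the event $E$ that the random coloring $\varphi$ assigns color $i-1 \bmod h$ to $v_i$ for every $i \in [h]$. Because $\varphi$ colors the vertices independently and uniformly over $[h]$, and $v_1,\ldots,v_h$ are $h$ distinct vertices, we get $\Pr[E] = (1/h)^h = 1/h^h$ exactly.

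Next I would verify that conditioned on $E$, the directed closed walk $v_1 \to v_2 \to \cdots \to v_h \to v_1$ is actually contained in $G_\varphi$. For each $i \in [h]$, the edge $(v_i, v_{(i \bmod h)+1})$ belongs to $E(G)$ by the choice of orientation of $C$, and under $E$ the colors satisfy $\varphi(v_{(i \bmod h)+1}) \equiv \varphi(v_i)+1 \pmod h$, so by \Cref{def:aux coloring} this edge is retained in $G_\varphi$. Thus on the event $E$, the graph $G_\varphi$ contains an $h$-cycle (the chosen $C$), and so
\[
\Pr[G_\varphi \text{ contains an } h\text{-cycle}] \;\geq\; \Pr[E] \;=\; \frac{1}{h^h}.
\]

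There is no substantive obstacle: the proof uses only one fixed cycle $C$ to witness the lower bound, so we do not need to sum probabilities over multiple cycles or worry about independence. The only mild subtlety is that $G_\varphi$ is directed, which just forces us to fix a consistent orientation of $C$ before computing $\Pr[E]$; this is immediate in both the directed and undirected cases.
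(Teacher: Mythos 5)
Your argument is the same as the paper's: fix one witness $h$-cycle $C=(v_1,\ldots,v_h,v_1)$, note that the event that each $v_i$ receives the appropriate color occurs with probability exactly $1/h^h$, and observe that on this event $C$ survives as a directed $h$-cycle in $G_\varphi$. The only (harmless) cosmetic differences are your explicit handling of the directed/undirected orientation and your $0$-indexed color convention versus the paper's $1$-indexed one.
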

\begin{proof}
    Let $C=(v_1,\ldots, v_h,v_1)$ be some $h$-cycle in $G$.
    We prove that $C$ is also an $h$-cycle in $G_\varphi$ with probability at least $\frac{1}{h^h}$.
    For $C$ to be an $h$-cycle in $G_\varphi$, the vertex $v_i$ must be colored by the color $i$, for $i\in[h]$. As each vertex is colored with a uniformly random color, this event occurs with probability $1/h^h$, which proves the claim.
\end{proof}

\subsection{The Algorithm \algC}\label{ssec:c1}
We explain how to detect an $h$-cycle in time $\BO{\MM{n,n,n/x}\cdot \log^2 n}$ \whp, with a one-sided error, as stated in the next theorem.
\begin{theorem}\label{thm:h-cycle}
    \sloppy{There exists a randomized \CC algorithm to detect an $h$-cycle in time $\TO{\MM{n,n,\frac{n}{x}}}$ \whp, with a one-sided error.}
\end{theorem}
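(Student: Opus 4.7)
The plan is to combine the color-coding technique with an iterated rectangular-matrix-multiplication pipeline, implementing the high-level \algC idea from the introduction. First, I would sample a uniformly random coloring $\varphi:V(G)\to[h]$ and form the auxiliary digraph $G_\varphi$ of \Cref{def:aux coloring}. Then I would sample $S\subseteq V(G)$ by placing each vertex independently with probability $p=1/x$. Finally, I would test in parallel for every $v\in S$ whether $v$ lies on a closed walk of length $h$ in $G_\varphi$: letting $A_\varphi$ be the adjacency matrix of $G_\varphi$ and $B\in\set{0,1}^{|S|\times n}$ the submatrix of $A_\varphi$ consisting of the rows indexed by $S$, I would iteratively compute $M_1=B$ and $M_{i+1}=M_i\cdot A_\varphi$ for $i=1,\ldots,h-1$, and output \yes iff some diagonal entry $(M_h)_{v,v}$ with $v\in S$ is nonzero. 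Every closed walk of length $h$ in $G_\varphi$ is forced by the coloring to visit $h$ distinct vertices and hence to be an $h$-cycle in $G$, so a \yes answer always certifies a genuine $h$-cycle, giving the one-sided-error guarantee.

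For the success probability I would combine two independent sources of randomness. Assuming $G$ contains an $h$-cycle, $\abs{\Vx(G)}=x$, so $\Pr{S\cap\Vx(G)=\emptyset}=(1-1/x)^x\leq 1/e$, and with probability at least $1-1/e$ some $v\in S$ lies on some $h$-cycle $C$ of $G$. Conditioned on such a pair $(v,C)$, an argument identical to \Cref{claim:simple color-coding} (prescribe the colors of the $h$ vertices of $C$ so that consecutive vertices along $C$ carry consecutive colors modulo $h$) shows that with probability at least $1/h^h$ the cycle $C$ survives in $G_\varphi$ as a closed walk through $v$, and hence $(M_h)_{v,v}\geq 1$. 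A single trial therefore succeeds with probability $\Omega(1/h^h)$, and $\Theta(h^h\log n)$ independent repetitions of the whole procedure (fresh $\varphi$ and $S$ each time, taking the logical OR of the outputs) boost the success probability to \whp.

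For the round complexity, a Chernoff bound gives $|S|=\BO{n/x}$ \whp, so each of the $h-1$ iterative products is a rectangular multiplication of an $(n/x)\times n$ matrix by an $n\times n$ matrix, which by \Cref{def:rm st} together with the remark following it costs $\MM{n,n,n/x}$ rounds. Absorbing the $\BO{h}$ iterations and the $\Theta(h^h\log n)$ repetitions into $\TO{\cdot}$ yields a bound of $\TO{\MM{n,n,n/x}}$ per value of the sampling probability. The layout of $M_i$ across the vertices between iterations is a minor bookkeeping concern, but since each vertex holds $\BO{n}$ entries throughout, any redistribution required by the input convention of \Cref{def:rm st} can be performed in $\BO{1}$ rounds via \Cref{lemma:routing}.

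The main obstacle I expect is that $x$ is not known to the algorithm in advance. I would handle this by running the whole procedure for guesses $p=2^{-j}$ with $j=\lceil\log_2 n\rceil, \lceil\log_2 n\rceil-1,\ldots,0$, in order of increasing matrix-multiplication cost, and halting the first time a cycle is reported. Once $j\leq\log_2 x$, the guess satisfies $p\geq 1/x$, so that run succeeds \whp, and the total work is at most $\sum_{j\geq\log_2 x}\MM{n,n,n\cdot 2^{-j}}$; using $\MM{n,n,nq}\leq \BO{n^{\bConst}q^{\aConst}}$ from \Cref{claim:RM complexity}, this geometric sum is dominated by its largest term $\MM{n,n,n/x}$, keeping the overall round complexity at $\TO{\MM{n,n,n/x}}$ as required.
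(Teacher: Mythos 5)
Your proof is correct and follows the same overall architecture as the paper's: sample a small vertex set, iterate rectangular matrix products against the color-coded adjacency matrix to test whether a sampled vertex lies on a colorful $h$-cycle, and wrap in a doubling loop over the unknown parameter $x$. The difference lies in the probabilistic analysis of a single trial, and yours is genuinely cleaner. The paper fixes the coloring $\varphi$ first and then samples $U_1$ only from the first color class $V_1$; since the relevant quantity $|V_1\cap\Vx(G_\varphi)|$ is then a random variable with expectation as small as $x/h^h$, the paper's Claim~\ref{claimz:1} needs a reverse-Markov step to lower-bound that set size before a binomial argument, and correspondingly demands the larger threshold $p\geq 4h^h/x$. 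You instead sample $S\subseteq V$ independently of $\varphi$ and test directly whether any $v\in S$ has $(A_\varphi^h)_{v,v}\neq 0$; independence then gives the clean factorization $\Pr{S\cap\Vx(G)\neq\emptyset}\geq 1-1/e$ (at $p\geq 1/x$) times $\Pr{\text{a fixed cycle through a fixed hit vertex survives under }\varphi}\geq 1/h^h$ by \Cref{claim:simple color-coding}, with no reverse Markov and no pairwise-independence side argument about $V_1$ being an independent set in $\Hp$. Both yield a sampled matrix dimension of $\Theta_h(n/x)$, so the round complexity matches. One small imprecision: ``a Chernoff bound gives $|S|=\BO{n/x}$ \whp'' is false when $n/x=o(\log n)$; as the paper does, you should cap $|S|$ (abort the trial and count it as \false if $|S|>4np$), which holds with constant probability by Markov and folds harmlessly into the $\Omega(1/h^h)$ single-trial success probability.
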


Let $G$ be a graph with $n$ vertices and $t$ copies of an $h$-cycle, for a fixed constant $h$.
For a graph $H$, we denote by $\Vx(H)$ the set of vertices that participate in an $h$-cycle in $H$. %
Let $x=\abs{\Vx(G)}$.
We prove \Cref{thm:h-cycle} by
analyzing the following random process. We implement it in the \CC model in \Cref{ssec:implement algc}.

~\\\textbf{\algC .}
The input of the algorithm is a graph $G$ and some value $p\in[0,1]$. The output is \true if at least one $h$-cycle is detected, and \false otherwise.
The algorithm works as follows.
The algorithm samples a coloring $\varphi:V(G)\to[h]$, uniformly at random, and uses it to define a new auxiliary graph $\Gp$, as explained in \Cref{def:aux coloring}.
Let $V_i$ denote the set of vertices in $\Gp$ that are assigned the color $i$, for $i\in[h]$.
The algorithm then samples a subset of vertices from $V_1$, by sampling each vertex independently with probability $p$. Let $U_1$ denote the set obtained.
Define $\Hp$ as the induced subgraph of $\Gp$ with the vertex set $U_1\cup \bigcup_{i=2}^h V_i$.
Let $A_{\Hp}$ denote the adjacency matrix of the graph $\Hp$.
Next, the algorithm exactly counts the number of $h$-cycles in $\Hp$ using rectangular matrix multiplication.
That is, it computes the trace of the $h$-th power of $A_{\Hp}$, and outputs \true if it is not zero, and \false otherwise.
Clearly, this can be computed by first computing the $h$-th power of $A_{\Hp}$, and then computing its trace, which takes $\BO{\MM{n,n,n}}$ rounds.
However, a faster well-known way to compute this trace, without computing the $h$-th power of $A_{\Hp}$, is as follows.
Compute the following product:
\begin{align*}
    A_{\Hp}[U_1,V_2]\cdot A_{\Hp}[V_2,V_3]\cdots A_{\Hp}[V_{h-1},V_{h}]\cdot A_{\Hp}[V_h,U_1]\;,
\end{align*}
where for $S,T\subseteq V$ the matrix $A_{\Hp}[S,T]$ denotes the rectangular matrix with $\abs{S}$ rows and $\abs{T}$ columns, every for every $s\in S$ and $T\in t$ we have that $(A_{\Hp}[S,T])_{s,t}=1$ if $(s,t)\in E(\Hp)$ and $0$ otherwise. This matrix is also called the biadjacency matrix.
The order in which the multiplications are computed affects the round complexity.
The algorithm computes this product by sequentially multiplying a rectangular matrix of size at most $4np\times n$ and a matrix of size at most $n\times n$, to get a new matrix of size $4np\times n$. In other words, the algorithm first computes the product $A_{\Hp}[U_1,V_2]\cdot A_{\Hp}[V_2,V_3]$, to obtain some matrix $B_2$, and then computes the product $B_2\cdot A_{\Hp}[V_3,V_4]$.
In this way, the algorithm does not multiply two square matrices of size $n$, and can benefit from the fact that it only computes the product of one smaller rectangular matrix with a square one.
This completes the description of the algorithm.

Clearly, the algorithm never outputs \true if the graph $G$ is $h$-cycle free. In what follows, we give a lower bound on the probability that it outputs \true when the graph has $h$-cycles.
\begin{claim}\label{claimz:1}
    If the sampling probability of vertices from $V_1$ into $U_1$ satisfies $p\geq \frac{4 h^h}{x}$, then the algorithm outputs \true with probability at least $\frac{1}{4h^h}$.
\end{claim}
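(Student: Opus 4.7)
The plan is to isolate the randomness into two independent stages---the coloring $\varphi$ and the sampling of $U_1$ from $V_1$---and tie them together through a single vertex-level random variable.

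Let $Y \triangleq |V_1 \cap \Vx(\Gp)|$ be the number of color-$1$ vertices that lie in some $h$-cycle of $\Gp$. Every $h$-cycle of $\Gp$ uses exactly one vertex of each color by construction, and $\Hp$ is obtained from $\Gp$ by restricting its $V_1$-vertices to $U_1$; hence $\Hp$ contains an $h$-cycle if and only if $U_1 \cap \Vx(\Gp) \neq \emptyset$. Conditioned on $\varphi$, the algorithm therefore outputs \true with probability exactly $1 - (1-p)^Y$, so its overall success probability equals $\Exp{1 - (1-p)^Y}$.

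First I would lower-bound $\Exp{Y}$. For every $v \in \Vx(G)$ fix one $h$-cycle $C_v = (v, u_2, \ldots, u_h, v)$ in $G$ through $v$. With probability $1/h^h$ the coloring assigns $\varphi(v) = 1$ and $\varphi(u_i) = i$ for every $i \in \{2, \ldots, h\}$, and on this event the cycle $C_v$ survives in $\Gp$ and witnesses $v \in V_1 \cap \Vx(\Gp)$---this is essentially the single-cycle calculation behind \Cref{claim:simple color-coding}. Linearity of expectation then yields $\Exp{Y} \geq x/h^h$.

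Next, I would upgrade this expectation bound to a positive-probability statement about $Y$. The inclusion $\Vx(\Gp) \subseteq \Vx(G)$ gives the almost-sure bound $Y \leq x$, so the reverse Markov inequality (\Cref{thm:rev mark}) applied at threshold $R = \Exp{Y}/4$ yields $\Pr{Y > \Exp{Y}/4} \geq (3\Exp{Y}/4)/(x - \Exp{Y}/4) \geq 3/(4h^h)$. The hypothesis $p \geq 4h^h/x$ rearranges to $1/p \leq x/(4h^h) \leq \Exp{Y}/4$, so on the event $\{Y > \Exp{Y}/4\}$ we have $pY > 1$ and hence $(1-p)^Y \leq e^{-pY} < 1/e$. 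Averaging over $\varphi$ and combining the two estimates gives the algorithm's success probability as $\Pr{\text{detect}} \geq (1 - 1/e) \cdot 3/(4h^h) > 1/(4h^h)$.

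The one delicate point---and the closest thing to an obstacle---is that the first moment $\Exp{Y} \geq x/h^h$ alone does not preclude $Y = 0$ with overwhelming probability, so the expectation must be upgraded to a tail bound. Reverse Markov sidesteps a full second-moment computation by exploiting the cheap almost-sure bound $Y \leq x$; the three factors $1/h^h$, $3/4$, and $1 - 1/e$ in the final product were chosen precisely so as to comfortably clear the target $1/(4h^h)$.
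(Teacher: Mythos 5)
Your proof is correct and follows essentially the same approach as the paper's: you define the same key random variable $Y = |V_1 \cap \Vx(\Gp)|$, derive the same expectation bound $\Exp{Y} \geq x/h^h$ via color-coding, apply the reverse Markov inequality, and then use the binomial nature of $|U_1 \cap \Vx(\Gp)|$ conditioned on $\varphi$. The only cosmetic difference is your choice of threshold $\Exp{Y}/4$ in reverse Markov (the paper uses $\Exp{Y}/2$), yielding a slightly different but still sufficient product of constants.
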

\begin{proof}[Proof of \Cref{claimz:1}]
    Let $q\triangleq\frac1{h^h}$.
    Let $S\triangleq V_1\cap \Vx(\Gp)$, which is the set of vertices that participate in an $h$-cycle in $\Gp$, and also belong to the first color class.
    Let $S'\triangleq S\cap U_1$, which is the subset of vertices from $S$, which are also sampled into $\Hp$.
    We show that
    \begin{enumerate}
        \item $\abs{S}\geq x\cdot q/2$ with probability at least $q/2$, and
        \item $\abs{S'}$ has the distribution of a binomial random variable with $\abs{S}$ trials (note that the number of trials is also a random variable), and success probability $p$.
    \end{enumerate}
    Using conditional probability and the fact $p\geq 4/(xq)$, we get that with probability at least $q/2$, the set $S'$ is not empty, which means that the subgraph $\Hp$ contains an $h$-cycle, and therefore the algorithm outputs \true.
    We first prove the first item, where the randomness is over the choice of the coloring $\varphi$.
    For $v\in V$, let $X_v$ denote the indicator random variable for the event that $v\in S$. Note that $\abs{S}=\sum_{v\in V} X_v=\sum_{v\in \Vx(G)}X_v$, and therefore, by the linearity of expectation, it suffices to show that for every $v\in\Vx(G)$, we have $\Exp{X_v}\geq q$.
    To see this, fix some vertex $v\in \Vx(G)$ and an $h$-cycle $C=(v,u_2,\ldots, u_h,v)$ that contains $v$. By \Cref{claim:simple color-coding}, we know that $C$ is also an $h$-cycle in $\Gp$ with probability $q$, and therefore, $v\in \Vx(\Gp)$ with probability at least $q$, as $v$ could belong to more than one $h$-cycle in $G$.
    We get that $\Exp{\abs{S}}\geq x\cdot q$.

    We can now use a reverse Markov's inequality (\Cref{thm:rev mark}), to get that with probability at least $q/2$, we have that $\abs{S}\geq \Exp{\abs{S}}/2$:
    \begin{align*}
        \Pr{\abs{S}\geq \Exp{\abs{S}}/2}
        \geq \frac{\Exp{\abs{S}}-\Exp{\abs{S}}/2}{x-\Exp{\abs{S}}/2}
        \geq \frac{\Exp{\abs{S}}-\Exp{\abs{S}}/2}{\Exp{\abs{S}}/q-\Exp{\abs{S}}/2}
        \geq q/2\;.
    \end{align*}
    To conclude, we showed that if $p\geq \frac{4h^h}{x}=\frac{4}{q\cdot x}$, then $\Pr{\abs{S}\geq \Exp{\abs{S}}/2}\geq q/2$, and since $\Exp{\abs{S}}\geq xq$, we get that $\begin{aligned}
            \Pr{\abs{S}\geq xq/2}\geq q/2
        \end{aligned}$,
    which proves the first item.

    We now move to the second part of the proof, which establishes a lower bound on $\abs{S'}$.
    Let $Y$ denote the random variable equal to $\abs{S'}$, which is the number of vertices in $U_1$ which participate in $\Hp$.
    For $v\in V_1$ let $Y_v$ denote the indicator random variable for the event that $v$ is in the set $\Vx(\Hp)$.
    The key point here is that the set $V_1$ is an independent set over $\Hp$, and therefore the random variables $\set{Y_v}_{v\in V_1}$ are pairwise independent.
    This means that for every $v\in S$, the event $\set{Y_v=1}$ is equivalent to the event $\set{v\in U_1}$, which occurs with probability $p$.
    Note that for $v\in V_1\setminus \Vx(\Gp)$, we have that $Y_v$ is always $0$.
    We get that $Y$ has the distribution of a binomial random variable with $\abs{S}$ trials and success probability $p$. We can now prove that
    \begin{align*}
        \Pr{Y\neq 0}\geq \Pr{Y\neq 0\mid \abs{S}\geq 1/p}\cdot \Pr{\abs{S}\geq 1/p}\geq \brak{1-(1-p)^{1/p}}\cdot \frac{q}{2}\geq \frac{q}{4}\;.
    \end{align*}
    This means that the probability that $\Hp$ contains an $h$-cycle is at least $q/4=\frac{1}{4h^h}$, and when this occurs the algorithm outputs \true,
    which completes the proof.
\end{proof}

\subsection{Implementation of \algC in the \CC Model}
\label{ssec:implement algc}
In this subsection, we explain how to implement the algorithm specified in the previous subsection in the \CC model, and prove \Cref{thm:h-cycle}.
The implementation of the algorithm is divided into two parts. In the first part, the graph $\Hp$ is sampled and every vertex learns its edges in the graph and the color of every other vertex in the graph.
In the second part, the algorithm computes the exact number of $h$-cycles in $\Hp$ using multiple rectangular matrix multiplications, and outputs \true if this number is positive and \false otherwise.

We implement the first part as follows:
\begin{enumerate}
    \item A leader vertex $v_0$ locally samples a coloring $\varphi:V\to [h]$ uniformly at random, and a subset of vertices $U_1\subseteq V_1$, where each vertex from $V_1$ is added to $U_1$ independently with probability $p$.
          If $\abs{U_1}> 4np$ then $v_0$ outputs \false, and all other vertices also output \false.
    \item
          The leader vertex $v_0$ then notifies each vertex of its color under $\varphi$, and whether it is in $U_1$ or not.
    \item Each vertex $u$ notifies each other vertex $v$ of its color, and whether it is in $U_1$ or not.
\end{enumerate}
This whole part can be implemented in $\BO{1}$ rounds, as in every round each vertex sends and receives at most $\BO{n}$ messages of $\BO{1}$ bits each.
This completes the implementation of the first part of the algorithm.

In what follows, we implement the second part of the algorithm.
Let $A_{\Hp}$ denote the adjacency matrix of the graph $\Hp$.
To implement the second part, the algorithm computes the product
$\prod_{i=1}^h A_i$, where the matrices $\set{A_i}_{i\in[h]}$ are defined as follows:
\begin{align*}
    \forall i\in[h], \quad A_i\triangleq\begin{cases}
        A_{\Hp}[U_1,V_2],     & \text{if } i=1     \\
        A_{\Hp}[V_h,U_1],     & \text{if } i=h     \\
        A_{\Hp}[V_i,V_{i+1}], & \text{otherwise. }
    \end{cases}
\end{align*}
The product $\prod_{i=1}^h A_i$ is computed recursively.
We compute $B_{i+1}=B_i\cdot A_{i+1}$, for $i\in\set{1,\ldots, h}$, where $B_1=A_1$.
After computing $B_h$, which is a square matrix of size $\abs{U_1}$, each vertex $u\in U_1$ outputs \true if the entry $B_h[u,u]$ is positive and outputs \false otherwise. Any vertex not in $U_1$ outputs \false.
This completes the implementation of the second part, and therefore, the entire description of the implementation of the algorithm.

The following claim bounds the running time of the algorithm \algC.
\begin{claim}\label{claim:runtime vit}
    The round complexity of the algorithm \algC$(G,p)$ is $\BO{\RM{4np}}$.
\end{claim}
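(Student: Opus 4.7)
The plan is to bound the two phases of the implementation separately. For the first phase, the leader $v_0$ locally samples the coloring $\varphi$ and the subset $U_1\subseteq V_1$; it then informs every vertex of its color (an $h$-valued symbol, hence $\BO{1}$ bits since $h$ is a fixed constant) and of its membership bit in $U_1$. After these two broadcasts from $v_0$, each vertex forwards its color and membership flag to all other vertices. Every vertex thus sends and receives $\BO{n}$ messages of $\BO{1}$ bits, so by Lenzen's routing (\Cref{lemma:routing}) the phase completes in $\BO{1}$ rounds. The abort test $\abs{U_1} > 4np$ is performed locally by $v_0$ and broadcast to everyone, adding only $\BO{1}$ extra rounds.

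For the second phase, the abort condition guarantees $\abs{U_1}\leq 4np$, while every color class satisfies $\abs{V_j}\leq n$. Consequently, for $1\leq i\leq h-2$, the product $B_{i+1}=B_i\cdot A_{i+1}$ has dimensions at most $4np\times n$ times $n\times n$, and the final product $B_h=B_{h-1}\cdot A_h$ has dimensions at most $4np\times n$ times $n\times 4np$. Padding each factor with zero rows and zero columns up to the canonical sizes of \Cref{def:rm st} does not change the asymptotic complexity, and by the remark following that definition (which covers both the $n^z\times n$ by $n\times n$ case and the $n\times n$ by $n\times n^z$ case), each of the $h-1$ recursive multiplications is executed in $\RM{4np}$ rounds.

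It remains to hand off the output of each multiplication as the input of the next in the correct per-vertex layout. Since one dimension of every intermediate matrix $B_i$ is at most $4np$ and the other is at most $n$, the total number of entries is at most $4np\cdot n\leq n^2$. In the balanced layout prescribed by \Cref{def:rm st}, each vertex both holds and needs $\BO{n}$ entries of $\BO{\log n}$ bits, so a single application of \Cref{lemma:routing} reshuffles the data in $\BO{1}$ rounds between successive multiplications. Summing over the $h-1$ recursive products and noting that $h$ is a fixed constant gives a total round complexity of $\BO{h\cdot(\RM{4np}+1)}=\BO{\RM{4np}}$, proving the claim. The one point that requires care is verifying that these reshufflings stay within Lenzen's per-vertex load limit; this follows because both the produced and required distributions spread at most $\BO{n^2}$ entries evenly across the $n$ vertices.
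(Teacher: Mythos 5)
Your proposal is correct and follows essentially the same approach as the paper: each of the $O(h)$ rectangular products $B_{i+1}=B_i\cdot A_{i+1}$ is bounded by $\RM{4np}$ (the paper writes $\RM{\abs{U_1}}$, which is the same under the abort guarantee $\abs{U_1}\leq 4np$), and the data reshuffling between successive multiplications is an $\BO{1}$-round application of Lenzen's routing. You additionally spell out the $\BO{1}$-round cost of the sampling phase, which the paper argues just before stating the claim rather than inside its proof, but the argument is the same.
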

\begin{proof}
    We claim that computing $B_{i+1}$ given $B_i$ and $A_{i+1}$ can be done in $\RM{\abs{U_1}}$ rounds.
    Since we need to compute $h$ products, overall we get an $\BO{\RM{\abs{U_1}}}$ rounds algorithm, as $h=\BO{1}$.
    We want to reduce the problem of computing the product of the matrices $(B_i,A_{i+1})$ to the problem $\RM{(B_i,A_{i+1})}$. To do that, we need to provide an algorithm to route the input such that the $j$-th vertex holds the $j$-th column of the matrix $B_i$, and the $j$-th column of the matrix $A_{i+1}$. This can be implemented in $\BO{1}$ rounds, using \Cref{lemma:routing}, because each vertex needs to send at most one column of $n$ entries to one other vertex, and similarly each vertex needs to receive at most one column from one other vertex.
\end{proof}

Using \Cref{claimz:1}, we get the following conclusion that addresses the probability that the algorithm detects an $h$-cycle.
\begin{conclusion}\label{conclusionz:y}
    If $p\geq 4h^h/x$ then the algorithm \algC$(G,p)$ outputs \true with probability at least $\frac{1}{8h^h}$.
\end{conclusion}
\begin{proof}
    Let $\mathcal{B}$ denote the bad event that the leader vertex outputs \false in the first round, which occurs because the set $U_1$ is larger than $4np$.
    Note that $\Exp{\abs{U_1}}=np/h$.  Therefore, by Markov's inequality, the probability that the size of the set $U_1$ exceeds $4np$ is at most $\frac{1}{4}$.

    To complete the proof, we use \Cref{claimz:1}, which states that if $p \geq 4h^h/x$, then the algorithm \algC$(G,p)$ outputs \true with a probability of at least $\frac{1}{4h^h}$. Overall, the probability that $\Hp$ contains an $h$-cycle and that the event $\mathcal{B}$ does not occur, is at least $\frac{1}{8h^h}$, thus completing the proof.
\end{proof}

We can now prove \Cref{thm:h-cycle} using \Cref{conclusionz:y} and a doubling algorithm.
\begin{proof}[Proof of \Cref{thm:h-cycle}]
    We begin with the description of a doubling algorithm.
    \begin{enumerate}
        \item Start with an initial guess $p=1/n$.
        \item Execute the algorithm \algC$(G,p)$ for $80h^h\cdot \log n$ times (sequentially).
        \item If one of these executions detects an $h$-cycle return \true and terminate.
        \item Otherwise, update the value of $p$ by setting $p\gets 2p$, and go to Step $2$. If $p=1$ return \false.
    \end{enumerate}
    Let $i^*$ be the minimal index for which $p\geq 4h^h/x$.
    That is, $i^*=\ceil*{\log(4h^h \cdot n/x)}$.
    In the $i^*$-th iteration, by \Cref{conclusionz:y} each execution of the algorithm has a probability of at least $\frac1{8h^h}$ of detecting an $h$-cycle, which results in the termination of the algorithm.
    Since different executions are independent, the probability that no execution detects an $h$-cycle is
    \begin{align*}
        \brak{1-\frac1{8h^h}}^{80h^h\cdot \log n}\leq \exp(-10\log n)\leq \frac{1}{n^{10}}\;.
    \end{align*}
    This means that the algorithm makes at most $\ceil*{\log(n)}$ iterations before termination.
    By \Cref{claim:runtime vit}, the $i$-th iteration takes $\TO{\RM{\frac{n}{2^i}}}$ rounds, and therefore the total running time of the algorithm is
    \begin{align*}
        \sum_{i=0}^{\ceil*{\log(n)}} \TO{\RM{\frac{n}{2^i}}} = \TO{\RM{\frac{n}{x}}}\;,
    \end{align*}
    with probability at least $n^{-10}$, which completes the proof.
\end{proof}

\subsection{The Algorithm \algA}\label{ssec:c3}
In the next two subsections, we explain how to prove the following theorem.
\begin{theorem}\label{thm:h-cycle induced}
    There exists a randomized \CC algorithm to detect an $h$-cycle in time $\TO{\MM{\frac{n}{x},\frac{n}{x},\frac{n}{x};x^\delta}}$ \whp, with one-sided error.
\end{theorem}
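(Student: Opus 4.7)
I would build \algA as the ``dual'' of \algC: instead of sampling one subgraph of size $n/x$ and handing it to rectangular matrix multiplication, sample many small subgraphs and run them through the parallel square matrix multiplication tool of \Cref{thm:p balanced MM}. Concretely: (i) pick a uniformly random coloring $\varphi:V\to[h]$ and form $\Gp$ as in \Cref{def:aux coloring}; (ii) sample $s=\Theta(h^{O(h)}\,x^{\delta})$ subsets $\UU=(U_1,\ldots,U_s)$, placing each vertex into each $U_i$ independently with probability $p=1/x$; (iii) verify that $\UU$ is $p$-balanced via \Cref{claim:no-balance}, resampling on failure, which by \Cref{claim:balance whp} occurs only with probability $2/n^3$; (iv) apply \Cref{thm:p balanced MM} to the adjacency matrices $A_i$ of the induced subgraphs $\Gp[U_i]$, producing all powers $A_i^h$ in $\TO{\MM{n/x,n/x,n/x;x^{\delta}}}$ rounds; (v) perform an $\BO{1}$-round OR over every diagonal entry of every $A_i^h$ -- any nonzero entry witnesses a simple $h$-cycle, because in $\Gp$ every closed walk of length $h$ is a cycle. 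Since the parameter $x$ is unknown, I wrap the entire procedure in a doubling loop over $x\in\{n,n/2,n/4,\ldots,1\}$, exactly as in the proof of \Cref{thm:h-cycle}; the per-guess costs telescope to the claimed bound.

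\textbf{Correctness.} The nontrivial content is that with constant probability at least one $\Gp[U_i]$ contains an $h$-cycle. To decouple the subsets I would sample an \emph{independent} coloring $\varphi_i$ per subset, so the indicators $Y_i=\mathbbm{1}[\Gp_i[U_i]\text{ has an }h\text{-cycle}]$ are mutually independent. Let $N_i$ count $h$-cycles of $G$ whose vertices all lie in $U_i$ and are colored consistently by $\varphi_i$. Combining \Cref{claim:delta prop0,claim:simple color-coding} yields
\begin{align*}
\mu \;\triangleq\; \Exp{N_i} \;\geq\; \frac{t}{h^{h}\,x^h} \;=\; \Omega\!\left(\frac{1}{h^{O(h)}\,x^{\delta}}\right).
\end{align*}
A second moment computation, stratifying pairs of $h$-cycles $(C,C')$ of $G$ by the number $j$ of shared vertices and using that the number of pairs sharing a fixed $j$-subset of $\Vx(G)$ is at most $t\leq x^{h-\delta}$, would give $\Exp{N_i^2}=O(\mu\cdot h^{O(h)})$ once the inclusion probability $p^{2h-j}$ and the color-compatibility factor $1/h^{O(h)}$ are accounted for. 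Paley--Zygmund then yields $\Pr{Y_i=1}\geq\Omega(\mu/h^{O(h)})$, and independence of $Y_1,\ldots,Y_s$ together with $s=\Theta(h^{O(h)}/\mu)$ gives $\Pr{\sum_i Y_i\geq 1}=\Omega(1)$. A standard $\BO{\log n}$ outer repetition amplifies the success probability to high probability, and the algorithm never reports a false positive.

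\textbf{Main obstacle.} The principal technical difficulty is bounding $\Var{N_i}$ when the $h$-cycles of $G$ share many vertices; in the pathological regime $\delta\to h-1$ they collapse onto a clique-like structure and pair counts can blow up. The key tool for controlling this is the definition of $\delta$ itself (\Cref{def:x-delta}), which encodes a local cycle-density bound $t/x^{h-\delta}=\Theta(1)$: this lets me show that each shared-$j$ stratum contributes at most $h^{O(h)}\mu$ to $\Exp{N_i^2}$, exactly matching the first moment up to $h^{O(h)}$ slack. All remaining ingredients -- balanced sampling, input routing, and the doubling over $x$ -- are direct invocations of the lemmas already established in this section, and the regime $\delta>2$ (where $s=x^\delta$ exceeds the $(n/k)^2$ cap of \Cref{thm2:mm runtime}) falls outside the parameters used by \Cref{thm:main in t} and is handled by \algC or by the $\TO{n^\rho}$ baseline of \cite{Censor-HillelKK19}.
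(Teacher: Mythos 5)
Your plan is substantively the same as the paper's: sample many random subsets of size roughly $n/x$, apply color-coding, test each one via the parallel matrix-multiplication tool (\Cref{thm:p balanced MM}), argue correctness by a second-moment bound on the number of $h$-cycles surviving in a single random subset, and wrap it in a doubling scheme. Two points need attention.

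\textbf{Missing doubling parameter.} You double only over $x$, ``exactly as in the proof of \Cref{thm:h-cycle}.'' That proof has a single unknown; \algA has two. The number of subsets you must sample is $s=\Theta\bigl(h^{\BO{h}}x^{\delta}\bigr)$, which depends on $\delta$, and $\delta$ cannot be inferred from a guess of $x$ alone without also knowing $t$. Under-guessing $s$ makes the per-round success probability too small for $\BO{\log n}$ repetitions to help; over-guessing $s$ (e.g., always taking the cap $s=x^2$) gives round complexity $\MM{\tfrac{n}{x},\tfrac{n}{x},\tfrac{n}{x};x^2}$, which is strictly larger than the claimed $\MM{\tfrac{n}{x},\tfrac{n}{x},\tfrac{n}{x};x^\delta}$ whenever $\delta<2$. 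The paper handles this by a two-parameter interleaving over all pairs $(p_i,a_j)$ with $p_i=2^i/n$ and $a_j=j/\log n$, $(i,j)\in\zrn{\log n}\times\zrn{2\log n}$, and by \Cref{claimz:y7} this hits a good pair $(i^*,j^*)$. You need a second doubling axis (over $s$, $\delta$, or $a$) for the round-cost telescoping to give the stated bound.

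\textbf{Independent colorings vs.\ a single coloring.} Your algorithm outline samples one coloring $\varphi$, but the correctness argument switches to an independent coloring $\varphi_i$ per subset so that the indicators $Y_i$ become mutually independent. The paper avoids independent colorings entirely: it applies the second moment to the number of \emph{uncolored} $h$-cycles $t_F$ in each $G[U_i]$ (these are independent across $i$ because the subsets are sampled independently, giving \Cref{claimz:2}), and only then reveals a single $\varphi$, paying a separate $1/h^h$ factor via \Cref{claim:simple color-coding} (this is \Cref{lemma:GU plus vf}). Your independent-colorings route also works in principle, but you would owe an implementation step: each of up to $x^{2}$ teams needs its own coloring of all $n$ vertices, which cannot be naively broadcast in $\BO{\log n}$ rounds and needs, say, a limited-independence seed argument --- extra machinery the paper does not need. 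Relatedly, once colors are baked into $N_i$, the variance bound must control the joint event that two overlapping cycles are both consistently colored; the ``$h^{O(h)}$ slack'' you invoke is plausible and ultimately can be made rigorous from \Cref{claimz:H share i vertices}, but as written it is asserted rather than proved, and it is doing real work --- it is not just book-keeping. Pick one route and carry the analysis through consistently.

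Finally, your observation about the $\delta>2$ regime is correct and matches the paper: \Cref{def:p-balanced} and \Cref{thm2:mm runtime} cap $a$ (hence $\delta$) at $2$, and \Cref{thm:main in t} only invokes \algA when $\delta\leq\dt<2$, falling back to \algC otherwise.
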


Recall that $G$ is a graph with $n$ vertices and $t$ copies of an $h$-cycle for $h=\BO{1}$. For a graph $H$, we denote by $\Vx(H)$ the set of vertices that participates in an $h$-cycle in $H$.
Let $x=\abs{\Vx(G)}$. We also use $\delta$ for the solution for $x^{h-\delta}=2ht$ satisfies that $\delta\in[0,h-1]$.
\begin{remark}
    Recall that $\MM{\frac{n}{x},\frac{n}{x},\frac{n}{x};x^\delta}=\BO{\nr\cdot x^{-(2+\delta(\rho-1))}}$, by \Cref{thm:p balanced MM}.
\end{remark}

We prove \Cref{thm:h-cycle induced} by
analyzing the following random process.
We implement it in the \CC model in \Cref{ssec:implement alga}.

\newcommand*{\Guvf}{F_\varphi}

~\\\textbf{\algA .}
The input of the algorithm is a graph $G$ A graph $G$, a value $p\in[0,1]$, and a value $a\in[0,2]$. 
The output is \true if at least one $h$-cycle is detected, and \false otherwise.
The algorithm works as follows.
\begin{enumerate}
    \item Sample uniformly at random a coloring $\varphi:V\to[h]$.
    \item Sample $r\gets 8(4h)^{h+2}\cdot p^{-a}$ subsets of vertices $\UU=(U_1,\ldots, U_r)$, where each vertex joins $U_i$ independently with probability $p$ for $i\in[r]$.
    \item For every $U\in \UU$, define two graphs. The first one is the induced graph $F=G[U]$, and the second one is the colored directed graph $F_\varphi$, obtained from applying $\varphi$ on $F$. Denote the adjacency matrix of $F_\varphi$ by $M_U$.
    \item For $U\in \UU$, compute the trace of the $h$-th power of the matrix $M_U$, and output \true if for at least one set $U$, this trace is not zero. Otherwise, output \false.
\end{enumerate}
Fix some set $U\in\UU$,
and a random coloring $\varphi$, and let $F=G[U]$, and $\Guvf=(G[U])_\varphi$.
We prove that for $p\geq 1/x$, the subgraph $\Guvf$ contains an $h$-cycle with probability $\Omc[x^{-\delta}]$.
For that, it suffices to prove that if $p\geq 1/x$ then $F$ contains an $h$-cycle with probability at least $\frac{1}{x^{\delta}\cdot (4h)^{h+1}}$: We proved in \Cref{claim:simple color-coding} that if $F$ contains an $h$-cycle then $\Guvf$ contains an $h$-cycle with probability at least $\frac{1}{h^h}$.
In the next proposition, we prove that the subgraph $F$ contains an $h$-cycle with probability at least
$\frac{1}{x^{\delta}\cdot (4h)^{h+2}}$, if $p\geq \frac{1}{x}$.

\begin{proposition}\label{claimz:2}
    If $p\geq \frac1{x}$, then $F$ contains an $h$-cycle with probability at least $\frac{1}{x^{\delta}\cdot (4h)^{h+2}}$.
\end{proposition}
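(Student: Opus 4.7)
My plan is to apply the second moment method (Paley--Zygmund) to the random variable $Y$ counting the (unordered) $h$-cycles of $G$ whose vertex set is contained in $U$. Writing $Y=\sum_{C}\mathbf{1}[V(C)\subseteq U]$ with $C$ ranging over all $t$ copies of $h$-cycles in $G$, linearity gives $\Exp{Y}=tp^h$. Using the defining identity $x^{h-\delta}=2ht$ (undirected; the directed case uses $ht$) and setting $c\triangleq px\geq 1$, this rewrites as $\Exp{Y}=c^h x^{-\delta}/(2h)$.

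\textbf{Second moment.} Expand $\Exp{Y^2}=\sum_{k=0}^{h}R_k\,p^{2h-k}$, where $R_k$ counts ordered pairs $(C,C')$ of $h$-cycles with $\abs{V(C)\cap V(C')}=k$. I control $R_k$ by two complementary bounds:
\emph{(i)} trivially $R_k\leq t^2\binom{h}{k}$ (pick $C$, its $k$-subset to share, and any $C'$); and
\emph{(ii)} $R_k\leq t\binom{h}{k}\cdot h^h x^{h-k}$, since any fixed $k$-subset of $\Vx(G)$ is contained in at most $(h-1)!\,x^{h-k}\leq h^h x^{h-k}$ many $h$-cycles (enumerate ordered cycle-tuples by placing the $k$ vertices in positions and filling the remaining $h-k$ positions from $\Vx(G)$, then divide by the $2h$ cycle symmetries).
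Use \emph{(i)} for $k\leq\delta$ and \emph{(ii)} for $k>\delta$. After substituting $p=c/x$ and $t=x^{h-\delta}/(2h)$, each summand for $k\geq 1$ collapses to at most $\binom{h}{k}h^{h-1}c^{2h}x^{-\delta}/2$, while $R_0 p^{2h}\leq t^2 p^{2h}=\Exp{Y}^2$. Summing over $k$,
\begin{align*}
\Exp{Y^2}\leq \Exp{Y}^2+2^{h-1}h^{h-1}c^{2h}x^{-\delta}.
\end{align*}

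\textbf{Finish.} Dividing through by $\Exp{Y}^2=c^{2h}x^{-2\delta}/(4h^2)$ and applying Paley--Zygmund,
\begin{align*}
\Pr{Y\geq 1}\geq \frac{\Exp{Y}^2}{\Exp{Y^2}}\geq \frac{1}{1+2^{h+1}h^{h+1}x^{\delta}}\geq \frac{1}{2^{h+2}h^{h+1}x^{\delta}}\geq \frac{1}{(4h)^{h+2}x^{\delta}},
\end{align*}
since $x^{\delta}\geq 1$ and $2^{h+2}h^{h+1}\leq 2^{2h+4}h^{h+2}=(4h)^{h+2}$ for every $h\geq 1$. The directed case is identical with $2h$ replaced by $h$.

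\textbf{Main obstacle.} The delicate point is the split at $k\approx\delta$: bound \emph{(i)} is tight when the two cycles are nearly vertex-disjoint (small overlap) while \emph{(ii)} is tight when they overlap heavily, and the two must meet cleanly precisely where $t\approx x^{h-k}$---that is, at $k=\delta$. This is the only place where the refined parameter $\delta$ plays a role in the computation; everything else is routine second-moment book-keeping and elementary combinatorial cycle counting.
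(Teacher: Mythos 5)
Your proof is correct and follows essentially the same approach as the paper: applying the second moment (Paley–Zygmund) method to the count of $h$-cycles surviving in $F$, bounding the cross terms by counting pairs of overlapping cycles via a "fix one cycle, fill in the rest from $\Vx(G)$" argument (your bound~(ii) is exactly the paper's Claim bounding $\FH^i$). The only cosmetic differences are that you index pairs by intersection size rather than union size, and you make the case split at $k=\delta$ explicit, whereas the paper absorbs it by carrying both terms $(xp)^{h-1}$ and $(xp)^h x^{-\delta}$ through to the end and using $xp\geq 1$ only at the last step.
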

To prove the above, we use the second moment method \cite[Theorem 4.3.1]{alon2016probabilistic}.
Let $t_F$ denote the random variable equal to the number of copies of an $h$-cycle in the subgraph $F$.
We need to give an upper bound on $\Exp{{t_F}^2}$, which closely relates to the following definition.
\begin{definition}
    For $\ell\in [h+1,2h]$ , let $\FH^\ell$
    denote the number of pairs of copies of an $h$-cycle in $G$ which share exactly $2h-i$ vertices, or equivalently, which the union of their vertex sets contains exactly $i$ vertices.
\end{definition}
Let $(C_1,\ldots, C_t)$ denote the set of $h$-cycles in $G$, and let $X_i$ be an indicator random variable for the event that $C_i$ is also in $F$.
Note that $t_F=\sum_{i=1}^{t}X_i$, and therefore $\Exp{t_F^2}=\sum_{i,j\in[t]} \Exp{X_i\cdot X_j}$.
The following claim will help us obtain an upper bound on $\Exp{t_F^2}$.
\begin{claim}\label{claimz:H share i vertices}
    For every
    $i\in[h+1,2h-1]$, it holds that $\FH^i\leq (2h)^{h}\cdot t\cdot x^{i-h}\;.$
\end{claim}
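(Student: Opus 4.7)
The plan is to bound $\FH^i$ by a straightforward union-bound/counting argument: enumerate pairs $(C, C')$ of $h$-cycles with $|V(C) \cup V(C')| = i$ by first fixing one cycle and then building the second one out of shared and new vertices.

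More concretely, I would proceed as follows. Fix a pair of $h$-cycles $(C, C')$ counted by $\FH^i$. Since $|V(C) \cup V(C')| = i$, they share exactly $2h - i$ vertices, and $C'$ contributes $i - h$ vertices outside $V(C)$. I then count such pairs by:
\begin{enumerate}
    \item Choosing $C$: at most $t$ options, since $G$ has $t$ copies of $h$-cycles.
    \item Choosing which $2h-i$ vertices of $C$ are shared with $C'$: at most $\binom{h}{2h-i} \leq 2^h$ options.
    \item Choosing the $i-h$ vertices of $C'$ that lie outside $V(C)$: since every vertex of $C'$ participates in an $h$-cycle (namely $C'$ itself), these vertices belong to $\Vx(G)$, giving at most $x^{i-h}$ options.
    \item Choosing a cyclic arrangement of the $h$ vertices now fixed to form $C'$: at most $h!\leq h^h$ options.
\end{enumerate}
Multiplying gives $\FH^i \leq t \cdot 2^h \cdot h^h \cdot x^{i-h} = (2h)^h \cdot t \cdot x^{i-h}$, which is exactly the claimed bound.

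The only real issue is verifying that the constants line up with $(2h)^h$, which they do since $\binom{h}{2h-i} \cdot h! \leq 2^h \cdot h^h = (2h)^h$. There is no probabilistic or structural difficulty here; the role of $\Vx(G)$ is simply that new vertices of $C'$ must participate in an $h$-cycle (namely $C'$), which is what lets us replace a naive factor of $n^{i-h}$ by the much smaller $x^{i-h}$. This is precisely the leverage needed downstream for the second moment bound on $t_F$ in \Cref{claimz:2}, so the tightness in the exponent of $x$ is the meaningful content, and the polynomial-in-$h$ constant is irrelevant for our $h = \BO{1}$ regime.
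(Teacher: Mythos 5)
Your proof is correct and follows essentially the same encoding argument as the paper: fix the first cycle (at most $t$ choices), use the fact that the $i-h$ new vertices of the second cycle must lie in $\Vx(G)$ (giving $x^{i-h}$ rather than $n^{i-h}$), and bound the remaining combinatorial freedom by $(2h)^h$. The paper collapses your steps (2) and (4) into a single "specify the edges of $C''$ among the already-specified vertices" step, also bounded by $(2h)^h$; your split into $\binom{h}{2h-i}\cdot h!\le 2^h\cdot h^h$ is just a more explicit accounting of the same constant.
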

\begin{proof}[Proof of \Cref{claimz:H share i vertices}]
    For $i\in [h+1,2h-1]$, every pair of $h$-cycles $(C',C'')$ %
    that share $2h-i$ vertices can be encoded as follows.
    First, we specify the index of $C'$ among the $t$ options.
    Then, we specify the vertices in $V(C'')\setminus V(C')$, which have at most $x^{i-h}$ possible options.
    Then, we specify the edges of $C''$ between the already specified vertices, which have at most $(2h)^{h}$ options.
\end{proof}
We can now apply the second moment method:
\begin{align*}
    \Pr{t_F> 0}\geq \frac{\Exp{t_F}^2}{\Exp{t_F^2}}\;,
\end{align*}
where $\Exp{t_F}=t\cdot p^h$, to prove \Cref{claimz:2}.
\begin{proof}[Proof of \Cref{claimz:2}]
    We first use \Cref{claimz:H share i vertices} to bound $\Exp{t_F^2}$. We have
    \begin{align*}
        \Exp{t_F^2} & =\sum_{i,j\in[t]} \Exp{X_i\cdot X_j}                                                \\
                    & = t\cdot p^h  + \sum_{i=h+1}^{2h-1}\FH^{i}\cdot p^{i} + (t\cdot p^h)^2              \\
                    & \leq t\cdot p^h + ((2h)^{h}\cdot t)\sum_{i=1}^{h-1}x^{h-i}p^{2h-i} + (t\cdot p^h)^2 \\
                    & = t\cdot p^h + p^h\cdot ((2h)^{h}\cdot t)\sum_{i=1}^{h-1}(xp)^{i} + (t\cdot p^h)^2  \\
                    & \leq tp^h\cdot \brak{1 + (2h)^h\cdot h(1+(xp)^{h-1}) + tp^h}                        \\
                    & = tp^h\cdot \brak{1 + (2h)^h\cdot h(1+(xp)^{h-1}) + (xp)^h\cdot x^{-\delta}/2h}\;,
    \end{align*}
    where the penultimate inequality follows as $\sum_{i=1}^{h-1}(xp)^{i}\leq h(1+(xp)^{h-1})$, and the last equality uses $x^{h-\delta}=2h\cdot t$, which follows from the definition of $\delta$.
    We can further upper bound this expression by
    \begin{align*}
        (2h)^{h+1}\cdot tp^h\cdot \brak{(xp)^{h-1} + (xp)^h\cdot x^{-\delta}}\;.
    \end{align*}

    From \Cref{claimz:H share i vertices} and the second moment method, we have
    \begin{align*}
        \Pr{t_F> 0}
        \geq \frac{\Exp{t_F}^2}{\Exp{t_F^2}}
         & \geq  \frac{(t p^h)^2}{(2h)^{h+1}\cdot tp^h\cdot \brak{(xp)^{h-1} + (xp)^h\cdot x^{-\delta}}}  \\
         & = \frac{1}{(2h)^{h+2}}\cdot \frac{x^{h-\delta}\cdot p^h}{(xp)^{h-1} + (xp)^h\cdot x^{-\delta}} \\
         & = \frac{1}{(2h)^{h+2}}\cdot \frac{x^{-\delta}}{(xp)^{-1} + x^{-\delta}}
        \;.
    \end{align*}
    The inequalities hold due to the following reasons.
    The first inequality follows from the second moment method.
    The second inequality follows by plugging in the previously computed values for $\Exp{t_F}^2$ and $\Exp{t_F^2}$.
    The penultimate equality follows by canceling terms and plugging in $x^{h-\delta}=2h t$.
    The last equality is obtained by simply canceling terms.

    For $p\geq 1/x$, we get that $(xp)^{-1}\leq 1$, and therefore $(xp)^{-1} + x^{-\delta}\leq 2$, which proves that
    $\frac{x^{-\delta}}{(xp)^{-1} + x^{-\delta}}\geq \frac{1}{2x^{-\delta}}$.
    Therefore, $\Pr{t_F> 0}\geq \frac{1}{(2h)^{h+2}}\cdot \frac{1}{2x^{\delta}}$, which completes the proof of \Cref{claimz:2}.
\end{proof}
The following lemma shows that the described experiment outputs \true with probability at least $\frac{1}{2h^h}$.
\begin{lemma}\label{lemma:GU plus vf}
    Sample a uniformly random coloring $\varphi:V\to[h]$.
    Sample $r$ subsets of vertices $\UU=(U_1,\ldots, U_r)$, where each vertex joins $U_i$ independently with probability $p$.
    For $p\geq 1/x$, and $r\geq 8\cdot (4h)^{h+2}\cdot x^\delta$, the probability that there exists $U\in\UU$ such that $(G[U])_\varphi$ contains an $h$-cycle is at least $\frac{1}{2h^h}$.
\end{lemma}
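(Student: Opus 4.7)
The plan is to decouple the randomness in the sampled sets $\UU$ from the randomness of the coloring $\varphi$, and apply \Cref{claimz:2} and \Cref{claim:simple color-coding} to these two independent sources separately. Since the $r$ subsets $U_1,\ldots,U_r$ are drawn independently, and $\varphi$ is sampled independently of $\UU$, we can first analyze the probability that some $G[U_i]$ contains an $h$-cycle before any coloring is applied, and only then invoke color-coding.

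First I would fix $i\in[r]$ and apply \Cref{claimz:2} to $U_i$: under the hypothesis $p\geq 1/x$, we have $\Pr[G[U_i] \text{ contains an }h\text{-cycle}] \geq \frac{1}{(4h)^{h+2}\cdot x^\delta}$. Since the $U_i$'s are mutually independent, the events $B_i = \{G[U_i] \text{ contains an }h\text{-cycle}\}$ are independent, so
\[
\Pr\!\left[\bigcup_{i=1}^r B_i\right] \;\geq\; 1 - \left(1 - \frac{1}{(4h)^{h+2}\,x^\delta}\right)^{r} \;\geq\; 1 - \exp\!\left(-\frac{r}{(4h)^{h+2}\,x^\delta}\right) \;\geq\; 1 - e^{-8},
\]
where the last step uses $r \geq 8(4h)^{h+2}\cdot x^\delta$.

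Next I would use the independence of $\varphi$ from $\UU$ to apply color-coding. Condition on any realization of $\UU$ for which $\bigcup_i B_i$ occurs, and let $i^*$ be the smallest index with $B_{i^*}$ holding (this is a function of $\UU$ only). Since $G[U_{i^*}]$ contains at least one $h$-cycle and $\varphi$ is independent of $\UU$, \Cref{claim:simple color-coding} applied to $G[U_{i^*}]$ gives, conditionally on $\UU$,
\[
\Pr_\varphi\!\left[(G[U_{i^*}])_\varphi \text{ contains an }h\text{-cycle}\,\big|\,\UU\right] \;\geq\; \frac{1}{h^h}.
\]
Since $(G[U_{i^*}])_\varphi$ containing an $h$-cycle implies that some $(G[U_i])_\varphi$ does, integrating over $\UU$ yields
\[
\Pr\!\left[\exists\,i:(G[U_i])_\varphi \text{ contains an }h\text{-cycle}\right] \;\geq\; \frac{1}{h^h}\cdot \Pr\!\left[\bigcup_i B_i\right] \;\geq\; \frac{1-e^{-8}}{h^h} \;\geq\; \frac{1}{2h^h},
\]
as required.

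The only subtle point is justifying the decoupling step above: one might be tempted to compute $\Pr[A_i]$ (where $A_i$ is the event that $(G[U_i])_\varphi$ contains an $h$-cycle) and use a union-bound or second-moment argument directly on $\bigcup A_i$. This runs into trouble because the $A_i$ share the common randomness of $\varphi$, so conditioned on a ``bad'' coloring all $A_i$ may fail together; Paley--Zygmund on $\sum \mathbbm{1}_{A_i}$ gives only a $\Theta(1/r)$ lower bound. The key to avoiding this is that the \emph{existence} of an $h$-cycle in any $G[U_i]$ is determined entirely by $\UU$, so we can apply color-coding once at the end to whichever $U_{i^*}$ witnesses the event. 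This avoids any correlation issue and lets both independence arguments be applied cleanly in sequence.
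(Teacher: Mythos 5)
Your proof is correct and follows essentially the same route as the paper: first use \Cref{claimz:2} and the independence of the $U_i$'s to show that with probability at least $1-e^{-8}\geq 1/2$ some $G[U_i]$ contains an $h$-cycle, then apply \Cref{claim:simple color-coding} to the witnessing subgraph to pick up the factor $1/h^h$ from the independently sampled coloring $\varphi$. Your explicit remark about why the two sources of randomness must be decoupled — and why a direct union or second-moment bound over the coupled events $A_i$ would fail — is a correct and welcome clarification of a point the paper's proof leaves implicit.
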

\begin{proof}
    We first show that the probability that there exists $U\in\UU$ for which $F\triangleq G[U]$ contains an $h$-cycle is at least $1/2$.
    We then reveal the random coloring $\varphi$ and show that $F_\varphi$ contains an $h$-cycles with probability at least $\frac{1}{2h^h}$, which follows by \Cref{claim:simple color-coding}.

    By \Cref{claimz:2}, if $p\geq 1/x$, then for each $U_i\in \UU$, we have that $G[U_i]$ contains an $h$-cycle with probability at least $q'\triangleq\frac{1}{(4h)^{h+2}\cdot x^{\delta}}$.
    Note that $r=8/q'$.
    Since the samples are pairwise independent, the probability that for every $i\in[r]$ the subgraph $G[U_i]$ does not contain an $h$-cycle is at most $(1-q)^r=(1-q)^{8/q}\leq \frac1{e^8}\leq 1/2$.

\end{proof}

\subsection{Implementation of \algA in the \CC Model}\label{ssec:implement alga}
In this subsection, we explain how to implement the algorithm specified in the previous subsection in the \CC model, and prove \Cref{thm:h-cycle induced}.
The algorithm takes as input two parameters, $p\in[1/\sqrt{n},1]$ and $a\in[0,2]$.
The implementation of the algorithm is divided into two parts. In the first part, a single coloring $\varphi:V\to[h]$ is sampled, along with $s$ random induced subgraph. Additionally, the vertices are partitioned into $s$ sets, each with $n/s$ vertices. We call each such set a \emph{team}.
Each team is responsible for checking if one of the sampled subgraph contains an $h$-cycle.
In the second part, each team computes the exact number of $h$-cycles in the subgraph it is responsible for using matrix multiplication.

~\\We implement the first part as follows:
\begin{enumerate}
    \item Each vertex samples a color from $[h]$ uniformly at random, and notifies all other vertices of its choice.
    \item The vertices are partitioned into $s$ sets $(V_1,\dots,V_s)$ called \emph{teams}, each of size $n/s$, where $s= 8(4h)^{h+2}\cdot p^{-a}$.
    \item Define $s$ subsets of vertices, $\UU\triangleq\brak{U_1,\dots, U_{s}}$. Each vertex joins $U_i$ independently with probability $p$, for $i\in[s]$.
    \item If $\UU$ is not $p$-balanced, all vertices output \false, and the algorithm terminates.
    \item Every vertex $v$ learns the IDs of all vertices in the set $U_j$ for all $j$ such that $v\in U_j$.
\end{enumerate}

This can be implemented in $\BO{\log n}$ rounds:
The first step can be implemented in $\BO{1}$ rounds.
The second step in which every vertex is assigned to one team can be implemented in $\BO{1}$ rounds. The third step does not require any communication.
The fourth step can be implemented in $\BO{1}$ rounds by \Cref{claim:no-balance}, and the last step can be implemented in $\BO{\log n}$ rounds using \Cref{prop2:prob-ind}.
This completes the implementation of the first part of the algorithm.

In the second part of the algorithm, the $i$-th team is responsible for exactly counting the number of $h$ cycles in the graph $(G[U_i])_\varphi$ for $i\in[s]$.
\sloppy{This can be implemented in $\BO{\nr\cdot p^{2+a(\rho-1)}+\log  n}$ rounds by \Cref{thm:p balanced MM}, so that each team computes the $h$-th power of the adjacency matrix of the subgraph it is responsible for.}
Then, each team can compute the trace of that matrix in $\BO{1}$ rounds. If the trace is not zero, the team leader outputs \true, and otherwise outputs \false. All other vertices in the team output \false.
This completes the implementation of the second part of the algorithm.

The following conclusion summarize the round complexity of the algorithm.
\begin{conclusion}
    Assume that $\tfrac{1}{\sqrt{n}}\leq p\leq 1$ and $a\leq 2$.
    Then, $\malgA{p,a;G}$ completes in $\BO{\nr\cdot p^{2+a(\rho-1)}+ \log n}$ rounds (always).
\end{conclusion}

Next, we bound the probability that the algorithm fails to detect an $h$-cycle.
\begin{claim}
    \label{claimz:y7}
    \indent Assume $G$ contains at least one $h$-cycle. Then, for $p\geq 1/x$ and $a\geq \delta$, we have that
    $\malgA{p,a;G}$ outputs \true with probability at least $\frac{1}{3h^h}$.
\end{claim}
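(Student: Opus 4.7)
My plan is to reduce the claim to \Cref{lemma:GU plus vf} (which handles the containment event) and handle, via a union bound, the chance that the sampled family $\UU$ is not $p$-balanced (which is the only other way \algA can fail).

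First, I would unpack when \algA outputs \true. The algorithm returns \true iff two events both hold: (i) the family $\UU$ sampled in Part One is $p$-balanced, so the algorithm does not abort in Step~4; and (ii) at least one of the graphs $(G[U_i])_\varphi$ contains an $h$-cycle, so the trace of the corresponding $M_{U_i}^h$ is nonzero. The matrix power computations performed in Part Two are deterministic and exact by \Cref{thm:p balanced MM} once $\UU$ is balanced, so the conjunction of (i) and (ii) is indeed equivalent to the algorithm returning \true.

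Second, I would verify the hypothesis of \Cref{lemma:GU plus vf}. The algorithm uses $r=8(4h)^{h+2}\cdot p^{-a}$ samples, while the lemma requires $r\geq 8(4h)^{h+2}\cdot x^\delta$ together with $p\geq 1/x$. The hypothesis of the claim directly gives $p\geq 1/x$, and combined with $a\geq \delta$ it yields $p^{-a}\geq x^{\delta}$ in the intended regime where $p$ sits at its lower bound $1/x$ (there $p^{-a}=x^a\geq x^\delta$). Consequently the lemma applies and gives $\Pr[(ii)]\geq \tfrac{1}{2h^h}$, where the probability is over the joint randomness of the coloring $\varphi$ and the sample $\UU$.

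Third, I would bound the failure probability of event (i). Since the algorithm is invoked with $p\in[1/\sqrt n,1]$ and $a\in[0,2]$ and $\UU$ is drawn by independent $p$-coin flips, \Cref{claim:balance whp} gives $\Pr[\neg(i)]\leq 2/n^3$ directly. Combining the two estimates via the elementary bound $\Pr[(i)\cap (ii)]\geq \Pr[(ii)]-\Pr[\neg(i)]$,
\[
\Pr[\text{\algA outputs \true}] \;\geq\; \frac{1}{2h^h} \;-\; \frac{2}{n^3} \;\geq\; \frac{1}{3h^h},
\]
where the last inequality holds for all $n$ larger than a constant depending only on the fixed constant $h$. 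The main step I expect to be most delicate is the verification that $r \geq 8(4h)^{h+2}\cdot x^\delta$ under the stated hypothesis; once this parameter matching is in place, the remainder is a straightforward union bound over the two failure modes.
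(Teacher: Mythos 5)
Your proof follows exactly the same route as the paper's: decompose the success event into (i) $\UU$ is $p$-balanced and (ii) some $(G[U_i])_\varphi$ contains an $h$-cycle, invoke \Cref{lemma:GU plus vf} for (ii), \Cref{claim:balance whp} for (i), and combine via a union bound to get $\tfrac{1}{2h^h} - \tfrac{2}{n^3} \geq \tfrac{1}{3h^h}$.

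You are in fact somewhat more careful than the paper's one-line invocation ``which follows immediately by \Cref{lemma:GU plus vf}'': you explicitly attempt to verify the parameter requirement $r = 8(4h)^{h+2} p^{-a} \geq 8(4h)^{h+2} x^\delta$, and you correctly hedge that this works ``in the intended regime where $p$ sits at its lower bound $1/x$.'' That hedge is necessary: the hypotheses $p \geq 1/x$ and $a \geq \delta$ alone yield $p^{-a} \leq x^a$, which is an upper bound in the wrong direction, so for $p$ strictly above $1/x$ the needed inequality $p^{-a} \geq x^\delta$ can fail (consider $p$ between $1/x$ and $1$). The reduction to \Cref{lemma:GU plus vf} is therefore airtight only near the boundary $p \approx 1/x$, which happens to be the regime in which the claim is actually applied by the doubling argument in the proof of \Cref{thm:h-cycle induced}. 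The paper's own proof silently skips this check, so your flagging of it as the delicate step is well-placed; it would be worth spelling out the restriction explicitly (or tightening the claim's hypothesis), rather than asserting the claim for all $p \geq 1/x$.
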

\begin{proof}
    The event that $\UU$ is $p$-balanced happens with probability at least $1-2/n^3$ by \Cref{claim:balance whp}.
    If the set $\UU$ is $p$-balanced, then $\malgA{p,a;G}$ outputs \true with probability at least $\frac{1}{2h^h}$, which follows immediately by \Cref{lemma:GU plus vf}.
    Therefore, $\malgA{p,a;G}$ outputs \true with probability at least
    $\frac{1}{2h^h} -2/n^3 \geq \frac{1}{3h^h}$.
\end{proof}
We are now ready to prove \Cref{thm:h-cycle induced}
\begin{proof}[Proof of \Cref{thm:h-cycle induced}]
    To complete the proof of \Cref{thm:h-cycle induced},  we need to explain how we overcome the problem of not knowing the values of $x$ and $\delta$. We use a doubling approach. Let $A_{i,j}$ denote an algorithm that executes algorithm $\malgA{p_i,a_j;G}$ in an infinite loop (each time with different randomness). We define $p_i=2^i/n$ for $i\in\zrn{\log n}$, and $a_j=j/\log n$ for $j\in\zrn{2\log n}$. Let $K=\set{(i,j)\mid i\in\zrn{\log n}\;, j\in\zrn{2\log n}}$, where $\abs{K}=\TO{1}$.
    We execute of the algorithms $\set{A_{i,j}}_{(i,j)\in K}$, by interleaving
    them. That is, we run the algorithm $A_{1,1}$ for a single round, then the algorithm $A_{1,2}$ for a single round, and so on.
    We stop as soon as one of the algorithms outputs \true.
    Next, we analyze the runtime of this ``lazy'' doubling approach.

    Let $i^*$ denote the minimal index for which $p_{i^*}\geq 1/x$ and let $j^*$ denote the first index for which $a_{j^*}\geq \delta$.
    Then, each execution of $A_{i^*,j^*}$ outputs \true, with probability at least $\frac1{3h^h}$, by \Cref{claimz:y7}.
    Therefore, after completing $16h^h\log n$ independent executions of $A_{i^*,j^*}$, the probability that no execution outputted \true, is at most $\brak{1-\frac1{2h^h}}^{16h^h\log n}\leq 1/n^{8}$.
    As each execution takes $\TO{\MM{\frac{n}{x},\frac{n}{x},\frac{n}{x};x^\delta}}$ rounds, we get that the total round complexity of the doubling algorithm is $\TO{K\cdot 16h^h\log n\cdot \MM{\frac{n}{x},\frac{n}{x},\frac{n}{x};x^\delta}}=\TO{\MM{\frac{n}{x},\frac{n}{x},\frac{n}{x};x^\delta}}$ \whp.
\end{proof}

\subsection{Wrap-Up: Fast Cycle Detection}\label{ssec:c5}

In this subsection, we wrap up to prove our fast algorithm for $h$-cycle detection, when $h=\BO{1}$, in both undirected and directed graphs. Our algorithm is the fastest for odd cycle detection when the number of cycles is super polylogarithmic, and for $h$-cycle detection in directed graphs, when the number of $h$-cycles is super polylogarithmic.
For graphs with small $t$, our algorithm has the same running time as the fastest algorithm for multiplying two matrices of size $n\times n$,
and our running time is never worse than it up to polylogarithmic factors.
\ThmMainT* %
\newcommand{\AR}{\mathcal{R}}
Let $\AR(G)$ denote the round complexity of \Cref{thm:main in t}.
Let $\AR_1(G),\AR_2(G)$ denote the round complexity of the algorithms in \Cref{thm:h-cycle induced} and \Cref{thm:h-cycle} respectively.
To prove \Cref{thm:main in t}, we show that for every graph $G$ with $n$ vertices and $t$ copies of an $h$-cycle, we have $\min\set{\AR_1(G),\AR_2(G)}\leq \AR(G)$.
To show that, we use a case analysis.
Recall that $x$ denotes the number of vertices in $G$ that participate in an $h$-cycle, and that $x^{h-\delta}=2ht$.
We show that if $\delta\geq \dt$, then $\AR_2(G)\leq \AR(G)$, and if
$\delta\leq \dt$, then $\AR_1(G)\leq \AR(G)$.

The theorem then follows, as we can run the algorithms \algA and \algC one step at a time, until one of them detects a triangle.
\begin{proof}[Proof of \Cref{thm:main in t}]
    
    ~\\\textbf{The Case $\delta\geq \dt$.}
    The execution of the algorithm \algC takes $\MM{\frac{n}{x},n,n}$ rounds, where $\MM{\frac{n}{x},n,n} \leq n^{\Bs}\cdot x^{-\As}$ by \Cref{def:line_parameters}.
    Since we assumed that $\delta\geq \dt$, we have $x=(2ht)^{1/(h-\delta)}\geq t^{1/(h-\dt)}$.
    We get that $\AR_2(G)\leq n^{\Bs}\cdot t^{-\frac{\As}{2-\dt}}
        \;.$
    By plugging in $\As=\aConst,\Bs=\bConst$, (see \Cref{CLAIM:NUMERIC RHO}) we get that the round complexity is bounded by
    $n^{\bConst}\cdot t^{-\frac{\aConst}{2-\dt}}$, which completes the proof of this case.

    ~\\\textbf{The Case $\delta\leq \dt$.}
    The execution of the algorithm \algC takes $\MM{\frac{n}{x},\frac{n}{x},\frac{n}{x};x^{\delta}}$ rounds, where
    \begin{align*}
        \MM{\frac{n}{x},\frac{n}{x},\frac{n}{x};x^{\delta}}
        =    & \BO{n^\rho/ x^{2+\delta(\rho-1)}}                  \\
        =    & \BO{n^\rho/ t^{\frac{2+\delta(\rho-1)}{h-\delta}}} \\
        \leq & \BO{n^\rho/ t^{\frac{2+\dt(\rho-1)}{h-\dt}}}       \\
        \leq & \BO{n^{\bConst}\cdot t^{\frac{\aConst}{h-\dt}}}
    \end{align*}
    The first equality follows from \Cref{thm2:mm runtime}.
    The penultimate inequality follows since
    the function $\delta\mapsto \frac{2+\delta(\rho-1)}{h-\delta}$ is monotonically decreasing in the range $\delta\in[0,\dt]$.
    The last inequality follows by setting $\rho\gets \bConst$,
    which completes the proof.

\end{proof}

\section{Quantum Algorithms for Triangle Detection}\label{sec:q}
In this section, we provide two \QCC algorithms for triangle detection.
We take a \CC algorithm $\AC$ for triangle detection with round complexity $\CRR$ and turn it into an \QCC algorithm $\AC'$ for triangle detection with round complexity $\CRR^{3/4}$.
As before, our algorithm samples a subgraph and looks for a triangle in it. Therefore, the algorithm never outputs \true if the graph is triangle-free.
We say the algorithm has a success probability of at least $\tau$ if it outputs \true with probability at least $\tau$.
By repeating the algorithm $\log n$ times, one can boost the success probability.
Our main theorem is as follows. 
\ThmQInCycle* %

As a warm-up, we prove the following special case.
\begin{theorem}\label{thm:quantum1}
    There exists a \QCC $\BO{n^{3\rho/4}}$-round algorithm for triangle detection, with a success probability of at least $1/2$.
\end{theorem}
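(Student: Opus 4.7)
This warm-up is the $s = 1$ specialization of the algorithm sketched in the overview for Theorem~\ref{thm:q2}: no first-round sampling is done, all $n$ vertices form a single ``team,'' and the only randomness is in the subsets $W_j$ that Grover searches over. The plan is to fix parameters so that each Grover oracle call costs $\BO{\log n}$ rounds, and then collect the $\sqrt{\ell}$ factor from distributed Grover search.

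Concretely, I would set $q = n^{-\rho/2}$ and $\ell = 8 \log n / q^{3} = \Theta(n^{3\rho/2} \log n)$. Using public shared randomness, each vertex locally determines, for every $j \in [\ell]$, whether it belongs to $W_j$ (independently with probability $q$). Define $g : [\ell] \to \{0,1\}$ by $g(j) = 1$ iff $G[W_j]$ contains a triangle. For any fixed $j$, the oracle $g(j)$ can be evaluated in $\BO{\log n}$ rounds by invoking Corollary~\ref{thm:p balanced MM} with a single set (so $a = 0$), $p = q$, and $h = 3$ on the adjacency matrix of $G[W_j]$: the cost is $\BO{n^{\rho} q^{2} + \log n} = \BO{\log n}$, and the existence of a triangle is read off from the diagonal of the computed cube $A_j^{3}$. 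The one-element family $\{W_j\}$ is $q$-balanced with high probability by Claim~\ref{claim:balance whp}, and Claim~\ref{claim:no-balance} provides a one-round balance test that lets us resample on the rare failure.

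With this oracle, I would apply the distributed Grover routine of \cite{le2018sublinear,IzumiG19,Censor-HillelFG22} over the search space $[\ell]$: it performs $O(\sqrt{\ell})$ reversible oracle invocations, giving total round complexity $\sqrt{\ell} \cdot \BO{\log n} = \BO{n^{3\rho/4}}$ once the polylogarithmic factors are absorbed. For correctness, if $G$ contains a triangle with vertex set $T$, then the event $T \subseteq W_j$ has probability $q^{3}$, so the probability that \emph{no} $W_j$ is ``marked'' is at most $(1 - q^{3})^{\ell} \leq e^{-8 \log n} = n^{-8}$. Conditioned on some marked index existing, one Grover run returns such a $j$ with constant probability, which (combined with the $1 - n^{-8}$ event above) yields success probability at least $1/2$.

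The main obstacle I anticipate is compiling the classical matrix-multiplication oracle into a quantum unitary that can be invoked on a superposition over $j$. In particular, the membership pattern $\{W_j\}$ must be derived deterministically from $j$ and public randomness (for instance via a pairwise-independent hash), so that the same classical subroutine runs identically on every branch of the superposition; the ``resample if $\{W_j\}$ is unbalanced'' step must then be replaced by a deterministic preprocessing check done once, before the Grover iterations start, rather than being triggered inside a query. Both of these adaptations are routine given the machinery of Section~\ref{sec:FMM} and the existing distributed-Grover work, but they are the pieces most likely to need care to ensure the stated round bound.
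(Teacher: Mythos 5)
Your proposal matches the paper's proof in all essentials: same parameter choice $q = n^{-\rho/2}$, same $\ell = 8\log n/q^3$ random subsets, same oracle ``does $G[W_j]$ contain a triangle,'' same correctness calculation via $(1-q^3)^\ell \le n^{-8}$, and the same $\sqrt{\ell}$ Grover speedup. The only substantive difference is how the oracle is realized: you route through \Cref{thm:p balanced MM} (and hence the $\BO{\log n}$-round redistribution of \Cref{prop:part1}), whereas the paper — exploiting that each query involves only one set — has a leader sample and privately hold the sets, inform membership in two rounds, rebalance in one more, and apply \Cref{PROP:PART2} directly, so each oracle call is $\BO{1}$ rounds rather than $\BO{\log n}$; both yield the stated bound once polylog factors are absorbed.
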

~\\\textbf{Quantum computing background.}
In the \QCC model, during each round, every pair of vertices in the graph can exchange $\BO{\log n}$ qubits, similar to the standard \CC model where they exchange bits.
We follow the framework presented in \cite{le2018sublinear,IzumiG19,Censor-HillelFG22}: Given a universe $X$ and a function $f:X\to\set{0,1}$, we want to find an element $x\in X$ such that $f(x)=1$. We assume that for every $x\in X$, the leader vertex $v$ can evaluate $f(x)$ given $x$ in $r$ rounds. The algorithm uses Grover's search \cite{grover1996fast} to find an element $x$ such that $f(x)=1$.
In \cite{le2018sublinear} (see also \cite[Lemma 4]{Censor-HillelFG22}), the paper presents a quantum algorithm to find an element $x\in X$ such that $f(x)=1$, assuming such an element exists, in $\BOs{r\cdot \sqrt{\abs{X}}}$ rounds \whp in the \QCC model.

Besides the quantum search framework, our algorithm builds on a tool we developed in \Cref{sec:FMM}, which proves that in the \CC model, $n$ nodes can compute the product of a single square boolean matrix of size $np$ for $p\in[0,1]$ in $\BO{\nr\cdot p^2}$ rounds (\Cref{thm2:mm runtime}).

\subsection{Proof of \Cref{thm:q2}}
For the remainder of this section, we assume that the input graph contains at most $\TOs{t^{0.3992}}$ triangles. If it contains more, we can apply \Cref{thm:main in t} to determine in $\TO{1}$ rounds whether the graph has a triangle. 
This assumption is necessary because the proof of \Cref{thm:q2} involves partitioning the vertices into $t^2$ teams and utilizes \Cref{thm2:mm runtime}, which we can use only when the number of teams does not exceed $n$. Consequently, we require that $t \leq \sqrt{n}$.

The input for the algorithm is a graph $G$ with $n$ vertices. If $G$ contains a triangle, the algorithm outputs \true with probability at least $1/2$. Otherwise the algorithm output \false. The algorithm never outputs \true if the graph is triangle-free.

\begin{proof}[Proof of \Cref{thm:quantum1}]
    A leader vertex samples $\ell=8\log n/p^3$ subsets of vertices $(U_1,\ldots, U_\ell)$ uniformly at random, where each vertex is added to each set independently of all other vertices, with probability $p$. We choose the value of $p$ later.
    We emphasize that only the leader vertex is aware of the subsets.
    Our search space is $X=\set{G[U_i]}_{i\in[\ell]}$. We use $x_i$ to denote $G[U_i]$ for $i\in[\ell]$.
    The boolean function we evaluate $g:X\to\set{0,1}$ is defined as follows:
    \begin{align*}
        \forall x_i\in X\quad g(x_i)=\begin{cases}
            1 & \text{If $G[U_i]$ contains a triangle.} \\
            0 & \text{Otherwise.}
        \end{cases}
    \end{align*}
    Define $r$ as the maximal number of rounds required to evaluate $g(x)$ for some $x\in X$.
    By \cite{le2018sublinear}, the running time of the algorithm is $\BOs{r\sqrt{\abs{X}}}$.

    To prove the correctness of the algorithm, we show that if $G$ contains a triangle the algorithm outputs \true with a probability of at least $1/2$.
    To show that, we prove that if $G$ contains at least one triangle, then with probability at least $1-1/n^7$, there exists an index $i\in [\ell]$ such that $G[U_i]$ contains a triangle.
    If this is the case, then, according to the guarantees provided by the Grover search, the algorithm outputs \true \whp.

    Fix a triangle $S=(a,b,c)$ in $G$.
    Note that for every $i\in[\ell]$, we have that $\Pr{S\subseteq U_i}=p^3$.
    Since the sampled subsets $\set{U_i}_{i\in[\ell]}$ are independent, the probability that $S$ is in none of the samples is $(1-p^3)^\ell\leq\exp(-8\log n)=\frac1{n^8}$.
    Therefore, at least one sample contains a triangle with a probability of at least $1-1/n^8$, which proves the correctness of the algorithm.

    Next, we analyze the running time of the algorithm. We show that if $p=n^{-\rho/2}$, then $r=\BO{1}$. This gives us the desired running time.
    To show that, we need to show that for every $x_i\in X$, evaluating $g(x_i)$ can be done in $\BO{1}$ rounds.
    To evaluate $g(x_i)$, the algorithm checks if $G[U_i]$ contains a triangle, by computing the trace of the third power of the adjacency matrix of the subgraph $G[U_i]$.
    First, the algorithm redistributes the input, so that every vertex knows the IDs of all the vertices in $U_i$. This takes two rounds as, as follows.
    In the first round, the leader vertex informs every other vertex that it is in the set $U_i$. In the next round, each vertex that is in $U_i$ informs every other vertex that it is in the set $U_i$.
    With one additional round, the input can be redistributed in a balanced manner, as every vertex knows which entries every other vertex holds, and every vertex has to send or receive at most $\BO{n}$ entries.

    Now, we can apply \Cref{PROP:PART2} to compute the trace of the third power of the adjacency matrix of $G[U_i]$, as the input is balanced.
    We get round complexity of $\MM{\abs{U_i},\abs{U_i},\abs{U_i}}$.
    Note that for every $i\in[\ell]$, we have that $\abs{U_i}\leq 6np$ with probability $1-2^{np}$ by Chernoff's bound.
    We also get that $\MM{np,np,np}=\BO{1}$, as
    \begin{align*}
        \MM{np,np,np}\leq \BO{n^\rho\cdot p^2}=\BO{1}\;,
    \end{align*}
    where the first inequality follows from \Cref{PROP:PART2}.
    This means that the algorithm completes in $\BOs{\sqrt{\abs{X}}}$ rounds, where $\abs{X}=n^{3\rho/2}$, which gives us the desired running time.
\end{proof}

Our second algorithm which benefits from the quantum framework, is a quantum version of the simplified algorithm \algA, which gives us the following theorem.

\ThmQInCycle*

Before proving the theorem, we provide a lower bound for the probability that a uniformly random induced subgraph contains a triangle. We use this in the proof of \Cref{thm:q2}
\begin{claim}\label{claim:single t}
    Let $G$ be a graph with $n$ vertices and $t$ cycles.
    Sample a subset of vertices $U\subseteq V$, by adding each vertex from $V$ into $U$ independently with probability $p$. Let $F=G[U]$.
    If $p\geq 1/t$, then $F$ contains a triangle with probability at least $1/(3p^2)$.
\end{claim}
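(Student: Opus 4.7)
The plan is to apply the second moment method to the random variable $X$ counting the triangles of $G$ that survive in $F = G[U]$. Each of the $t$ triangles survives with probability $p^3$, so $\Exp{X} = tp^3$, and by the Paley--Zygmund inequality it suffices to upper bound $\Exp{X^2}$.

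To this end, I would decompose
\begin{align*}
\Exp{X^2} = \sum_{T_1, T_2} \Pr{V(T_1) \cup V(T_2) \subseteq U}
\end{align*}
(summing over ordered pairs of triangles of $G$) according to the number of vertices $T_1$ and $T_2$ share. Let $N_k$ denote the number of ordered pairs of distinct triangles sharing exactly $k$ vertices, for $k \in \{0, 1, 2\}$. Then
\begin{align*}
\Exp{X^2} = tp^3 + N_2 p^4 + N_1 p^5 + N_0 p^6,
\end{align*}
and the key technical step is to bound $N_1$ and $N_2$. Letting $t_e$ denote the number of triangles containing edge $e$, we have $N_2 = \sum_e t_e(t_e - 1) \leq 3t^2$, because $t_e \leq t$ and $\sum_e t_e = 3t$. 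An analogous counting through vertices bounds $N_1 \leq 3t^2$; and trivially $N_0 \leq t(t-1) \leq t^2$. Plugging in, $\Exp{X^2} \leq tp^3 + O(t^2 p^4)$.

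Applying Paley--Zygmund,
\begin{align*}
\Pr{X \geq 1} \geq \frac{(tp^3)^2}{tp^3 + O(t^2 p^4)} = \Omc[\frac{tp^3}{1 + tp}].
\end{align*}
Using the hypothesis $p \geq 1/t$, equivalently $tp \geq 1$, this simplifies to $\Omc[p^2]$, which matches the form of the claimed bound after tracking the numerical constants in the bounds on $N_0, N_1, N_2$.

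The main obstacle I expect is the constant bookkeeping: worst-case structures like a ``book'' graph (many triangles sharing a common edge) saturate the crude bounds $t_e \leq t$ and $t_v \leq t$, so one must verify that the combined loss from the Paley--Zygmund step together with the correlation bounds stays within the factor of $3$ implicit in the claim. The hypothesis $p \geq 1/t$ is essential here: it lets the $1 + tp$ in the denominator be replaced by $\Theta(tp)$, producing a lower bound that scales as $p^2$ rather than the weaker $tp^3$ that a first-moment argument alone would yield.
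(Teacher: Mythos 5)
Your proof is correct and follows essentially the same route as the paper: both apply the second moment method (Paley--Zygmund) to $X=t_F$, with $\Exp{X}=tp^3$ and an upper bound $\Exp{X^2}\le tp^3+O(t^2p^4)$, then invoke $tp\ge 1$ to collapse $\tfrac{tp^3}{1+O(tp)}$ to $\Omega(p^2)$.

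The only meaningful difference is in how the off-diagonal contribution to $\Exp{X^2}$ is bounded. You decompose by the number of shared vertices and bound $N_2,N_1$ via $\sum_e t_e(t_e-1)$ and $\sum_v t_v(t_v-1)$; this works, but it is more machinery than the paper uses and your constants come out a bit worse (with $N_2,N_1\le 3t^2$ and $N_0\le t^2$ you get $\Exp{X^2}\le tp^3+7t^2p^4$, hence a final constant closer to $1/8$ than $1/3$). The paper instead uses the single observation that any two \emph{distinct} triangles span at least $4$ vertices, so $\Exp{Z_iZ_j}\le p^4$ for all $i\ne j$; since there are fewer than $t^2$ ordered off-diagonal pairs this immediately gives $\Exp{X^2}\le tp^3+t^2p^4$ with no casework, and the constant bookkeeping you flag as the ``main obstacle'' simply disappears. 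If you want the claim's constant cleanly, drop the $N_k$ refinement and use the paper's coarser bound. Incidentally, the stated bound ``$1/(3p^2)$'' in the claim (and in the last line of the paper's own proof) is a typo --- it exceeds $1$ for $p<1$ --- and should read $p^2/3$; your $\Omega(p^2)$ is the correct form, which is what the downstream use (setting $s=4/p^2$ in the proof of the quantum theorem) actually requires.
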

\begin{proof}[Proof of \Cref{claim:single t}]
    We define $t_F$ as the random variable equal to the number of triangles in $F$. We provide a lower bound on the probability that $t_F$ is positive.
    We use the second moment method: For a non-negative, integer-valued random variable $X$, we have
    \begin{align*}
        \Pr{X>0}\geq \frac{\Exp{X}^2}{\Exp{X^2}}\;.
    \end{align*}

    We have that $\Exp{t_F}=tp^3$.
    We also have that $\Exp{t_F^2}\leq tp^3 + t^2p^4(1+p+p^2)$.
    To see this, let $(C_1,\ldots,C_t)$ denote the triangles in $G$. Let $Z_i$ denote an indicator random variable for the event that $C_i$ is in $F$.
    We have that $t_F=\sum_{i=1}^t Z_i$, and therefore
    \begin{align*}
        \Exp{t_F^2}
        =\sum_{i,j\in[t]}\Exp{Z_i\cdot Z_j}
        =\sum_{i\in[t]}\Exp{Z_i\cdot Z_i} + 2\sum_{1\leq i<j\leq t}\Exp{Z_i\cdot Z_j}
        \leq tp^3 + 2t^2p^4\;.
    \end{align*}
    The first equality follows by definition.
    In the second equality, we split the sum into two products based on whether $i=j$ or $i\neq j$.
    The last inequality follows from the fact that $\Exp{Z_i\cdot Z_i}=p^3$, and $\Exp{Z_i\cdot Z_j}\leq p^4$ for $i\neq j$, as the probability that $C_i$ and $C_j$ are in $F$ is at most $p^4$, because there are at least $4$ vertices that must be sampled into $F$ for $C_i$ and $C_j$ to be in $F$.
    We therefore get that
    \begin{align*}
        \Pr{t_F>0} \geq \frac{\Exp{t_F}^2}{\Exp{t_F^2}}
        \geq \frac{(tp^3)^2}{tp^3 + 2t^2p^4}
        = \frac{tp^3}{1 + 2tp} \;.
    \end{align*}
    For $p\geq c/t$, for $c\geq 1$ we get:
        $\Pr{t_F>0}\geq \frac{c/p^2}{1 + 2c}
        \geq \frac{c/p^2}{3c}=\frac{1}{3p^2}\;.$
\end{proof}

We are ready to prove the theorem.
\begin{proof}[Proof of \Cref{thm:q2}]
    We give a quantum version of the algorithm \algA, which has two parts.
    The first part does not use the additional power of qubits, and follows the implementation of \algA:
    It samples $s=4/p^2$ subsets of vertices $\UU=(U_1,\ldots, U_{s})$ uniformly at random, where each vertex joins each set independently of all other vertices, with probability $p$. Then, the vertices are partitioned into $s$ teams, each with $n/s$ vertices.
    The $i$-th team is responsible for checking if $G[U_i]$ contains a triangle.
    As in the implementation of \algA, after sampling $\UU$, the vertices verify that $\UU$ is a $p$-balanced set. If it is not, they discard $\UU$ and resample a new set instead. If $\UU$ is a $p$-balanced set, the input is redistributed in $\BO{\log n}$ rounds (\Cref{prop:part1}).

    In the second part, which diverges from the original implementation of \algA, each team uses Grover search to find a triangle in the subgraph it holds. We explain next how the second part of the algorithm works. For that, we focus on the $i$-th team, where every other team follows the same procedure.
    The team implements the algorithm for \Cref{thm:quantum1}, and outputs \true if $G[U_i]$ contains a triangle and \false otherwise.

    To prove the correctness of the algorithm, we show that if $p\geq 1/t$ then at least one sample $G[U_i]$ contains a triangle with probability at least $1/2$. We repeat this process $8\log n$ times to amplify the success probability.
    While the value of $p$ is not available to the algorithm, we can use the same doubling approach as we did in previous sections.

    We prove that if $p\geq 1/t$ then at least one sample $G[U_i]$ contains a triangle with probability at least $1/2$.

    By \Cref{claim:single t}, for every $i\in[s]$ the probability that $G[U_i]$ contains a triangle is at least $1/3p^2$. Since the samples are independent, the probability that at least one sample contains a triangle is at least $1-(1-1/3p^2)^s\geq 1-1/e$.
    This completes the correctness proof of the algorithm: If $p\geq 1/t$, then at least one sample contains a triangle with a probability of at least $1/2$.
    If this is the case, then the $i$-th team finds it using Grover search \whp.

    In what follows, we set the parameters for Grover search and analyze the running time of the obtained algorithm.
    we focus on the $i$-th team. Define $q=(np^2)^{-\rho/2}$.
    The team leader samples $\ell=8\log n/q^3$ subsets of vertices $(W_1,\ldots, W_\ell)$ uniformly at random, where each vertex in $U_i$ is added to each set independently of all other vertices, with probability $q$.
    We emphasize that only the leader vertex is aware of the subsets.

    The search space is $X=\set{G[W_j]}_{j\in[\ell]}$. We use $x_j$ to denote $G[W_j]$ for $j\in[\ell]$.
    The Boolean function we evaluate $g:X\to\set{0,1}$ is defined as follows:
    \begin{align*}
        \forall x_j\in X\quad g(x_j)=\begin{cases}
            1 & \text{If $G[W_j]$ contains a triangle.} \\
            0 & \text{Otherwise.}
        \end{cases}
    \end{align*}
    Define $r$ as the maximal number of rounds required to evaluate $g(x)$ for some $x\in X$.
    By \cite{le2018sublinear}, the running time of the algorithm is $\BOs{r\sqrt{\abs{X}}}$.

    As part of the proof of \Cref{thm:quantum1}, we showed that if $G[U_i]$ contains a triangle, then the $i$-th team finds it using Grover search \whp.
    Unlike the proof of \Cref{thm:quantum1}, we use $np^2$ vertices to check if a subgraph with $np$ vertices contains a triangle.
    In other words, the number of vertices in the graph exceeds the number of vertices participating in the computation.
    In what follows we analyze the running time of the algorithm -- we show that $r=\BO{1}$, and get the desired running time.

    To see that $r=\BO{1}$, note that for every $x_j\in X$, evaluating $g(x_j)$ can be done in $\MM{\abs{W_j},\abs{W_j},\abs{W_j}}$ rounds, by computing the trace of the third power of the adjacency matrix of the subgraph $G[W_j]$.
    Note that for every $j\in[\ell]$, we have that $\abs{W_j}\leq 6npq$ with probability at least $1-2^{npq}$ by Chernoff's bound.

    We show that if $p=\BO{1/t}$ and $q=(np^2)^{-\rho/2}$, then in one round every team can evaluate one input.
    In each round, every team has to compute the product of two square matrices of size $\BO{npq}$, where we have $s=4/p^2$ teams.
    By \Cref{thm2:mm runtime}, which does not use the quantum power, this step takes
    \begin{align*}
        \BO{n^{\rho-2}k^2\cdot s^{1-\rho}}
        = \BO{n^{\rho}(pq)^2\cdot (p^2)^{\rho-1}}
        = \BO{(np^2)^{\rho}\cdot q^2}
    \end{align*}
    rounds.
    We plug in $p=\BO{1/t}$, and $q^2=1/(np^2)^{\rho}$, and get that we can evaluate $s$ inputs (one from every team) in $\BO{1}$ rounds.

    This means that for every $i\in[s]$, the $i$-th team can evaluate whether $G[U_i]$ contains a triangle or not, in
    \begin{align*}
        \BOs{\sqrt{\abs{X}}}
        =\BOs{\sqrt{8\log n/q^3}}
        =\TOs{((np^2)^{\rho/2})^{3/2}}
        =\TOs{(np^2)^{3\rho/4}}\;,
    \end{align*}
    rounds.
    For $p= \BOs{1/t}$, we get an $\TOs{(n/t^2)^{3\rho/4}}$-round algorithm,
    which completes the proof.
    
\end{proof}

\subsection{Challenges for \algC and larger cycles in the \QCC model}
\label{subsec:q-challenges}
Implementing \algC in the \QCC model faces the following challenge. To implement this algorithm, we need to have a better upper bound on the round complexity of computing the product of a square matrix of size $np$ with a rectangular matrix of size $np\times npq$, for $p,q\in[0,1]$.
A simple upper bound of $\MM{npq,npq,npq;1/p^2}$ can be obtained by splitting the matrices into smaller square matrices of size $npq$. However, this boils down to implementing \algA, so we can not expect to get a better running time by using this bound.

Another obstacle is a one that lies in extending our quantum algorithms for longer cycles.
To extend \Cref{thm:quantum1} to detect $h$-cycles, assuming the graph contains a single $h$-cycle, we need to sample $\Omc[1/p^h]$ random induced subgraphs of size $np$, so that at least one of them contains an $h$-cycle with probability at least $\Omc$.
The search space is then of size $1/p^{h/2}$. Each evaluation of $g(x)$ can be implemented in time $\nr\cdot p^2$.
Overall, we get an $\BO{\nr\cdot p^{2-h/2} + 1/p^{h/2}}$-round algorithm, but for $h\geq 4$, this is never faster than the classical $O(\nr)$-round algorithm.
Trying to extend \Cref{thm:q2} to detect $h$-cycles encounters the same problem as in the \CC model: We have to sample $s$ random induced subgraphs each of size $np$, where $s\gg 1/p^2$. In other words, we need too many random subgraphs compared to their size, in order to be able to apply \Cref{thm2:mm runtime}.

\section{Missing Proofs}
\label{app:proofs}
\newcommand{\emptyind}{\mathord{\ast}}

\newcommand{\fmbdual}[2]{{#1}{#2}\emptyind}
\newcommand{\fmbsingle}[1]{{#1}\emptyind}

\newcommand{\mmeaux}{\sigma} %
\newcommand{\redc}[1]{{\color{red}{#1}}}
\newcommand{\redp}{{\color{red}{p}}}
\newcommand{\redpp}{{\color{red}{p}}}
\newcommand{\rred}[1]{{\color{red}{#1}}}

\subsection{Proofs from \Cref{sec:FMM}}
\begin{proof}[Proof of \Cref{claim:balance whp}]
    We first show that \whp $\UU$ is $p$-balanced.
    The first two conditions, $a\in[0,2]$ and $p\in [1/\sqrt{n},1]$, hold by definition.
    To prove the third condition we need to show that no vertex belongs to too many sets.
    Let $X_v$ denote the number of sets in $\UU$ to which a vertex $v$ belongs.
    Note that $X_v$ is distributed as a binomial random variable, with success probability $p$ and $p^{-a}$ trials.
    A standard Chernoff bound as in \Cref{thm:chernoff} implies that
    \begin{align*}
        \Pr{X_v\geq \max\set{1,p^{1-a}}\cdot 4\log n}\leq 1/n^4.
    \end{align*}
    By a union bound, we get that with probability at least $1-1/n^3$ no vertex is in too many sets.

    To prove the fourth condition we need to show that no set is too large.
    Define $Y_{i,v}$ as the indicator random variable for the event that the vertex $v$ joined the set $U_i$.
    Define $Y_i\triangleq \sum_{v\in V}Y_{i,v}$.
    We have $\Exp{Y_i}=np\geq \sqrt{n}$, so we can again use a Chernoff bound to obtain
    \begin{align*}
        \Pr{Y_i\geq (1+3)np}\leq \exp\brak{-9np/5}\leq \exp\brak{-\sqrt{n}}\;.
    \end{align*}
    So, by a union bound, the probability that all sets are of size at most $4np$ is at least
    \begin{align*}
        1-p^{-a}\exp\brak{-\sqrt{n}}\leq 1-n\cdot\exp\brak{-\sqrt{n}}\;.
    \end{align*}
    Since conditions three and four are each satisfied \whp, their intersection is also satisfied \whp
    To summarize, the probability that $\UU$ is $p$-balanced is at least $1-\frac{1}{n^3} - \frac{n}{\exp(\sqrt{n})}\geq 1-\frac{2}{n^3}$.
\end{proof}

\subsection{Proof of \Cref{PROP:PART2}}
\label{appendix:FMM}

We provide here a proof for \Cref{PROP:PART2}, which gives our bound for FMM in the \clique model when the matrices are of dimension that is smaller or larger than $n$.
Recall that the formal definition of the task is as follows.
\DefProd*

The statement we prove is:
\SingleFMM*

The proof is a direct adaptation of the proof of the $\BO{n^{\rho}}$-round algorithm for FMM from \cite[Theorem 1]{Censor-HillelKK19}.

The high-level idea of the proof for the above is as follows, using the notation of \cite{Censor-HillelKK19}.
Take any bilinear algorithm for computing the product $P = ST$ where $S,T$ are $d \times d$
matrices using $m < d^3$ scalar multiplications.
Such an algorithm computes the following linear combinations of entries of both matrices for each $w \in [m]$:
\begin{equation*}\label{eq:bilin-1 k}
    \hat{S}^{(w)} = \sum_{(i,j) \in [d]^2}\alpha_{ijw} S_{ij}
    \hspace{10mm}\text{and}\hspace{10mm}
    \hat{T}^{(w)} = \sum_{(i,j) \in [d]^2}\beta_{ijw} T_{ij}
\end{equation*}
It then computes
$\hat{P}^{(w)} = \hat{S}^{(w)} \hat{T}^{(w)}$ for every $w \in [m]$,
and obtains $P_{ij} = \sum_{w \in [m]}\lambda_{ijw} \hat{P}^{(w)}$ for every $(i,j) \in [d]^2$,
for some given constants $\alpha_{ijw}$, $\beta_{ijw}$ and $\lambda_{ijw}$.
We show how to directly adapt this approach to matrices of dimension smaller or larger than $n$.

We take the seven-step algorithm of \cite{Censor-HillelKK19}  and generalize steps two to six, for square matrices of dimension $R\times R$ where $R$ is not necessarily equal to $n$.

For simplicity, we use the text of \cite{Censor-HillelKK19} verbatim and only insert the minor modifications that are needed. We view this result as a direct implication of  \cite{Censor-HillelKK19} rather than a new contribution.

Is in the proof of \cite{Censor-HillelKK19}, Let us fix a bilinear algorithm that computes the product of $d \times d$ matrices using $m(d)=O\left(d^\sigma\right)$ scalar multiplications for any $d$, where $2 \leq \sigma \leq 3$.
To multiply two $R \times R$ matrices on a congested clique of $n'$ nodes, fix $d$ so that $m(d)=n'$, assuming for convenience that $n'$ is such that this is possible. Note that we have $d=O\left(n'^{1 / \sigma}\right)$.

\begin{proof}[{Proof} of \Cref{PROP:PART2}]

    We assume each element in the matrices can be represented using $\BO{F}$ bits.
    \hypertarget{proof:app FMM single}{}
    The algorithm computes the matrix product $P = ST$ as follows.
    \begin{description}
        \item[Step 1: Linear combination of entries.] Each node $v$ with label $\ell'(v) = xy$ computes for $w \in V$ the linear combinations
              \begin{align*}
                  \hat{S}^{(w)}[\fmbsingle{x},\fmbsingle{y}] = & \sum_{(i,j) \in [d]^2} \alpha_{ijw} S[\fmbdual{i}{x},\fmbdual{j}{y}]\,,\hspace{10mm}\text{and} \\
                  \hat{T}^{(w)}[\fmbsingle{x},\fmbsingle{y}] = & \sum_{(i,j) \in [d]^2} \beta_{ijw} T[\fmbdual{i}{x},\fmbdual{j}{y}]\,.
              \end{align*}
              The computation is performed entirely locally.
        \item[Step 2: Distributing the linear combinations.] Each node $v$ with label $\ell'(v) = xy$ sends, for $w \in V$, the submatrices $\hat{S}^{(w)}[\fmbsingle{x},\fmbsingle{y}]$ and $\hat{T}^{(w)}[\fmbsingle{x},\fmbsingle{y}]$ to node $w$.
              Each submatrix has
              $(\redc{\frac{R}{d}\cdot\frac{1}{\sqrt{n'}}})^2 = \redc{\frac{R^2}{d^2\cdot n'}}$
              entries and there are $\redc{n'}$ recipients each receiving two submatrices, for a total of $O((\redc{R/d})^2)$ messages per node.

              Dually, each node $w \in V$ receives the submatrices $\hat{S}^{(w)}[\fmbsingle{x},\fmbsingle{y}]$ and $\hat{T}^{(w)}[\fmbsingle{x},\fmbsingle{y}]$ from node $v \in V$ with label $\ell'(v) = xy$. Node $u$ now has the matrices $\hat{S}^{(w)}$ and $\hat{T}^{(w)}$. The total number of received messages is $O((\redc{R/d})^2)$ per node.
        \item[Step 3: Multiplication.] Node $w \in V$ computes the product $\hat{P}^{(w)} = \hat{S}^{(w)}\hat{T}^{(w)}$. The computation is performed entirely locally.
        \item[Step 4: Distributing the products.] Each node $w$ sends, for $x, y \in [(\redc{n'})^{1/2}]$, the submatrix $\hat{P}^{(w)}[\fmbsingle{x},\fmbsingle{y}]$ to node $v \in V$ with label $xy$.
              Each submatrix has $((\redc{R/d})^2\cdot 1/\redc{n'})$ entries and there are $\redc{n'}$ recipients, for a total of $O((\redc{R/d})^2)$ messages sent by each node.

              Dually, each node $v \in V$ with label $\ell'(v) = xy$ receives the submatrix $\hat{P}^{(w)}[\fmbsingle{x},\fmbsingle{y}]$ from each node $w \in V$.
              The total number of received messages is $O((\redc{R/d})^2)$ per node.
        \item[Step 5: Linear combination of products.] Each node $v \in V$ with label $\ell'(v) = xy$ computes for $i,j \in [d]$ the linear combination
              \[ P[\fmbdual{i}{x},\fmbdual{j}{y}] = \sum_{w \in V} \lambda_{ijw} \hat{P}^{(w)}[\fmbsingle{x},\fmbsingle{y}]\,.\]
              Node $v \in V$ now has the submatrix $P[\fmbdual{\emptyind}{x},\fmbdual{\emptyind}{y}]$. The computation is performed entirely locally.
    \end{description}
    
    ~\\\textbf{Analysis.}
    The maximal number of entries sent or received by a node in the above steps is $O((\redc{R/d})^2)$, where each entry can be represented using $\BO{F}$ bits.  First, let's assume each vertex can send in each round, one message of $B$ bits to every other vertex in $V$.
    We get that sending $O((\redc{R/d})^2)$ messages takes
    \begin{align*}
        (R/d)^2 / n' \cdot \frac{F}{B}
        = R^2/(n')^{1+2/\mmeaux} \cdot \frac{F}{B}
        = (R/n')^2 \cdot (n')^{1-2/\mmeaux} \cdot \frac{F}{B}
        =\BO{(n')^{\rho}\cdot (R/n')^2 \cdot \frac{F}{B}}
    \end{align*}
    rounds.
    We can further see, that the communication pattern is fixed. In other words, the content of the messages does not effect the communication pattern. Therefore, increasing the bandwidth by a factor of $C$ will decrease the number of rounds the algorithm takes by a factor of $C$.
    We therefore get the desired running time. This completes the proof of \Cref{PROP:PART2}.
\end{proof}

\subsection{Proof of \Cref{CLAIM:NUMERIC RHO}}
\label{appendix:convexity}
This section is devoted to proving some analytic bound on $\rho(z)$.
Specifically, we prove \Cref{CLAIM:NUMERIC RHO}, building upon
\cite{le2016further,gall2018improved,alman2021refined}.

\ClaimNumericRho*

Current bounds on $\rho(z)$ are of the following.
\begin{theorem}[{\cite[Proposition 3]{le2016further}}]
    For $z\in[1/2,1]$
    \begin{align*}
        \RM{n^z}\eqdef\BO{n^{\rho(z)}}\leq \BO{n^{1-2/\omega(\gamma)}}
    \end{align*}
    where $\gamma$ is the solution for the following equation
    \begin{align}
        \gamma=1-(z-1)\omega(\gamma)\;.\label{eq:gy}
    \end{align}
\end{theorem}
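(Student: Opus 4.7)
The plan is to start from the Le Gall--Urrutia formula recalled at the end of the appendix, $\rho(z)\leq 1-2/\omega(\gamma(z))$ where $\gamma(z)$ is the solution of $\gamma=1-(z-1)\omega(\gamma)$. Combined with any sharp numerical upper bounds on the rectangular matrix multiplication exponent $\omega(\cdot)$ from the literature (for instance via the tables/code of \cite{le2016further,Complexity} and the refinements in \cite{williams2023new,alman2021refined}), this yields an explicit numerical upper bound on $\rho(z)$ at any chosen sample point $z\in[\bz,1]$.

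To lift these pointwise bounds into a global bound on $[\bz,1]$, I would discretize the interval into a fine grid $\bz=z_0<z_1<\cdots<z_N=1$. Since $\rho$ is nondecreasing in $z$ (multiplying by a larger rectangular factor is at least as expensive), the piecewise-constant function $\rho^\star(z):=\rho(z_j)$ for $z\in[z_{j-1},z_j]$ satisfies $\rho^\star\geq\rho$ throughout $[\bz,1]$, and is precisely the step function referenced in the proof sketch. It then remains to verify that the affine map $L(z):=\Bs-\As(1-z)$ dominates $\rho^\star$; since $L$ is nondecreasing, on each subinterval $[z_{j-1},z_j]$ this reduces to the single check $L(z_{j-1})\geq \rho(z_j)$, turning the global inequality into a finite list of numerical checks.

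The specific values $\As=\aConst$ and $\Bs=\bConst$ are chosen so that $L$ essentially interpolates the two natural anchor points: $L(\bz)=0$ by the definition of $\bz$, and $L(1)=\bConst\geq \rhoval\geq \rho(1)$. The main obstacle is computational rather than conceptual: the grid must be refined enough, and the bounds on $\omega(\gamma)$ sharp enough, that on every subinterval the slack of $L$ over $\rho$ dominates the worst-case jump $\rho(z_j)-\rho(z_{j-1})$. This amounts to executing the adaptation of the code of \cite{Complexity} already referenced in the paper, and no new conceptual idea is needed.
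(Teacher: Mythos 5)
There is a genuine gap: your proposal does not prove the statement at hand, it assumes it. The statement to be established is precisely the Le Gall bound $\RM{n^z}\leq \BO{n^{1-2/\omega(\gamma)}}$ with $\gamma$ solving $\gamma=1-(z-1)\omega(\gamma)$, i.e.\ an algorithmic claim about the round complexity of rectangular matrix multiplication in the \clique model (the paper itself offers no proof; it quotes \cite[Proposition 3]{le2016further}). Your plan's very first step is to ``start from the Le Gall--Urrutia formula $\rho(z)\leq 1-2/\omega(\gamma(z))$,'' which is exactly the inequality to be shown, so the argument is circular with respect to this statement. Everything that follows---the grid $\bz=z_0<\cdots<z_N=1$, the monotone step function dominating $\rho$, and the verification that the affine function $\Bs-\As(1-z)$ with $\As=\aConst$, $\Bs=\bConst$ lies above it---is instead a proof sketch for \Cref{CLAIM:NUMERIC RHO}, a different (downstream) claim which the paper already proves numerically in the same appendix.

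A genuine proof of the quoted theorem has to construct and analyze a distributed protocol, in the spirit of the proof of \Cref{PROP:PART2}: fix a bilinear algorithm for the rectangular shape governed by $\gamma$ using $m(d)=\BO{d^{\omega(\gamma)}}$ scalar multiplications, set $m(d)=n$ so that each node is responsible for one scalar product, and route the linear combinations of blocks of $S$ and $T$ so that the per-node communication volume is balanced; the fixed-point equation $\gamma=1-(z-1)\omega(\gamma)$ is exactly the condition that balances the block dimensions of the $n\times n^z$ and $n^z\times n$ factors against the number of multiplications, yielding a load of $\BO{n^{1-2/\omega(\gamma)}}$ words per node and hence that many rounds via Lenzen's routing. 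None of this appears in your write-up, so as a proof of the stated theorem it is missing the entire algorithmic content; as a proof of \Cref{CLAIM:NUMERIC RHO} it is essentially the paper's approach, but that is not the statement you were asked to prove.
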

We do not know how to solve or even give an approximation for $\gamma$ that solves \Cref{eq:gy}.
If $\rho(z)$ was convex, we could get the following bound:
\begin{align*}
    \BO{n^{\rho(z)}}\leq \BO{n^{\frac{\rho(1)\cdot (z-\bz)}{1-\bz}}}
\end{align*}
where $\bz$ was defined in \Cref{def:rect}.
One could hope to generalize the proof of \cite[Lemma 3.6]{jiang2020faster} that $\omega(z)$ is convex to show that $\rho(z)$ is also convex.
This requires a deeper study of multiple multiplications of rectangular matrices in the \clique model. Instead, we bypass this by solving our problem numerically, as follows.

\newcommand{\wg}{\omega(\gamma)}
\newcommand{\Lo}[1]{L_{\omega}(#1)}
\newcommand{\gi}[1][i]{\gamma_{#1}}
\newcommand{\RR}[1]{\mathsf{S}\brak{#1}}

We take a set of values of $(\gamma,\wg)$ given in \cite{gall2018improved,alman2021refined},
and create a piece-wise linear function of $\omega(z)$, as suggested by \cite{Complexity}.
We denote this linear approximation by $\Lo{z}$.

Using $\Lo{z}$, we numerically find a set of points $\set{\gi}_{i\in[r+1]}$, and set of values $\set{y_i}_{i\in[r+1]}$
where $y_i$ is an upper bound on $\omega(\gi)$.
We use those points and values to construct a step function  $\RR{x}$ which will satisfy
\begin{align*}
    \forall i\in[r]\quad \RR{\gi[i+1]}\leq \r{\gi}\;.
\end{align*}
Using the fact that $\rho$ is monotone non-decreasing, we get that
for any $x$,
\begin{align*}
    \rho(z)\leq \RR{z}\leq \r{z}
\end{align*}
which proves \Cref{CLAIM:NUMERIC RHO}.

The set of points $\gi$ and the matching values are reported in \Cref{table:values1,table:values2}, and in \Cref{fig:convex}.

\begin{table}
    \begin{center}
        \begin{tabular}{lll|ll|l}
            $i$    & $\gamma$     & $\omega(\gamma)$ & $y\triangleq 1-\frac{1-\gamma}{\omega(\gamma)}$ & $\RR{y}\leq 1-\frac{2}{\omega(\gamma)}$ & $\r{y}$      \\
            \toprule
            $0   $ & $0.32132067$ & $2.00000000$     & $0.66066033$                                    & $0.00000000$                            & $0.00000000$ \\
            $1   $ & $0.32132087$ & $2.00000000$     & $0.66066043$                                    & $0.00000000$                            & $0.00000005$ \\
            $2   $ & $0.32133367$ & $2.00000000$     & $0.66066683$                                    & $0.00000000$                            & $0.00000300$ \\
            $3   $ & $0.32174167$ & $2.00000470$     & $0.66087163$                                    & $0.00000235$                            & $0.00009756$ \\
            $4   $ & $0.32826999$ & $2.00008004$     & $0.66414843$                                    & $0.00004002$                            & $0.00161048$ \\
            $5   $ & $0.35386722$ & $2.00199552$     & $0.67725563$                                    & $0.00099677$                            & $0.00766215$ \\
            $6   $ & $0.40379059$ & $2.01062109$     & $0.70347003$                                    & $0.00528249$                            & $0.01976549$ \\
            $7   $ & $0.45266218$ & $2.02481060$     & $0.72968443$                                    & $0.01225329$                            & $0.03186884$ \\
            $8   $ & $0.50117496$ & $2.04351766$     & $0.75589883$                                    & $0.02129546$                            & $0.04397218$ \\
            $9   $ & $0.54983423$ & $2.06605377$     & $0.78211323$                                    & $0.03197098$                            & $0.05607553$ \\
            $10  $ & $0.59900185$ & $2.09210204$     & $0.80832763$                                    & $0.04402368$                            & $0.06817887$ \\
            $11  $ & $0.64903513$ & $2.12117239$     & $0.83454203$                                    & $0.05712520$                            & $0.08028221$ \\
            $12  $ & $0.70018485$ & $2.15317060$     & $0.86075643$                                    & $0.07113723$                            & $0.09238556$ \\
            $13  $ & $0.75268386$ & $2.18807362$     & $0.88697083$                                    & $0.08595397$                            & $0.10448890$ \\
            $14  $ & $0.77950259$ & $2.20669601$     & $0.90007803$                                    & $0.09366764$                            & $0.11054058$ \\
            $15  $ & $0.80676694$ & $2.22580864$     & $0.91318523$                                    & $0.10145016$                            & $0.11659225$ \\
            $16  $ & $0.83446870$ & $2.24578438$     & $0.92629243$                                    & $0.10944255$                            & $0.12264392$ \\
            $17  $ & $0.86265499$ & $2.26640564$     & $0.93939963$                                    & $0.11754544$                            & $0.12869559$ \\
            $18  $ & $0.89134672$ & $2.28776663$     & $0.95250683$                                    & $0.12578496$                            & $0.13474726$ \\
            $19  $ & $0.90589165$ & $2.29871387$     & $0.95906043$                                    & $0.12994826$                            & $0.13777310$ \\
            $20  $ & $0.92057056$ & $2.30993766$     & $0.96561403$                                    & $0.13417577$                            & $0.14079894$ \\
            $21  $ & $0.93539352$ & $2.32127159$     & $0.97216763$                                    & $0.13840328$                            & $0.14382477$ \\
            $22  $ & $0.94285968$ & $2.32698037$     & $0.97544443$                                    & $0.14051703$                            & $0.14533769$ \\
            $23  $ & $0.95036252$ & $2.33272358$     & $0.97872123$                                    & $0.14263309$                            & $0.14685061$ \\
            $24  $ & $0.95790024$ & $2.33861989$     & $0.98199803$                                    & $0.14479475$                            & $0.14836353$ \\
            $25  $ & $0.96547617$ & $2.34454608$     & $0.98527483$                                    & $0.14695641$                            & $0.14987644$ \\
            $26  $ & $0.97309059$ & $2.35050238$     & $0.98855163$                                    & $0.14911807$                            & $0.15138936$ \\
            $27  $ & $0.97691232$ & $2.35349190$     & $0.99019003$                                    & $0.15019890$                            & $0.15214582$ \\
            $28  $ & $0.98074379$ & $2.35648902$     & $0.99182843$                                    & $0.15127973$                            & $0.15290228$ \\
            $29  $ & $0.98458503$ & $2.35949380$     & $0.99346683$                                    & $0.15236056$                            & $0.15365874$ \\
            $30  $ & $0.98650933$ & $2.36099906$     & $0.99428603$                                    & $0.15290098$                            & $0.15403697$ \\
            $31  $ & $0.98843608$ & $2.36250624$     & $0.99510523$                                    & $0.15344139$                            & $0.15441520$ \\
            $32  $ & $0.99036530$ & $2.36401535$     & $0.99592443$                                    & $0.15398180$                            & $0.15479343$ \\
            $33  $ & $0.99229698$ & $2.36552639$     & $0.99674363$                                    & $0.15452222$                            & $0.15517166$ \\
            $34  $ & $0.99326375$ & $2.36628263$     & $0.99715323$                                    & $0.15479243$                            & $0.15536077$ \\
            $35  $ & $0.99423113$ & $2.36703936$     & $0.99756283$                                    & $0.15506263$                            & $0.15554989$ \\
            $36  $ & $0.99519913$ & $2.36779657$     & $0.99797243$                                    & $0.15533284$                            & $0.15573900$ \\
            $37  $ & $0.99616776$ & $2.36855427$     & $0.99838203$                                    & $0.15560305$                            & $0.15592812$ \\
            $38  $ & $0.99665230$ & $2.36893330$     & $0.99858683$                                    & $0.15573815$                            & $0.15602267$ \\
            $39  $ & $0.99713700$ & $2.36931245$     & $0.99879163$                                    & $0.15587326$                            & $0.15611723$ \\
            $40  $ & $0.99762186$ & $2.36969172$     & $0.99899643$                                    & $0.15600836$                            & $0.15621179$ \\
            $41  $ & $0.99810687$ & $2.37007112$     & $0.99920123$                                    & $0.15614346$                            & $0.15630635$ \\
            $42  $ & $0.99834943$ & $2.37026086$     & $0.99930363$                                    & $0.15621102$                            & $0.15635362$ \\
            $43  $ & $0.99859203$ & $2.37045063$     & $0.99940603$                                    & $0.15627857$                            & $0.15640090$ \\
            $44  $ & $0.99883467$ & $2.37064043$     & $0.99950843$                                    & $0.15634612$                            & $0.15644818$ \\
            $45  $ & $0.99907735$ & $2.37083027$     & $0.99961083$                                    & $0.15641367$                            & $0.15649546$ \\
            \bottomrule
        \end{tabular}
        \caption{Values of $\omega,\; \gamma$ and more.
        }
        \label{table:values1}
    \end{center}
\end{table}
\begin{table}
    \begin{center}
        \begin{tabular}{lll|ll|l}
            $i$    & $\gamma$     & $\omega(\gamma)$ & $y\triangleq 1-\frac{1-\gamma}{\omega(\gamma)}$ & $\RR{y}\leq 1-\frac{2}{\omega(\gamma)}$ & $\r{y}$      \\
            \toprule
            $46  $ & $0.99919871$ & $2.37092520$     & $0.99966203$                                    & $0.15644745$                            & $0.15651910$ \\
            $47  $ & $0.99932007$ & $2.37102013$     & $0.99971323$                                    & $0.15648122$                            & $0.15654274$ \\
            $48  $ & $0.99944144$ & $2.37111508$     & $0.99976443$                                    & $0.15651500$                            & $0.15656638$ \\
            $49  $ & $0.99956283$ & $2.37121003$     & $0.99981563$                                    & $0.15654878$                            & $0.15659002$ \\
            $50  $ & $0.99962352$ & $2.37125750$     & $0.99984123$                                    & $0.15656566$                            & $0.15660184$ \\
            $51  $ & $0.99968422$ & $2.37130499$     & $0.99986683$                                    & $0.15658255$                            & $0.15661366$ \\
            $52  $ & $0.99974492$ & $2.37135247$     & $0.99989243$                                    & $0.15659944$                            & $0.15662548$ \\
            $53  $ & $0.99980562$ & $2.37139995$     & $0.99991803$                                    & $0.15661633$                            & $0.15663730$ \\
            $54  $ & $0.99983598$ & $2.37142369$     & $0.99993083$                                    & $0.15662477$                            & $0.15664321$ \\
            $55  $ & $0.99986633$ & $2.37144744$     & $0.99994363$                                    & $0.15663322$                            & $0.15664912$ \\
            $56  $ & $0.99989668$ & $2.37147118$     & $0.99995643$                                    & $0.15664166$                            & $0.15665503$ \\
            $57  $ & $0.99992704$ & $2.37149493$     & $0.99996923$                                    & $0.15665010$                            & $0.15666094$ \\
            $58  $ & $0.99994221$ & $2.37150680$     & $0.99997563$                                    & $0.15665433$                            & $0.15666389$ \\
            $59  $ & $0.99995739$ & $2.37151867$     & $0.99998203$                                    & $0.15665855$                            & $0.15666685$ \\
            $60  $ & $0.99997257$ & $2.37153054$     & $0.99998843$                                    & $0.15666277$                            & $0.15666980$ \\
            $61  $ & $0.99998775$ & $2.37154242$     & $0.99999483$                                    & $0.15666699$                            & $0.15667276$ \\
            $62  $ & $1.00000000$ & $2.37155200$     & $1.00000000$                                    & $0.15667040$                            & $0.15667514$ \\
            \bottomrule
        \end{tabular}
        \caption{Values of $\omega,\; \gamma$ and more.
        }
        \label{table:values2}
    \end{center}
\end{table}
\begin{figure*}[t]
    \centering
    \includegraphics[width=\textwidth]{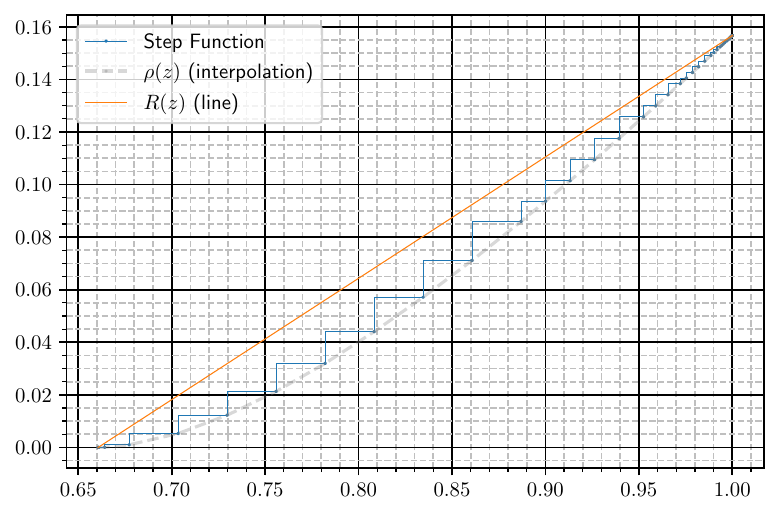}
    \caption{An illustration of the line $R(z)$ that bounds the step function we use in order to bound $\rho(z)$, and an image of an interpolation of what could be $\rho(z)$.}\label{fig:convex}
\end{figure*}
\begin{figure*}[t]
    \centering
    \includegraphics[width=\textwidth]{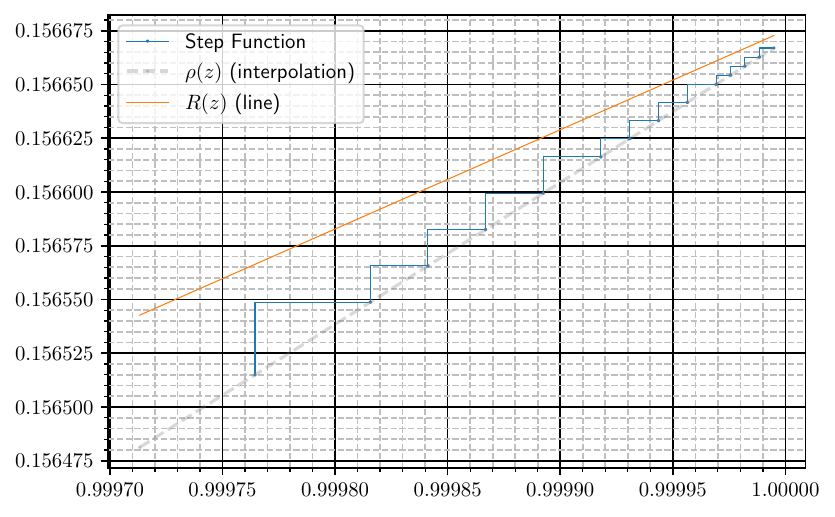}
    \caption{Enlarged view of the plot around $z=1$ from the previous figure.}\label{fig:convex-zoom}
\end{figure*}
\clearpage

\bibliographystyle{alpha}
\addcontentsline{toc}{section}{Bibliography}
\bibliography{refs.bib}

\end{document}